\PassOptionsToPackage{table}{xcolor}
\documentclass[12pt]{article}
\usepackage{color}

\pdfoutput=1
\usepackage{hyperref}
\hypersetup{
    colorlinks=true,
    linkcolor=black,
    citecolor=black,
    filecolor=black,
    urlcolor=black,
}

\usepackage{amssymb, amsmath}
\usepackage{tikz-cd}
\usepackage{pgfplots}
\usepackage{calc}

\RequirePackage{amsthm}

\theoremstyle{definition}

\newtheorem{proposition}{Proposition}
\newtheorem{lemma}{Lemma}

\newtheorem{definition}{Definition}
\newtheorem{theorem}{Theorem}
\newtheorem{corollary}{Corollary}

\newcounter{homogSection}
\newcommand{\aAssump}{A\arabic{homogSection}}
\newcounter{het1section}
\newcommand{\bAssump}{\arabic{het1section}}
\newcounter{het2section}
\newcommand{\cAssump}{C\arabic{het2section}}

\newtheoremstyle{theoremSuppressedNumber}{}{}{}{}{\bfseries}{.}{ }{\thmname{#1}\thmnote{ (\mdseries #3)}}
\theoremstyle{theoremSuppressedNumber}

\newtheorem{het1Assump}{Assumption \bAssump \addtocounter{het1section}{1}}

\usepackage{mathtools}
\usepackage{setspace}
\usepackage[longnamesfirst]{natbib}
\usepackage{ulem}
\usepackage{subfigure}
\usepackage{wrapfig}
\usepackage{float}
\usepackage{afterpage}
\newcommand{\E}{\mathbb{E}}
\renewcommand{\P}{\mathbb{P}}
\newcommand{\Prob}{\ensuremath{\mathbb{P}}}
\newcommand{\R}{\ensuremath{\mathbb{R}}}
\usepackage{bbm}
\newcommand{\indicator}{\ensuremath{\mathbbm{1}}}
\makeatletter
\newcommand*{\indep}{
  \mathbin{
    \mathpalette{\@indep}{}
  }
}
\newcommand*{\@indep}[2]{
  \sbox0{$#1\perp\m@th$}
  \sbox2{$#1=$}
  \sbox4{$#1\vcenter{}$}
  \rlap{\copy0}
  \dimen@=\dimexpr\ht2-\ht4-.2pt\relax
  \kern\dimen@
  {#2}
  \kern\dimen@
  \copy0                       
} 
\DeclareMathOperator*{\supp}{supp}
\newcommand{\Exp}{\ensuremath{\mathbb{E}}}
\DeclareMathOperator*{\minimize}{minimize}

\usepackage{epstopdf}
\usepackage{chngpage}
\usepackage{multirow}
\usepackage{tabulary}
\usepackage{tabularx}
\usepackage{booktabs}
\usepackage{textcomp}

\title{\textbf{Assessing Sensitivity to IV Exclusion and Exogeneity without First Stage Monotonicity}\footnote{This paper supersedes Section 5 of the now inactive working paper \cite{MastenPoirier2020}. We thank audiences at the 2024 Southern Economic Association conference, 2025 Winter Meeting of the Econometric Society, and 2025 Greater NY Econometrics Colloquium for helpful conversations and comments. Masten thanks the National Science Foundation for research support under Grant 1943138.}}

\author{
Paul Diegert\thanks{Department of Economics, Toulouse School of Economics, \texttt{paul.diegert@tse-fr.eu}}
\and
Matthew A. Masten\thanks{Department of Economics, Duke University, \texttt{matt.masten@duke.edu}}
\and
Alexandre Poirier\thanks{Department of Economics, Georgetown University, \texttt{alexandre.poirier@georgetown.edu}}
}

\usepackage{titlesec}

\usepackage{titling}
\definecolor{lightgray}{gray}{0.95}
\newcolumntype{a}{>{\columncolor{lightgray}}c}

\begin{document}
\maketitle

\vspace{-2em}

\begin{abstract}
Exclusion and exogeneity are core assumptions in instrumental variable (IV) analyses, but their empirical validity is often debated. This paper develops new sensitivity analyses for these assumptions. Our results accommodate arbitrary heterogeneity in treatment effects and do not impose any monotonicity requirements on the first stage. Specifically, we derive identified sets for the marginal distributions of potential outcomes and their functionals, like average treatment effects, under a broad class of nonparametric relaxations of the exclusion and exogeneity assumptions. These identified sets are characterized as solutions to linear programs and have desirable theoretical properties. We explain how to estimate these solutions using computationally tractable methods even when the linear program is infinite-dimensional. We illustrate these methods with an empirical application to peer effects in movie viewership, using weather as a potentially imperfect instrument.
\end{abstract}

\bigskip
\small
\noindent \textbf{JEL classification:}
C14, C18, C21, C26, C51

\bigskip
\noindent \textbf{Keywords:}
Instrumental Variables, Sensitivity Analysis, Nonparametric Identification, Partial Identification

\allowdisplaybreaks

\onehalfspacing

\newpage

\section{Introduction}\label{sec:intro}

Instrumental variable (IV) analyses typically rely on two core assumptions: Instrument \textit{exclusion} and instrument \textit{exogeneity}. Exclusion holds when the instrument has no direct effect on the outcome, while exogeneity holds when the instrument is randomly assigned. Since the work of \cite{ImbensAngrist1994}, a third assumption is also often imposed: First stage \emph{monotonicity}. In the simplest setting where the treatment and instrument are binary, monotonicity holds if the instrument's effect on treatment is always of the same sign.

All three assumptions can be hard to justify in some empirical settings. Instruments may have direct effects on outcomes or may not be randomly assigned. Monotonicity can also fail. This occurs, for example, in leniency designs (also called `judge IV' designs) where monotonicity implies that a judge is stricter or more lenient in the face of any possible case; see \cite{FrandsenLefgrenLeslie2023Judging} for details. In designs with many treatment and instrument values, there is no single monotonicity assumption to choose from, and it may be difficult to find one suitable for one's empirical setting.

To address these concerns, we study identification of treatment effects in a setting where no monotonicity conditions are imposed whatsoever, but where exogeneity and exclusion are assumed to at least partially hold in some sense. Specifically, we introduce a unifying class of continuous relaxations of instrument exclusion and exogeneity that nests several prominent approaches in the literature. In particular, it includes as special cases the marginal sensitivity model (MSM) of \cite{Tan2006}, $c$-dependence by \cite{MastenPoirier2018}, and supremum distance approaches, as in \cite{Manski1983} and \cite{KlineSantos2013}. All these approaches were developed as sensitivity models for unconfoundedness (or selection on observables), but we develop modified versions suitable for IV sensitivity analysis. In each of those cases, the sensitivity model is indexed by a scalar, unit-free sensitivity parameter that is easy to interpret.

When the outcome variable is discrete, we show that the identified sets for the conditional probabilities of the potential outcomes given the instruments can be characterized as the intersection of two convex sets, parameterized by the relaxation of instrument exclusion and exogeneity.   Using this result, we then show that the identified set for a class of linear functionals of the densities of potential outcomes is the solution to a linear program that can be computed efficiently. This class of functionals includes the standard treatment parameters such as the ATE, the average effect of treatment on the treated (ATT), and quantile treatment effects (QTE). 

We show that these identified sets exhibit many desirable properties, including continuity and monotonicity with respect to the sensitivity parameter. As is well known (e.g., \citealt{BalkePearl1997}), IV models have testable implications, which can fail in practice. In this case, we characterize the smallest deviations from the baseline model that prevent the model from being refuted.

We then extend our results to the case where outcomes are continuous. This case is more delicate, as the distribution of outcomes is now characterized by a density function, an infinite-dimensional object. We show that the identified set for densities and its functionals (like ATE, ATT, or QTE) can also be characterized via a linear program, albeit an infinite-dimensional one. As in the discrete case, we show that identified sets derived from this linear program have desirable properties by analyzing them as infinite-dimensional-valued correspondences, since each sensitivity parameter is now associated with a set of infinite-dimensional-valued density function. As these linear programs cannot be solved in practice, we propose a tractable approach for approximating the problem with a finite-dimensional one.

Using these computational results, we show how applied researchers can produce sensitivity plots that show the sensitivity (or robustness) of their parameter of interest to exclusion and exogeneity violations. These plots can be used, for example, to determine how strong exclusion or exogeneity violations can be before the data is consistent with a zero treatment effect.

To illustrate our approach, we revisit Gilchrist and Sands' \citeyearpar{GilchristSands2016} study of peer effects in movie viewership, using weather as an instrument for opening-weekend viewership. While extremely popular in empirical practice, weather instruments have come under increasing scrutiny in recent years (e.g., \citealt{Sarsons2015}, \citealt{GallenRaymond2023}, \citealt{Mellon2025}). In this application, social learning and dynamic behavior could lead to violations of instrument exclusion, and use our results to assess the robustness of conclusions to relaxations of this assumption. Using both discretized and continuous outcomes, we confirm that under the baseline of instrument exclusion, there is a positive peer effect on viewership, but show that this conclusion is sensitive to relatively small relaxations of the exogeneity assumption.

The rest of the paper is organized as follows. We first provide an overview of the related literature. Section \ref{sec:binary_outcome} then develops the framework for binary outcomes, introduces the relaxation class, and derives sharp identified sets, falsification frontiers, and falsification adaptive sets in the discrete setting. Section \ref{sec:cont_Y} extends the analysis to continuous outcomes, establishes the corresponding identification and continuity results, and details a sieve-based computational strategy. Section \ref{subsec:empirical} presents our empirical application.

\subsubsection*{Related Literature}\label{subsec:litreview}

Research on the sensitivity of IV results to violations of exclusion and exogeneity go back to at least \cite{Fisher1961}. More recent developments were proposed by \cite{BoundJaegerBaker1995}, \cite{Small2007} and  \cite{ConleyHansenRossi2012}. All of these methods assume a linear outcome equation, motivated by treatment effect homogeneity, which we do not assume. These papers bound the direct effect of the instrument on the outcome, which can be done by bounding the coefficient on the instrument under the assumption that the potential outcomes depend linearly on it. Various approaches have been proposed to bound this direct effect, including \cite{NunnWantchekon2011}, \cite{ConleyHansenRossi2012}, \cite{Kraay2012},  \cite{KippersluisRietveld2017}, \cite{KippersluisRietveld2018}, and \cite{MastenPoirier2021}.  Also see \cite{AltonjiElderTaber2005a},  \cite{Ashley2009}, and \cite{AshleyParmeter2015} for alternative approaches.

Our paper contributes to the literature on sensitivity analysis in instrumental variable models with heterogeneous treatment effects, which is much sparser than that for homogeneous treatment effects. Specifically, few papers consider continuous relaxations of the baseline instrumental variable assumptions while still allowing for heterogeneous treatment effects.

Early work by \citet{Manski1990} characterizes sharp bounds on average treatment effects under two sets of assumptions: (i) instrument exclusion and exogeneity hold (formulated as mean independence in his general analysis) or (ii) instrument exclusion and exogeneity fail arbitrarily. Our continuously parameterized sensitivity model spans these two sets of assumptions, allowing users to calibrate the degree of exclusion and exogeneity violations from ``no-violations'' (i.e., full exclusion and exogeneity) to ``no assumptions''.

\cite{HotzMullinSanders1997} used a mixture model to allow for relaxations of the baseline assumptions. They focus on the average effect of treatment on the treated, whereas our sensitivity analysis allows for a broader set of parameters of interest. \cite{Ramsahai2012} studied a heterogeneous treatment effect model with a binary outcome, binary treatment, and a binary instrument. He defines a continuous relaxation of the instrument exogeneity assumption and then shows how to numerically compute identified sets for a single value of this relaxation. On pages 842-–843, he notes that ``it is not obvious how the methods described in [his] paper can be extended to compute bounds'' as a function of his relaxation. In our analysis, we allow all variables to be nonbinary, and even continuous for the outcome variable, and we allow for multiple instruments. We also consider a large set of target parameters and derive theoretical and computational properties for the sensitivity plots, which map the sensitivity parameters into this range of target parameters. Also see \cite{Huber2014} and \cite{MachadoShaikhVytlacil2019} for related, but different, approaches.

In fully discrete cases, identified sets for causal parameters and counterfactual distributions can often be obtained via linear programming. This observation goes back at least to \citet{BalkePearl1997} (and related work by \citealt{Pearl1995}) and is emphasized in more recent reviews of discrete partial identification methods. For example, see the literature review in \cite{Torgovitsky2019}. Linear programming has been used in several papers to do sensitivity analysis. One paper is \cite{Ramsahai2012}, which we already discussed above. \citet[section 4]{Laffers2019} considers continuous relaxations of instrument exogeneity. He then computes identified sets for ATE for several values of this relaxation. In \cite{Laffers2018}, he applies this approach to various additional forms of continuous relaxations. \cite{Duarte2024} also uses linear programming to bound parameters under exclusion and monotonicity violations. These papers all require all variables to be discrete. A key contribution of our paper is that our results allow for continuous outcome variables. 

Our paper also contributes to assessments of IV model falsification. \cite{BalkePearl1997} characterize when Manski's bounds are empty, and hence when the model is falsified.\footnote{\cite{BalkePearl1997} assume the instrument is independent of the potential outcomes jointly, whereas \cite{Manski1990} only assumed the instrument is independent of each potential outcome separately. (Here we suppose outcomes are binary, so that mean independence is equivalent to statistical independence.) This difference does not affect whether the identified set is empty, given any fixed distribution of observables. Hence, it does not change the testable implications of the model. When the identified set is nonempty, however, this difference \emph{can} affect its size. See the second paragraph of section 3 in \cite{SwansonHernanMillerRobinsRichardson2018} for further discussion.} \citet[Proposition 3.1]{Kitagawa2021} generalizes this characterization to allow for continuous outcomes, still requiring the treatment and instrument to be binary. As \cite{Kitagawa2021} notes, his extension is an adaptation of Corollary 2.2.1 in Manski's \citeyearpar{Manski2003} analysis of missing data. \citet[Proposition 2.4]{BeresteanuMolchanovMolinari2012} further generalizes this characterization to allow for continuous instruments and discrete treatment, for discrete or continuous outcomes. \citet[Proposition 1]{KedagniMourifie2020} provides an alternative characterization when instruments and outcomes are continuous, treatment is binary, and under the stronger assumption that the instrument is independent of the potential outcomes jointly; also see Proposition 2.5 of \cite{BeresteanuMolchanovMolinari2012} for a result under this stronger independence assumption.

Finally, a large literature on the testable implications of instrument exclusion and exogeneity, combined with other assumptions, has developed. Most notably, many papers have studied the testable implications of the monotonicity assumption of \cite{ImbensAngrist1994}. \cite{FloresChen2018} give a comprehensive review. Also see \cite{FrandsenLefgrenLeslie2023Judging} for discussions of monotonicity in the judge IV framework. In this paper, we focus on instrument exclusion and exogeneity only.

\section{Sensitivity Analysis with Binary Outcomes}\label{sec:binary_outcome}

We begin by considering analyses with a binary outcome. For further simplicity, we also assume that the treatment and instrument are binary. The results below generalize to a setting with multiple treatment values and multiple discrete instruments, but we focus on the binary case, which allows us to explain the main ideas and results while keeping the notation simple. See Section \ref{subsec:discrete_generalization} for their generalization. The case where the outcome variable is continuously distributed presents additional technical challenges and is analyzed in Section \ref{sec:cont_Y}.

\subsection{Model, Parameters of Interest, and Assumptions}
Let $X \in \{ 0, 1 \}$ denote the observed binary treatment variable and $Z \in \{0,1 \}$ denote an observed instrument. As mentioned above, we consider multiple treatments and discrete instruments later. Let $\{Y(x,z)\}_{x,z \in \{0,1\}}$ denote potential outcomes for both treatment and instrument values. The observed outcome is denoted by
\begin{equation}\label{eq:generalOutcomeEquation}
 Y = Y(X,Z).
\end{equation}
We assume the joint distribution of $(Y,X,Z)$ is known in this identification analysis. Our analysis could be done conditional on a vector $W$ of covariates, but we omit them for simplicity.  Let $p_Z \coloneqq \Prob(Z=1)$ and $\pi(x \mid z) \coloneqq \Prob(X = x \mid Z = z)$. We maintain the following assumption to rule out trivial cases.
\begin{het1Assump}\label{assn:NonTrivialinstrument}
 Let $p_Z \in (0,1)$ and $\pi(x \mid z) \in (0,1)$ for all $x,z \in \{0,1\}$.
\end{het1Assump}

We define $p_Y(x,z) \coloneqq \Prob(Y(x,z) = 1\mid Z = z)$, conditional probabilities of the potential outcomes given the instrument. Let $\textbf{p}_Y(x) \coloneqq (p_Y(x,0),p_Y(x,1)) \in [0,1]^2$ and $\textbf{p}_{Y} \coloneqq (\textbf{p}_Y(0),\textbf{p}_Y(1)) \in [0,1]^4$ be collections of these conditional probabilities. We are interested in functionals of these conditional probabilities, denoted by $\Gamma: [0,1]^4 \rightarrow \R$, which include various treatment effect parameters. In this section, we focus our attention on averages of treatment effects such as the average treatment effect (ATE) and the average treatment effect on the treated (ATT). They can be viewed as functionals of $\textbf{p}_Y$ as follows:
\begin{align*}
 \text{ATE} &\coloneqq \E[Y(1,Z) - Y(0,Z)] = \Gamma_\text{ATE}(\textbf{p}_{Y})\\
 \text{ATT} &\coloneqq \E[Y(1,Z) - Y(0,Z) \mid X = 1] = \Gamma_\text{ATT}(\textbf{p}_{Y}),
\end{align*}
where 
\begin{align}
 \Gamma_{\text{ATE}}(\textbf{p}_{Y}) &\coloneqq  p_Y(1,1)p_Z + p_Y(1,0)(1-p_Z) - p_Y(0,1)p_Z - p_Y(0,0)(1-p_Z)\label{eq:ATE_functional}\\
 \Gamma_{\text{ATT}}(\textbf{p}_{Y}) &\coloneqq \E[Y \mid X = 1] - \frac{p_Y(0,1)p_Z + p_Y(0,0)(1-p_Z) - \E[Y \mid X = 0](1-p_Z)}{p_Z}.\label{eq:ATT_functional}
\end{align}
These parameters are well-defined even in the absence of exclusion or exogeneity assumptions about the instruments. Additional parameters could be of interest, such as the local average treatment effect (LATE). The LATE is defined in terms of potential treatments, which could be incorporated into our framework but are not required by it. 

Before introducing additional assumptions, we first characterize the identified set  for $\textbf{p}_{Y}$ when no assumptions are made about the joint distribution of $(Y,X,Z)$ beyond the regularity assumption \ref{assn:NonTrivialinstrument}. To do so, define
\begin{align}
\label{eq:boxSet}
 \mathcal{H}_x
 &\coloneqq [\P(Y=1,X=x \mid Z=0), \ \P(Y=1,X=x \mid Z=0) + \pi(1-x \mid 0)]\notag \\
 &\qquad \times [\P(Y=1,X=x \mid Z=1), \ \P(Y=1,X=x \mid Z=1) + \pi(1-x \mid 1)], 
\end{align}
which depends on the joint distribution of $(Y,X,Z)$. With this notation, we obtain the following result which is adapted from \cite{Manski1990}.
\begin{proposition}[\citealt{Manski1990}]\label{prop:noassn_discrete}
 Suppose Assumption \ref{assn:NonTrivialinstrument} holds. Then the identified set for $\textbf{p}_Y$ is $\mathcal{H}_0 \times \mathcal{H}_1$.
\end{proposition}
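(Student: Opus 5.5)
The proof has two parts: every $\textbf{p}_Y$ consistent with the data lies in $\mathcal{H}_0 \times \mathcal{H}_1$ (validity), and every point of that set is attained by some joint distribution of $(\{Y(x,z)\}, X, Z)$ that reproduces the observed law of $(Y,X,Z)$ (sharpness).

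\emph{Validity.} Use the law of total probability conditional on $Z=z$:
\[
p_Y(x,z) = \P(Y(x,z)=1, X=x \mid Z=z) + \P(Y(x,z)=1, X=1-x \mid Z=z).
\]
On $\{X=x, Z=z\}$ we have $Y = Y(x,z)$ by \eqref{eq:generalOutcomeEquation}, so the first term equals $\P(Y=1, X=x \mid Z=z)$ and is identified. The second term is unrestricted by the data and lies in $[0, \pi(1-x \mid z)]$. This places $p_Y(x,z)$ in the corresponding interval of \eqref{eq:boxSet}, and collecting over $(x,z)$ gives $\textbf{p}_Y \in \mathcal{H}_0\times\mathcal{H}_1$. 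Assumption \ref{assn:NonTrivialinstrument} only ensures that the conditional probabilities are well defined.

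\emph{Sharpness.} Fix any $\textbf{p}_Y \in \mathcal{H}_0\times\mathcal{H}_1$. Write $p_Y(x,z) = \P(Y=1,X=x\mid Z=z) + t_{xz}$ with $t_{xz}\in[0,\pi(1-x\mid z)]$. Construct the joint distribution as follows.
\begin{enumerate}
\item Let $(X,Z)$ have its observed distribution.
\item On $\{X=x, Z=z\}$, set $Y(x,z)$ equal to a draw from the observed conditional law of $Y$ given $(X,Z)=(x,z)$. This matches the observed data.
\item On $\{X=1-x, Z=z\}$, draw $Y(x,z)$ as Bernoulli with parameter $t_{xz}/\pi(1-x\mid z) \in [0,1]$. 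This is well defined because $\pi(1-x\mid z)>0$.
\item The potential outcomes at the other instrument value $z' \neq z$ are never observed when $Z=z$, so they can be drawn arbitrarily, e.g.\ independently.
\end{enumerate}
Then $Y = Y(X,Z)$ has the observed conditional distribution given $(X,Z)$, so the observed law of $(Y,X,Z)$ is reproduced. Moreover,
\[
\P(Y(x,z)=1\mid Z=z) = \P(Y=1,X=x\mid Z=z) + \pi(1-x\mid z)\cdot\frac{t_{xz}}{\pi(1-x\mid z)} = p_Y(x,z).
\]
No independence or exclusion restrictions are imposed, so every point of the box is attainable.

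The only delicate point is to check that the four coordinates can be chosen independently, which is what makes the identified set a product set. This holds because each $p_Y(x,z)$ depends only on the conditional distribution of $Y(x,z)$ given $Z=z$, and the construction specifies these conditional laws separately for each pair $(x,z)$. Nothing in the model links them.
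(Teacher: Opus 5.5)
Your proposal is correct and follows essentially the same route as the paper. Validity comes from the law-of-total-probability decomposition, and for sharpness you set $Y(x,z)$ to the observed outcome on $\{X=x,Z=z\}$ and to a Bernoulli draw with parameter $(p_Y(x,z)-\P(Y=1,X=x\mid Z=z))/\pi(1-x\mid z)$ on $\{X=1-x,Z=z\}$, which is exactly the paper's construction. Your remark that potential outcomes at unrealized instrument values are unconstrained is a small clarification the paper leaves implicit by working with $Y(x)=Y(x,Z)$.
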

This result shows that the identified set for the conditional probabilities $\textbf{p}_Y$ is a Cartesian product of intervals, i.e., a hyperrectangle. As these bounds are sharp, they can be used to obtain sharp bounds on any functional of $\textbf{p}_Y$. For example, the functional $\Gamma_{\text{ATE}}$ is linear and the set $\mathcal{H}_0 \times \mathcal{H}_1$ is a Cartesian product of intervals, so appropriately evaluating $\Gamma_{\text{ATE}}$ at the lower/upper bounds of the intervals in \eqref{eq:boxSet} will yield sharp bounds for it. The same approach can be used to obtain sharp bounds of the ATT, for example. For any linear functional, this is equivalent to a linear program, which is easy to solve analytically given the discrete supports of $X$ and $Z$. Figure \ref{fig:FFbinaryYsingleIV} illustrates this identified set and the optimization of the ATE over the identified set for $\textbf{p}_Y(x)$.

\begin{figure}[t]
\centering
\includegraphics[width=75mm]{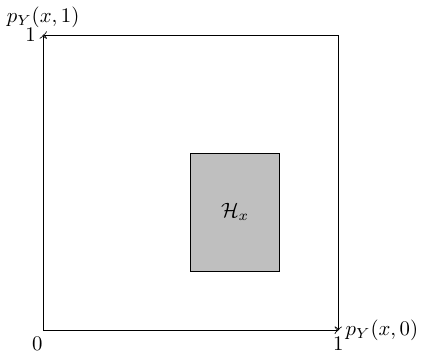}
\hspace{5mm}
\includegraphics[width=75mm]{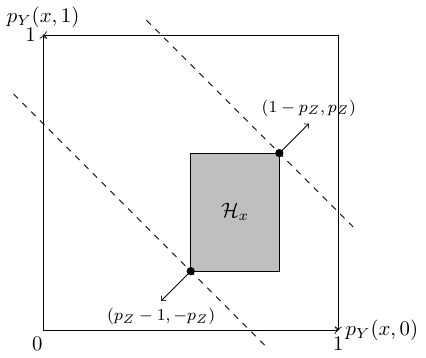}
\footnotesize
\caption{Left: Example identified set for $\textbf{p}_Y(x) = (p_Y(x,0),p_Y(x,1))$ under no exogeneity assumptions. Right: corresponding linear program minimizing/maximizing the ATE $= p_Y(x,0)(1-p_Z) + p_Y(x,1)p_Z$.}
\label{fig:FFbinaryYsingleIV}
\end{figure}

These bounds can be considerably tightened by assuming exogeneity or exclusion, as we formally define below.

\subsubsection*{Baseline Assumptions}

We now introduce the assumptions we will study in this model. We compare these assumptions to the four assumptions usually imposed in a large segment of the literature, including the traditional Local Average Treatment Effect (LATE) framework: exogeneity, exclusion, monotonicity, and relevance. For brevity, we do not include covariates in this discussion, although all the upcoming assumptions can be stated conditional on a covariate vector $W$. 

First, we formally define the exogeneity and exclusion assumptions we consider. 
\begin{definition}[Exogeneity] \label{def:exogeneity}
 The instrument is exogenous if $Z \indep Y(x,z)$ holds for each $(x,z) \in \supp(X) \times \supp(Z)$.
\end{definition}

Exogeneity holds when the instrument is randomly assigned, or as good as randomly assigned, with respect to the potential outcomes. We do not require that the instrument be independent of potential treatment values, although this assumption can be incorporated into the framework. As mentioned earlier, we could consider relaxing the conditional exogeneity assumption $Y(x) \indep Z \mid W$, at the cost of additional notation.

Next, we consider an exclusion assumption that is weaker than the most commonly used version.
\begin{definition}[Weak Exclusion]\label{def:exclusion}
 The instrument is weakly excluded if $Y(x,z) \mid \{Z = z''\} \overset{d}{=} Y(x,z') \mid \{Z = z''\}$ for all $x \in \supp(X)$ and $z,z',z'' \in \supp(Z)$.
\end{definition}

The standard exclusion assumption is that $Y(x,z) = Y(x,z')$ with probability 1 for any possible treatment value $x$ and any possible instrument values $z$ and $z'$, whereas weak exclusion only requires the (conditional) distributions of these potential outcomes to be identical. This has also been called \textit{stochastic exclusion}; see, for example, \cite{SwansonHernanMillerRobinsRichardson2018}.  Although we do not study the LATE here, the arguments used to obtain a causal interpretation for the Wald estimand are not impacted if exclusion is replaced by weak exclusion. In particular, the Wald estimand equals the LATE when the treatment and instrument are binary under weak exclusion, provided appropriate exogeneity, relevance, and monotonicity conditions hold.

We will assume that the instrument is exogenous or weakly excluded, without requiring it to satisfy both.
\begin{het1Assump}\label{assn:exog_or_excl}
 The instrument $Z$ is exogenous or weakly excluded.
\end{het1Assump}
Under this assumption, $p_Y(x,z)$ can be interpreted in one of two ways. Under exogeneity, this probability equals the unconditional probability $\P(Y(x,z) = 1)$, while under weak exclusion it denotes the conditional probability $\P(Y(x) = 1 \mid Z = z)$. If both hold, then $p_Y(x,z)$ does not depend on $z$, meaning that $\P(Y(x,1) = 1 \mid Z = 1) = \P(Y(x,0) = 1 \mid Z = 0)$, but Assumption \ref{assn:exog_or_excl} allows the dependence of $p_Y(x,z)$ on $z$ to be nontrivial. We formally show that under Assumption \ref{assn:exog_or_excl}, $p_Y(x,z)$ not depending on $z$ implies the exogeneity \textbf{and} weak exclusion of the instrument.

\begin{lemma}[Condition for Exogeneity and Weak Exclusion]\label{lemma:equivalence}
 Suppose assumptions \ref{assn:NonTrivialinstrument} and \ref{assn:exog_or_excl} hold. Then $p_Y(x,1) = p_Y(x,0)$ for all $x \in \{0,1\}$ if and only if $Z$ is exogenous and weakly excluded.
\end{lemma}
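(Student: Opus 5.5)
We prove the two directions separately. Throughout, we use that $Y(x,z)$ is binary, so its conditional law given $Z=z''$ is determined by $\P(Y(x,z)=1 \mid Z=z'')$. We also use that $Z$ is binary with $p_Z\in(0,1)$ by Assumption \ref{assn:NonTrivialinstrument}, so all conditional probabilities given $Z=z$ are well defined. Consequently, independence $Z \indep Y(x,z)$ is equivalent to
\[
\P(Y(x,z)=1\mid Z=0)=\P(Y(x,z)=1\mid Z=1).
\]
For each fixed $x$ there are four numbers $q_{z z''} \coloneqq \P(Y(x,z)=1 \mid Z=z'')$, $z,z''\in\{0,1\}$. In this notation:
\begin{itemize}
\item exogeneity for this $x$ says $q_{z0}=q_{z1}$ for each $z$;
\item weak exclusion says $q_{0z''}=q_{1z''}$ for each $z''$;
\item $p_Y(x,z)=q_{zz}$, so the condition in the lemma is $q_{00}=q_{11}$.
\end{itemize}

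\emph{If direction.} Suppose $Z$ is exogenous and weakly excluded. Then the chain
\[
q_{00}=q_{01}=q_{11}
\]
holds: the first equality is exogeneity applied to $z=0$, and the second is weak exclusion at $z''=1$. Hence $p_Y(x,0)=p_Y(x,1)$. Assumption \ref{assn:exog_or_excl} is not needed for this direction.

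\emph{Only if direction.} Suppose $q_{00}=q_{11}$ for every $x$. By Assumption \ref{assn:exog_or_excl} there are two cases.

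If $Z$ is exogenous, then $q_{00}=q_{01}$ and $q_{10}=q_{11}$. Combining these with $q_{00}=q_{11}$ shows all four $q$'s are equal. In particular $q_{0z''}=q_{1z''}$ for each $z''$, which is weak exclusion.

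If instead $Z$ is weakly excluded, then $q_{00}=q_{10}$ and $q_{01}=q_{11}$. Together with $q_{00}=q_{11}$ these again force all four $q$'s to coincide, which gives exogeneity.

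In both cases $Z$ is exogenous and weakly excluded.

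One point needs care. Assumption \ref{assn:exog_or_excl} is a global disjunction, not a separate choice for each $x$, so the case split must be made once, before fixing $x$; the argument above is then run for each $x\in\{0,1\}$ within that single case. Beyond this, the only potential subtlety is the reduction of distributional equality and independence to equality of success probabilities. That reduction is immediate here because $Y$ is binary, and the remainder is bookkeeping with the four numbers $q_{zz''}$.
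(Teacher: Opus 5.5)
Your proof is correct and follows essentially the paper's argument. Both split on the exogeneity-or-weak-exclusion disjunction and chain equalities among the four numbers $\P(Y(x,z)=1\mid Z=z'')$, using that with binary outcomes, distributional equality and independence reduce to equality of these success probabilities. Your one refinement is noting that the ``if'' direction needs no case split; the paper instead proves that direction inside each case with the same chain.
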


Thus, we can view failures of exogeneity or exclusion as mathematically equivalent to the probabilities $p_Y(x,z)$ being nonconstant in $z$.  

To simplify our exposition going forward, we let
\[
	Y(x) \coloneqq Y(x,Z).
\]
Note that $p_Y(x,z) = \P(Y(x) = 1\mid Z=z)$, and that the ATE and ATT functionals $\Gamma_{\text{ATE}}$ and $\Gamma_{\text{ATT}}$ are defined as functionals of $Y(x)$, independently of whether weak exclusion or exogeneity holds.  Also, the bounds of Proposition \ref{prop:noassn_discrete} do not change when Assumption \ref{assn:exog_or_excl} is imposed.  With these definitions, we can see that the instrument is exogenous and weakly excluded if and only if 
\[
	Y(x) \indep Z
\]
for $x \in\{0,1\}$. Hence, we will consider relaxations of exogeneity or weak exclusion as relaxations of an independence assumption, as they are mathematically equivalent here.

To finish our comparison to the standard IV assumption, we note that we allow for positive masses of both compliers, i.e., units for whom $X(1) > X(0)$, and defiers, i.e., units for whom $X(0)>X(1)$. Again, this assumption could be added to our framework at the cost of additional notation, but we focus on the case where no restrictions are imposed on these potential treatments. We also do not require that $\Prob(X=1 \mid Z=1) \neq \Prob(X=1\mid Z=0)$, the usual relevance assumption assumed in the LATE framework.

We will maintain assumptions \ref{assn:NonTrivialinstrument} and \ref{assn:exog_or_excl}, and consider a continuum of exogeneity and exclusion assumptions that range from no assumptions to full exogeneity and exclusion. Thus, they will include the no-assumptions case and the case where $Z$ is excluded and exogenous as special cases.

\subsection{Sensitivity Models for the Exogeneity or Exclusion Assumptions}\label{subsec:partial_indep_assn}

We now consider a menu of assumptions that can be interpreted as relaxations of the exogeneity or exclusion assumption. The results of Proposition \ref{prop:noassn_discrete} show one extreme: bounds under no dependence assumptions. We briefly consider bounds under the other extreme, where exogeneity and weak exclusion exactly hold.  \cite{Manski1990} derived the identified set for $\P(Y(x) = 1)$ for $x \in \{0,1\}$ as well as the identified set for the ATE under this assumption.\footnote{Manski's \citeyearpar{Manski1990} analysis considered a general case which does not require outcomes, treatment, or instruments to be binary. In this general setting, he used a mean independence assumption. When outcomes are binary, mean independence of $Y(x)$ from $Z$ is equivalent to statistical independence of $Y(x)$ and $Z$.} 

Under this assumption, $p_Y(x,0) = p_Y(x,1)$ for $x \in \{0,1\}$ by Lemma \ref{lemma:equivalence}. This restricts the probabilities $\textbf{p}_Y$ to lie in the set
\begin{align*}
 \mathcal{A}_\text{indep} &\coloneqq \{(p_{00},p_{01},p_{10},p_{11}) \in [0,1]^4 : p_{00} = p_{01}, p_{10} = p_{11}\}.
\end{align*}
Therefore, the identified set for $\textbf{p}_Y$ is given by the set of probabilities as restricted by the observed distribution of $(Y,X,Z)$, namely $\mathcal{H}_0 \times \mathcal{H}_1$, intersected with the set of probabilities satisfying exclusion and exogeneity, given by $\mathcal{A}_\text{indep}$. Thus, the identified set for $\textbf{p}_Y$ is 
\begin{align}\label{eq:IDset_indep}
 (\mathcal{H}_0 \times \mathcal{H}_1) \cap \mathcal{A}_\text{indep},
\end{align}
which can also be written as
\begin{align*}
 &\bigcap_{z=0,1}[\Prob(Y=1,X=0 \mid Z=z), \ \Prob(Y=1,X=0 \mid Z=z) + \pi(1 \mid z)]\\
 &\qquad \times \bigcap_{z=0,1}[\Prob(Y=1,X=1 \mid Z=z), \ \Prob(Y=1,X=1 \mid Z=z) + \pi(0 \mid z)].
\end{align*}
These bounds take the form of intersections. \cite{Pearl1995} and \cite{BalkePearl1997} showed that this identified set can be empty, and hence that this model is falsifiable. This set is empty if and only if the two sets in \eqref{eq:IDset_indep} are disjoint. An empty identified set corresponds to a falsification of the original model, or of an exogeneity or exclusion assumption, when other model assumptions are maintained. Figure \ref{fig:IDset_indep} shows the identified set both when the model is falsified (left panel), and when it is not (right panel).

\begin{figure}[t]
\centering
\includegraphics[width=77mm]{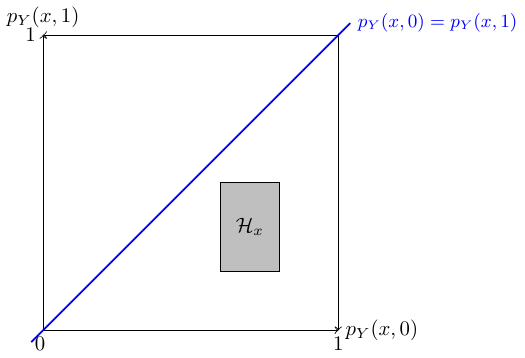}
\hspace{5mm}
\includegraphics[width=77mm]{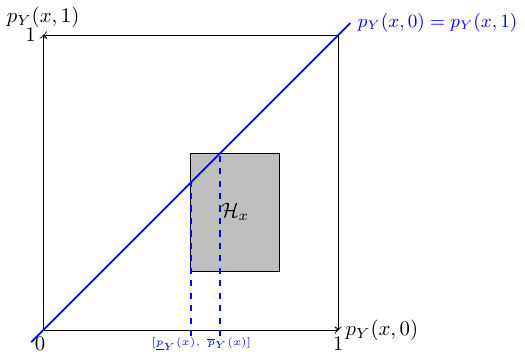}
\footnotesize
\caption{Left: Empty identified set under exogeneity and exclusion. Right: Nonempty identified set under exogeneity and exclusion. The upper and lower bounds for $p_Y(x,1) = p_Y(x,0)$ are denoted by $\overline{p}_Y(x)$ and $\underline{p}_Y(x)$}
\label{fig:IDset_indep}
\end{figure}

Full exogeneity or exclusion of the instrument may be a strong assumption in contexts where we do not believe that $Z$ is assigned randomly, or if we cannot rule out a direct effect of the instrument on the potential outcomes. In these cases, relaxing exogeneity or exclusion is appropriate. The no-assumption bounds of Manski remain valid, but partial validity of the instrument will yield intermediate bounds that are potentially significantly narrower than those in Proposition \ref{prop:noassn_discrete}.

We will consider relaxations of exogeneity and exclusion by characterizing sets of conditional probabilities $\textbf{p}_{Y}$. A large literature on sensitivity analysis has proposed various approaches for relaxing assumptions, often independence or conditional independence assumptions. We will focus on three examples, which are special cases of a unifying class of relaxations from independence we define in Section \ref{subsec:unifying_sensitivity_model_discrete}.

\subsubsection{Marginal Sensitivity Model: Tan (2006)}
The \textit{Marginal Sensitivity Model} (MSM) of \cite{Tan2006} consists of a class of relaxations of an independence assumption between a potential outcome and a binary treatment. It is generalized to multivariate treatments in \cite{ZhaoSmallBhattacharya2019} and \cite{BasitLatifWahed2023}. We consider a version of the MSM that constrains the dependence of the potential outcomes on the instruments, rather than the treatment.

\begin{definition}\label{def:MSModel}
Let $\Lambda \in [1,+\infty]$ be a known sensitivity parameter. The distribution of $(\{Y(x)\}_{x \in \supp(X)},Z)$ satisfies the Marginal Sensitivity Model with parameter $\Lambda$ if
\begin{align}\label{eq:MSModel}
 \frac{\Prob(Z=z)}{\Prob(Z=z')} \Big/ \frac{\Prob(Z=z \mid Y(x) = y)}{\Prob(Z=z' \mid Y(x) = y)} \in \left[\Lambda^{-1},\Lambda\right]
\end{align}
for all  $x \in \supp(X)$, $y \in \supp(Y(x))$, and $z, z' \in \supp(Z)$.\footnote{We let $\Lambda^{-1} = 0$ when $\Lambda = +\infty$.}
\end{definition}

With a binary instrument, this restriction places a bound on the odds ratio between the conditional odds of the instrument $\Prob(Z=z \mid Y(x) = y)/\Prob(Z=1-z \mid Y(x) = y)$ and its unconditional counterpart, $\Prob(Z=z)/\Prob(Z=1-z)$, for $z = 0,1$. In the binary outcome and instrument setting, equation \eqref{eq:MSModel} can be rearranged as
\begin{align}\label{eq:MSM_constraints}
 \frac{\P(Y(x) = y \mid Z = 1)}{\P(Y(x) = y \mid Z = 0)} = \frac{p_Y(x,1)^y(1-p_Y(x,1))^{1-y}}{p_Y(x,0)^y(1-p_Y(x,0))^{1-y}}  \in \left[\Lambda^{-1},\Lambda\right]
\end{align}
for $x,y \in \{0,1\}$. When $\Lambda = 1$, this ratio is 1 and $p_Y(x,1) = p_Y(x,0)$ for $x = 0,1$. By Lemma \ref{lemma:equivalence}, this means that weak exclusion \textit{and} exogeneity hold when $\Lambda = 1$ under Assumption \ref{assn:exog_or_excl}. When $\Lambda = +\infty$, these inequalities do not impose any restrictions on $\textbf{p}_Y$. Intermediate values of $\Lambda$ yield intermediate levels of restrictions on $\textbf{p}_Y$. Note that we can choose different $\Lambda$ values for $x = 0,1$, but we omit this generalization for brevity.

We note that equation \eqref{eq:MSM_constraints} can be written as four linear constraints on $\textbf{p}_Y$ by varying $y$ and $x$ over their support. This will be useful for casting this sensitivity analysis exercise as a linear program, as linear programming is a reliably fast and scalable computation method whose implementation is standard. Define
\begin{align*}
 A_\text{MSM}(\lambda) &\coloneqq \begin{pmatrix}
 1-\lambda & -1\\
 -1 & 1-\lambda\\
 \lambda - 1& 1\\
 1 & \lambda -1
 \end{pmatrix}, \qquad a_\text{MSM}(\lambda) \coloneqq \begin{pmatrix}
 0\\
 0\\
 \lambda\\
 \lambda
 \end{pmatrix}
\end{align*}
where $\lambda \coloneqq 1- \Lambda^{-1}$. The set of conditional probabilities satisfying the marginal sensitivity model with sensitivity parameter $\lambda$ is 
\begin{align}\label{eq:MSM_discrete_correspondence}
 \mathcal{A}_\text{MSM}(\lambda) &\coloneqq \{\textbf{p} \in [0,1]^2: A_\text{MSM}(\lambda)\textbf{p} \leq a_\text{MSM}(\lambda)\}^2
\end{align}
where the weak inequality in \eqref{eq:MSM_discrete_correspondence} is component-wise. We reparametrized the sensitivity parameter $\Lambda$ as $\lambda = 1 - \Lambda^{-1}$ to standardize its scale to $[0,1]$. Here $\Lambda = 1$, or full exogeneity and exclusion, maps into $\lambda = 0$ while $\Lambda = +\infty$, or no assumptions, maps into $\lambda = 1$.

\subsubsection{$c$-dependence}
Introduced in \cite{MastenPoirier2018}, $c$-dependence imposes a bound on the maximum difference between the conditional probability of receiving a binary treatment $\Prob(X=1\mid Y(x))$ and its unconditional probability $\Prob(X=1)$. This was proposed in a setting where the unconfoundedness of treatment $X$ is relaxed. We adapt this sensitivity model to the case where exogeneity or exclusion of an instrument is relaxed. Here is the formal definition of this sensitivity model.
\begin{definition}\label{def:c-dep}
Let $c\in[0,1]$ be a known sensitivity parameter. The distribution of $(\{Y(x)\}_{x \in \supp(X)},Z)$ satisfies \emph{$c$-dependence} if
\begin{equation}\label{eq:c-indep1}
 |\P(Z=z \mid Y(x) = y) - \Prob(Z=z) | \leq c
\end{equation}
for all $y \in \supp(Y(x))$, $x \in \supp(X)$, and $z \in \supp(Z)$.
\end{definition}
When $Z$ is binary, it suffices to impose this inequality for $z=1$ only. When $c=0$, $c$-dependence is equivalent to imposing full exclusion and exogeneity. Values of $c$ exceeding $\max\{ p_Z,1-p_Z \}$ do not constrain the stochastic relationship between $Z$ and $Y(x)$. $c \in (0,1)$ partially constrains the stochastic relationship between $Z$ and $Y(x)$. \cite{MastenPoirier2023} give additional discussion of how to interpret $c$-dependence.

We can again rewrite the above restriction into a system of four linear restrictions on $\textbf{p}_Y$. Let
 \begin{align*}
 A_\text{$c$-dep}(c) &\coloneqq \begin{pmatrix}
  k_0(c) & -1\\
  -1 & k_1(c)\\
  -k_0(c) & 1\\
  1 & -k_1(c)
 \end{pmatrix} \qquad \text{ and } \qquad a_\text{$c$-dep}(c) \coloneqq \begin{pmatrix}
  0 \\
  0 \\
  1- k_0(c)\\
  1 - k_1(c)
 \end{pmatrix},
\end{align*}
where $k_z(c) \coloneqq \frac{\Prob(Z=z)\max\{\Prob(Z=1-z) - c,0\}}{\Prob(Z = 1-z)\min\{\Prob(Z=z)+c,1\}}$ for $z = 0,1$ and $c \in [0,1]$. We can show that the set of conditional probabilities consistent with $c$-dependence with sensitivity parameter $c$ is
\begin{align}\label{eq:cdep_discrete_correspondence}
 \mathcal{A}_\text{$c$-dep}(c) &\coloneqq \{\textbf{p} \in [0,1]^2: \textbf{A}_\text{$c$-dep}(c) \textbf{p} \leq a_\text{$c$-dep}(c)\}^2.
\end{align}
This set depends only on $c$ and $p_Z$. 

\subsubsection{Kolmogorov-Smirnov Distance}
Consider a sensitivity model bounding a metric between the distributions of $Y(x) \mid \{Z=0\}$ and $Y(x) \mid \{Z=1\}$. This type of restriction was used in \cite{KlineSantos2013} to relax a missingness at random assumption. It was also considered for estimation in \cite{Manski1983}. 

\begin{definition}\label{def:KSmodel}
Let $K \in [0,1]$ be a known sensitivity parameter. The distribution of $(\{Y(x)\}_{x \in \supp(X)},Z)$ satisfies the Kolmogorov-Smirnov (KS) model if 
\begin{align}\label{eq:KSdist}
 |\Prob(Y(x) \leq y\mid Z=z) - \Prob(Y(x) \leq y\mid Z=z')| \leq K
\end{align}
for all $x \in \supp(X)$, $y \in \R$, and $z,z' \in \supp(Z)$.
\end{definition}
When outcomes and instruments are binary, this sensitivity model is equivalent to bounding the magnitude of the difference between $\P(Y(x) = 1 \mid  Z = 1)$ and $\P(Y(x) = 1 \mid  Z = 0)$ by $K$. This assumption directly bounds the maximum deviation between the potential outcomes distribution given the instrument's two values. As in the previous two definitions, this class of restrictions encompasses independence ($K = 0$), no assumptions ($K = 1$), and intermediate cases ($K \in (0,1)$). 

The set of conditional probabilities satisfying the Kolmogorov-Smirnov restrictions is characterized by the two linear inequalities
\begin{align}\label{eq:KS_discrete_correspondence}
 \mathcal{A}_\text{KS}(K) \coloneqq \{\textbf{p} \in [0,1]^2 : A_\text{KS} \textbf{p} \leq a_\text{KS}(K)\}^2
\end{align}
where
\begin{align}\label{eq:KS_Matrix}
 A_\text{KS} &\coloneqq \begin{pmatrix}
 1 & -1\\
 -1 & 1\\
 \end{pmatrix} \qquad \text{ and } \qquad a_\text{KS}(K) \coloneqq \begin{pmatrix}
 K\\
 K
 \end{pmatrix}.
\end{align}

\subsection{A Unifying Sensitivity Model}\label{subsec:unifying_sensitivity_model_discrete}

We now consider a general sensitivity model that encompasses the previous three sensitivity models as special cases. We will derive our main theoretical results under this sensitivity model. We assume that $X$ and $Z$ are binary for ease of notation and discuss the generalization to discrete $X$ and $Z$ in Section \ref{subsec:discrete_generalization}. In what follows, $\theta \in [0,1]$ is a sensitivity parameter that indexes relaxations of exogeneity or weak exclusion of the instrument.

\begin{het1Assump}[General Sensitivity Model]\label{assn:ind_relax}
 For a known sensitivity parameter $\theta \in [0,1]$, let
\begin{align*}
 \textbf{p}_Y \in \mathcal{A}_0(\theta) \times \mathcal{A}_1(\theta)
\end{align*}
where, for $x \in \{0,1\}$, $\mathcal{A}_x$ satisfies
\begin{enumerate}
 \item(Spanning) $\mathcal{A}_x(0) = \{a \in [0,1]^2: a_0 = a_1\}$ and $\mathcal{A}_x(1) = [0,1]^2$;
 \item(Monotonicity) $\mathcal{A}_x(\theta) \subseteq \mathcal{A}_x(\theta')$ when $\theta \leq \theta'$;
 \item(Linearity of Constraints)  $\mathcal{A}_x(\theta)$ is a closed convex polytope for each $\theta \in [0,1]$;
 \item(Continuity) The correspondence $\mathcal{A}_x: [0,1] \rightrightarrows [0,1]^2$ is continuous.
\end{enumerate}
\end{het1Assump}

The first part of this assumption implies that setting $\theta = 0$ imposes exogeneity and weak exclusion of the instrument, while setting $\theta = 1$ implies no restrictions on the dependence between $Z$ and the potential outcomes. The second part assumes these restrictions are monotonic in $\theta$, meaning that increasing $\theta$ yields a (weakly) larger set of conditional probabilities. These two parts combined yield that $\{\mathcal{A}_x(\theta):\theta \in [0,1]\}$ monotonically connects no assumptions to exogeneity and weak exclusion. The third restriction says that these sets are characterized by finitely many weak linear inequalities. This is crucial in obtaining a linear programming formulation for the bounds of various causal objects, such as the ATE.  The last part of this assumption assumes the continuity of the correspondence between the sensitivity parameter $\theta$ and the set of restricted conditional probabilities. Recall that a correspondence is continuous if it is both upper and lower hemicontinuous (uhc and lhc) at all points of its domain. See \cite{Border1985} for a compendium of results related to continuity of correspondences we make use of in our proofs. This assumption will yield continuity in the sensitivity parameter of the causal bounds obtained from linear programming.

This high-level assumption has useful properties, and all three previously considered relaxations are special cases of it. This is formalized in this proposition.
\begin{proposition}\label{prop:high-level_relax_discrete}
 Suppose Assumption \ref{assn:NonTrivialinstrument} holds. Relabeling $\lambda$, $c$, and $K$ as $\theta \in [0,1]$, the sets $\mathcal{A}_\text{MSM}(\lambda)$, $\mathcal{A}_{c\text{-dep}}(c)$, and $\mathcal{A}_\text{KS}(K)$ satisfy Assumption \ref{assn:ind_relax}.
\end{proposition}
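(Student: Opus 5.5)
The plan is to verify the four parts of Assumption \ref{assn:ind_relax} separately for each of the three families. All three sets are products of identical two-dimensional sets $\{\textbf{p} \in [0,1]^2 : A(\theta)\textbf{p} \leq a(\theta)\}$. It therefore suffices to work with a generic set $B(\theta) \subseteq [0,1]^2$ of this form, in the coordinates $\textbf{p} = (p_0,p_1) = (p_Y(x,0),p_Y(x,1))$. Because each $B(\theta)$ is cut out of $[0,1]^2$ by finitely many weak linear inequalities, it is automatically a closed convex polytope, so the linearity part holds immediately in all three cases.

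For spanning and monotonicity I will rewrite each family in a transparent form.

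For KS, $B(K) = \{|p_1 - p_0| \leq K\}$. At $K=0$ this is the diagonal, at $K=1$ it is all of $[0,1]^2$, and it is clearly increasing in $K$.

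For MSM, the rows of $A_\text{MSM}$ give
\[
p_1 \geq (1-\lambda)p_0, \quad p_0 \geq (1-\lambda)p_1, \quad 1-p_1 \geq (1-\lambda)(1-p_0), \quad 1-p_0 \geq (1-\lambda)(1-p_1).
\]
At $\lambda=0$ these force $p_0=p_1$. At $\lambda=1$ they reduce to $p_z \geq 0$ and $p_z \leq 1$. Since $p_z \geq 0$ and $1-p_z \geq 0$ on $[0,1]^2$, each right-hand side is nonincreasing in $\lambda$, so the sets are nested.

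For $c$-dependence, the constraints are $p_1 \geq k_0(c)p_0$, $p_0 \geq k_1(c)p_1$, and the analogous ones for $1-p$. Two facts are needed. First, $k_z(0) = 1$, which is a direct computation. Second, $k_z$ is nonincreasing in $c$, because the numerator $\max\{\P(Z=1-z)-c,0\}$ decreases and the denominator $\min\{\P(Z=z)+c,1\}$ increases. Monotonicity then follows exactly as for MSM.

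The $c$-dependence family needs one extra step at $\theta=1$. There $\max\{\P(Z=1-z)-1,0\}=0$, so $k_z(1)=0$ and the set is all of $[0,1]^2$. Assumption \ref{assn:NonTrivialinstrument} is used here: $p_Z \in (0,1)$ keeps the denominators of $k_z$ positive, so $k_z$ is well defined.

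Continuity is the main obstacle. I would prove it with a general lemma.

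\emph{Lemma.} Let $B(\theta) = \{\textbf{p} \in [0,1]^2 : A(\theta)\textbf{p} \leq a(\theta)\}$ with $A$ and $a$ continuous in $\theta$. Then:
\begin{enumerate}
\item $B$ is upper hemicontinuous. Its graph is closed by continuity of $A$ and $a$, and it takes values in the compact set $[0,1]^2$, so the closed graph theorem in \cite{Border1985} applies.
\item $B$ is lower hemicontinuous at any $\theta$ where a Slater point exists, meaning some $\textbf{p}$ with $A(\theta)\textbf{p} < a(\theta)$ strictly. The standard argument: for a target point, take convex combinations with the Slater point; these are strictly feasible at nearby $\theta'$ by continuity, which gives lhc.
\end{enumerate}
Continuity of $A$ and $a$ is clear for KS and MSM. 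For $c$-dependence it follows because $k_z$ is built from continuous max/min expressions with positive denominators.

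At $\theta>0$ a Slater point exists: the point $(1/2,1/2)$ satisfies all constraints strictly whenever $\lambda>0$, $k_z<1$, or $K>0$. For $c$-dependence, $k_z(c)<1$ holds for $c>0$ because the numerator strictly decreases or the denominator strictly increases.

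At $\theta=0$ the set is the diagonal, which has empty interior, so lhc needs a direct argument. By monotonicity, for every $\theta'>0$ we have $B(\theta') \supseteq B(0)$. Any point of $B(0)$ therefore stays feasible for all nearby $\theta'$, and lhc at $0$ is trivial.

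Combining uhc and lhc gives continuity of each $\mathcal{A}_x$. Since the product of continuous compact-valued correspondences is continuous, this completes the verification of Assumption \ref{assn:ind_relax} for all three families.
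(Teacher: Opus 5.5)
Your proposal is correct. Parts 1--3 and the upper hemicontinuity step (closed graph plus values in the compact set $[0,1]^2$) match Lemmas \ref{lemma:MSM_prim_assn}--\ref{lemma:KS_prim_assn}. The genuine difference is in how you prove lower hemicontinuity.

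The paper proves lhc model by model with explicit sequences. For MSM and $c$-dependence it shrinks $\textbf{p}$ toward its diagonal projection $((p_0+p_1)/2,(p_0+p_1)/2)$. For each of the four inequalities it computes a mixing weight $\varepsilon_{i,n}\to 0$, checks that feasibility persists for every larger weight, and then sets $\varepsilon_n=\max_i \varepsilon_{i,n}$. For KS it rescales $\textbf{p}$ by $\min\{K_n/K,1\}$.

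You instead use one abstract Slater-type lemma. For every $\theta>0$ the point $(1/2,1/2)$ is strictly feasible, so convex combinations with it are strictly feasible and hence feasible at all nearby $\theta'$ by continuity of $A(\cdot)$ and $a(\cdot)$. At $\theta=0$ you argue separately via monotonicity.

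Your route buys uniformity and economy. It needs no model-specific algebra and applies verbatim to any family $\{\textbf{p}\in[0,1]^2: A(\theta)\textbf{p}\le a(\theta)\}$ with continuous coefficients and a strictly feasible point for $\theta>0$. It is essentially the same mechanism behind the Lechicki--Spakowski result the paper invokes in Theorem \ref{thm:prob_IDset}. The cost is that you should spell out the standard diagonal step: for $t=1/k$, choose increasing $N_k$ so that the combination with weight $1/k$ lies in the constraint set at $\theta_n$ for all $n\ge N_k$.

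The paper's construction is fully explicit. Because its target point lies on the diagonal, which is feasible for every $\theta$, it covers $\lambda=0$ within the same formulas for MSM. The price is tedious per-constraint verification, much of which the paper leaves to the reader.
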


Under this general relaxation, we will derive identified sets for various parameters of interest. We use these identified sets to characterize sharp bounds on causal objects using linear programming. We can also use them to determine what values of $\theta$ correspond to falsified models.

Before continuing our discussion, we present the identified set for conditional outcome probabilities under this general restriction.
\begin{theorem}\label{thm:prob_IDset}
 Suppose assumptions \ref{assn:NonTrivialinstrument}, \ref{assn:exog_or_excl}, and \ref{assn:ind_relax} hold. Then:
 \begin{enumerate}
  \item The identified set for $\textbf{p}_Y$ is
\begin{align}\label{eq:IDset_corr}
 \Pi(\theta) \coloneqq \Pi_0(\theta) \times \Pi_1(\theta),
\end{align}
where $\Pi_x(\theta) \coloneqq \mathcal{H}_x \cap \mathcal{A}_x(\theta)$;
  \item There exists $\underline{\theta}\in [0,1]$ such that $\Pi(\theta)$ is non-empty for $\theta \geq \underline{\theta}$ and empty for $\theta < \underline{\theta}$;
  \item For all $\theta \in [\underline{\theta},1]$, $\Pi(\theta)$ is a closed convex polytope in $[0,1]^4$;
  \item For $x \in \{0,1\}$, suppose $\text{int}(\mathcal{H}_x \cap \mathcal{A}_x(\theta)) \neq \emptyset$ for all $\theta > \underline{\theta}$. The correspondence $\Pi:[\underline{\theta},1] \rightrightarrows [0,1]^4$ defined in equation \eqref{eq:IDset_corr} is continuous.
 \end{enumerate}
\end{theorem}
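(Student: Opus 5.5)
The four parts can be handled one at a time, using Proposition \ref{prop:noassn_discrete} and the four properties in Assumption \ref{assn:ind_relax}.

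\textbf{Part 1.} By Proposition \ref{prop:noassn_discrete}, which is unaffected by imposing Assumption \ref{assn:exog_or_excl}, the data restrict $\textbf{p}_Y$ exactly to the box $\mathcal{H}_0 \times \mathcal{H}_1$. Assumption \ref{assn:ind_relax} adds exactly the restriction $\textbf{p}_Y \in \mathcal{A}_0(\theta) \times \mathcal{A}_1(\theta)$.

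\begin{itemize}
\item \emph{Validity.} Any data-consistent $\textbf{p}_Y$ lies in the intersection $(\mathcal{H}_0 \times \mathcal{H}_1) \cap (\mathcal{A}_0(\theta)\times\mathcal{A}_1(\theta))$, which equals $\Pi_0(\theta)\times\Pi_1(\theta)$.
\item \emph{Sharpness.} Fix a point in this intersection. By Proposition \ref{prop:noassn_discrete} it lies in the box, so there is a joint distribution of $(\{Y(x,z)\},X,Z)$ consistent with the data that generates it. We can take the construction with $Y(x,z)=Y(x)$ for all $z$. This makes weak exclusion hold automatically, so Assumption \ref{assn:exog_or_excl} is satisfied.
\item Since this distribution's $\textbf{p}_Y$ lies in $\mathcal{A}_0(\theta)\times\mathcal{A}_1(\theta)$, Assumption \ref{assn:ind_relax} also holds.
\end{itemize}

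The restrictions factor over $x$, so the identified set is a product.

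\textbf{Parts 2 and 3.} Monotonicity of $\mathcal{A}_x$ gives $\Pi(\theta)\subseteq\Pi(\theta')$ for $\theta\le\theta'$. Define $\underline\theta \coloneqq \inf\{\theta:\Pi(\theta)\neq\emptyset\}$. This set is nonempty because $\Pi(1)=\mathcal{H}_0\times\mathcal{H}_1\neq\emptyset$ by the spanning property.

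\begin{itemize}
\item By monotonicity, $\Pi(\theta)$ is nonempty for $\theta>\underline\theta$ and empty for $\theta<\underline\theta$.
\item \emph{Nonemptiness at $\underline\theta$.} Take $\theta_n\downarrow\underline\theta$ and $p_n\in\Pi(\theta_n)$. By compactness of $[0,1]^4$, a subsequence converges to some $p$. The box is closed, so $p\in\mathcal{H}_0\times\mathcal{H}_1$. Each $\mathcal{A}_x$ is continuous, hence uhc, with closed values, so it has a closed graph and $p\in\mathcal{A}_0(\underline\theta)\times\mathcal{A}_1(\underline\theta)$.
\item \emph{Part 3.} A box intersected with a closed convex polytope is a closed convex polytope, and so is a product of two such sets.
\end{itemize}

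\textbf{Part 4.} This is the main obstacle. The difficulty is lower hemicontinuity, since intersections of continuous correspondences need not be lhc; the interior condition is what repairs this.

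\begin{itemize}
\item It suffices to show each $\Pi_x$ is continuous, because products of continuous compact-valued correspondences are continuous.
\item \emph{Upper hemicontinuity.} The constant correspondence $\mathcal{H}_x$ is compact-valued and continuous, and $\mathcal{A}_x$ is closed-graph. So $\Pi_x$ has a closed graph in a compact space, which makes it uhc (Border's results).
\item \emph{Lower hemicontinuity at $\theta_0>\underline\theta$.} I would invoke the standard result that if $\varphi,\psi$ are lhc, convex-valued, and $\text{int}(\varphi(\theta_0)\cap\psi(\theta_0))\neq\emptyset$, then $\varphi\cap\psi$ is lhc at $\theta_0$.
\item The direct argument behind that result runs as follows. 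Take $p\in\Pi_x(\theta_0)$ and an interior point $q$. For $t\in(0,1]$, the point $p_t=(1-t)p+tq$ lies in the interior of $\Pi_x(\theta_0)$ by convexity. A small ball around $p_t$ lies in $\mathcal{H}_x$ and in $\mathcal{A}_x(\theta_0)$.
\item Using lhc of $\mathcal{A}_x$, convex values, and the fact that in finite dimensions lhc plus convexity implies that points interior to $\mathcal{A}_x(\theta_0)$ lie in $\mathcal{A}_x(\theta)$ for $\theta$ near $\theta_0$, we get $p_t\in\Pi_x(\theta)$ nearby. Then let $t\to0$ by a diagonal argument.
\item \emph{Lower hemicontinuity at the endpoint $\underline\theta$.} The interior may be empty there. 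Instead use monotonicity: $\Pi_x(\underline\theta)\subseteq\Pi_x(\theta)$ for $\theta\ge\underline\theta$, and the domain is $[\underline\theta,1]$. So the constant sequence $p_n=p$ works, and lhc at $\underline\theta$ is immediate.
\end{itemize}

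The delicate step is the claim about interior points, for which I would use a simplex of $d+1$ points around $p_t$. By lhc, each vertex has approximants in $\mathcal{A}_x(\theta)$, and convexity then forces $p_t$ into their hull for $\theta$ close to $\theta_0$.
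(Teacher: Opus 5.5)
Your proposal is correct and follows essentially the same route as the paper. Part 1 intersects the Manski box with $\mathcal{A}_x(\theta)$. Part 2 combines closedness of $\{\theta:\Pi_x(\theta)\neq\emptyset\}$ with monotonicity. Part 4 gets uhc from closed graphs, lhc for $\theta>\underline\theta$ from the interior-point intersection result (the paper simply cites Theorem B of \cite{LechickiSpakowski1985}, while you sketch a finite-dimensional proof), and lhc at $\underline\theta$ from monotonicity with a constant sequence. Your explicit check that the sharpness construction can take $Y(x,z)=Y(x)$, so that Assumption \ref{assn:exog_or_excl} holds, is a detail the paper's one-line proof of Part 1 leaves implicit.
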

This theorem has several implications. First, the identified set for the set of probabilities $\textbf{p}_Y$ is a Cartesian product of two sets. Each of these two sets is characterized as the intersection between a set containing all vectors $\textbf{p}_Y(x)$ consistent with the distribution of observables $F_{Y,X,Z}$, and the set of vectors consistent with a sensitivity model indexed by $\theta$.

The second implication is that the sensitivity model is falsified for an open, but potentially empty, subset of $[0,1]$. The minimum value at which the model is not falsified, called the \textit{falsification point} by \cite{MastenPoirier2021}, is $\underline{\theta}$ and is identified since it is a property of the sets $\Pi(\theta)$ for $\theta \in [0,1]$, all of which are known from the distribution of $(Y,X,Z)$. Moreover, the set of values for which the identified set is non-empty is closed, and always contains $\theta = 1$.

Third, this set is a closed, convex polytope, meaning it is defined by finitely many linear inequalities. This ensures that optimizing linear functions, such as $\Gamma_\text{ATE}$ and $\Gamma_\text{ATT}$ from \eqref{eq:ATE_functional} and \eqref{eq:ATT_functional}, can be performed using linear programming. This will be the key computational tool for implementing these methods.

Fourth, and finally, the mapping from $\theta$ into the identified set is continuous as a correspondence. This allows us to show the continuity in $\theta$ of extrema of continuous functionals of $\textbf{p}_Y$ over the identified set, again including the ATE and ATT. 

We now illustrate the identified set for a sensitivity model corresponding to $c$-dependence. The shaded boxes in Figure \ref{fig:FFbinaryYsingleIV} show examples of the no-assumption bounds $\mathcal{H}_0 \times \mathcal{H}_1$.

\begin{figure}[t]
\centering
\includegraphics[width=77mm]{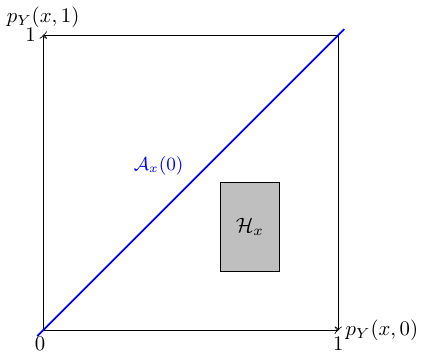}
\hspace{5mm}
\includegraphics[width=77mm]{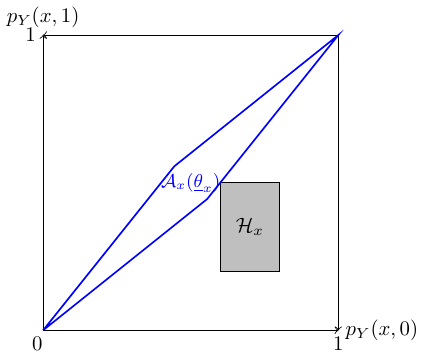}\\
\includegraphics[width=77mm]{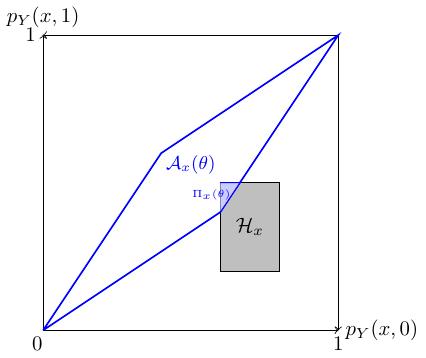}
\hspace{5mm}
\includegraphics[width=77mm]{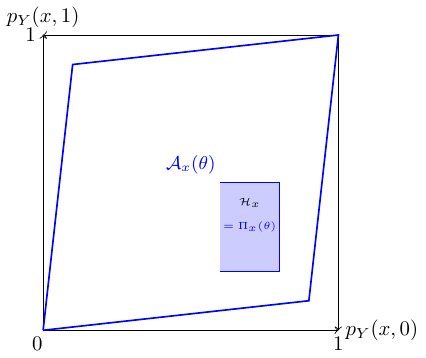}
\footnotesize
\caption{Example identified sets for different sensitivity parameter values. Relaxation is $c$-dependence. Top Left: $\theta = 0$. Top Right: $\theta = \underline{\theta}_x$, the falsification point. Bottom left: $\theta > \underline{\theta}_x$. Bottom right: $\theta$ sufficiently large that the identified set equals the no-assumption identified set from Proposition \ref{prop:noassn_discrete}. }
\label{fig:IDset_sensparam}
\end{figure}

The set $\mathcal{A}_x(\theta)$ is a parallelogram imposing the $c$-dependence constraint. The identified set for $\textbf{p}_Y(x)$ is given by the intersection of the parallelogram and shaded box.

While the no-assumption bounds are never empty, the bounds under exogeneity and weak exclusion ($\theta=0$) can be empty, and hence the baseline statistical independence assumption can be falsified. This happens when, for some $x \in \{0,1\}$, the no assumption bounds $\mathcal{H}_x$ have an empty intersection with the statistical independence constraint set $\mathcal{A}_x(\theta)$. Graphically, this happens when the box defined by the no assumption bounds does not intersect the 45-degree line. This is shown in the first plot Figure \ref{fig:IDset_sensparam}. The falsification point is simply the smallest value of $\theta$ such that the parallelogram defined by $\mathcal{A}_x(\theta)$ has a nonempty intersection with the no assumption bounds $\mathcal{H}_x$ for each $x \in \{0,1\}$. This intersection is illustrated in the second plot of Figure \ref{fig:IDset_sensparam}. Increasing the sensitivity parameter increases the size of this intersection (see the third plot of Figure \ref{fig:IDset_sensparam}), until the intersection equals $\mathcal{H}_x$, the no assumption bounds, which can be seen in the fourth plot of Figure \ref{fig:IDset_sensparam}.

We next show how to use Theorem \ref{thm:prob_IDset} to get identified sets for  counterfactual probabilities $\Prob(Y(x)=1)$ and for the ATE. By the law of total probability,
\[
 \Prob(Y(x)=1) = p_Y(x,0) (1-p_Z) + p_Y(x,1) p_Z.
\]
The weight $p_Z$ is identified, while the identified set for $\textbf{p}_Y(x)$ is given by $\Pi_x(\theta)$. Thus, we can simply minimize and maximize the above convex combination over this set to obtain the identified set for $\Prob(Y(x)=1)$. Hence we define
\begin{align*}
	\overline{P}_x(\theta) &\coloneqq \sup_{(a_0,a_1) \in \Pi_x(\theta)} \big( a_0 (1 - p_Z) + a_1 p_Z \big) \qquad \text{and} \qquad \underline{P}_x(\theta) &\coloneqq \inf_{(a_0,a_1) \in \Pi_x(\theta)} \big( a_0 (1 - p_Z) + a_1 p_Z \big).
\end{align*}
These are both finite-dimensional linear programs and hence can be computed easily given estimates of the joint distribution of $(Y,X,Z)$. Figure \ref{fig:IDset_sensparam_optima} illustrates the minimization/maximization of a linear functional over the identified set $\mathcal{A}_x(\theta) \cap \mathcal{H}_x$. 

\begin{figure}[t]
\centering
\includegraphics[width=105mm]{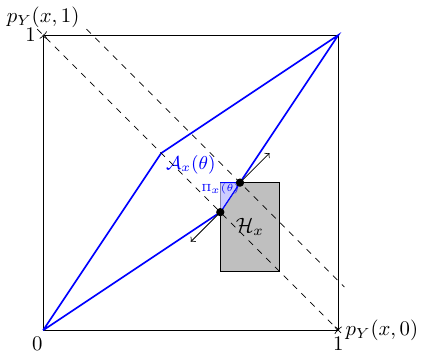}
\footnotesize
\caption{Example identified set and minimization/maximization of the ATE under a sensitivity model.}
\label{fig:IDset_sensparam_optima}
\end{figure}

The following corollary lists properties of these bounds.
\begin{corollary}\label{corr:hetTrtBinATEbounds}
Suppose assumptions \ref{assn:NonTrivialinstrument}, \ref{assn:exog_or_excl}, and \ref{assn:ind_relax} hold. Then, for $x \in \{0,1\}$:
\begin{enumerate}
\item The identified set for $(\Prob(Y(0)=1), \Prob(Y(1)=1))$ is $I_0(\theta) \times I_1(\theta)$ where $I_x \coloneqq [\underline{P}_x(\theta), \overline{P}_x(\theta)]$ when $\theta\in [\underline{\theta},1]$, and the empty set when $\theta < \underline{\theta}$;

\item The functions $\underline{P}_x(\theta)$ and $\overline{P}_x(\theta)$ are continuous and monotonic over $\theta \in [\underline{\theta},1]$;

\item Let $\theta \in [\underline{\theta},1]$. The identified set for ATE is $[\underline{\text{ATE}}(\theta),\overline{\text{ATE}}(\theta)]$ where 
\[
 \underline{\text{ATE}}(\theta) \coloneqq \underline{P}_1(\theta) - \overline{P}_0(\theta)
 \qquad \text{and} \qquad
 \overline{\text{ATE}}(\theta) \coloneqq \overline{P}_1(\theta) - \underline{P}_0(\theta).
\]
\end{enumerate}
\end{corollary}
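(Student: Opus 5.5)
The proof reduces everything to Theorem \ref{thm:prob_IDset} together with elementary facts about linear functionals on compact convex sets. The approach is to take each part of the corollary in turn.

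\textbf{Part 1.} By Theorem \ref{thm:prob_IDset}, the identified set for $\textbf{p}_Y$ is $\Pi_0(\theta)\times\Pi_1(\theta)$. The vector $(\Prob(Y(0)=1),\Prob(Y(1)=1))$ is the image of $\textbf{p}_Y$ under the linear map $\textbf{p}_Y \mapsto (L(\textbf{p}_Y(0)), L(\textbf{p}_Y(1)))$, where $L(a_0,a_1)=a_0(1-p_Z)+a_1p_Z$.

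The image of a Cartesian product under a coordinatewise map is the product of the images. Hence the identified set is $L(\Pi_0(\theta))\times L(\Pi_1(\theta))$.

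For $\theta<\underline{\theta}$, part 2 of the theorem says $\Pi(\theta)$ is empty, so the image is empty. Emptiness of the product means some factor is empty, and so the identified set for the pair is empty.

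For $\theta\ge\underline{\theta}$, each $\Pi_x(\theta)$ is a nonempty, compact, convex polytope, being a closed subset of $[0,1]^2$ (part 3). A continuous linear image of a compact convex set is a compact interval. Its endpoints are the attained inf and sup, namely $[\underline{P}_x(\theta),\overline{P}_x(\theta)]$.

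\textbf{Part 2.} \emph{Monotonicity.} By Assumption \ref{assn:ind_relax}.2, $\mathcal{A}_x(\theta)\subseteq\mathcal{A}_x(\theta')$ for $\theta\le\theta'$. Hence $\Pi_x(\theta)\subseteq\Pi_x(\theta')$. It follows that $\overline{P}_x$ is nondecreasing and $\underline{P}_x$ is nonincreasing.

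\emph{Continuity.} Apply Berge's maximum theorem to the continuous objective $L$ over the correspondence $\Pi_x:[\underline{\theta},1]\rightrightarrows[0,1]^2$. This correspondence is nonempty, compact-valued and continuous, by part 4 of Theorem \ref{thm:prob_IDset}. Continuity of the product correspondence $\Pi$ implies continuity of each factor $\Pi_x$, since projections preserve uhc and lhc for product correspondences with nonempty values. Berge's theorem then gives that the value functions $\overline{P}_x$ and $\underline{P}_x$ are continuous.

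\textbf{Main obstacle: the interiority hypothesis.} Part 4 of the theorem is stated under the condition $\text{int}(\mathcal{H}_x\cap\mathcal{A}_x(\theta))\neq\emptyset$ for $\theta>\underline{\theta}$, which the corollary does not list. I would argue that it holds automatically, or else bypass it.

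First, upper hemicontinuity of $\Pi_x$ holds without any interiority condition. It follows from the intersection of a constant compact set $\mathcal{H}_x$ with a closed-graph correspondence $\mathcal{A}_x$.

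Second, lower hemicontinuity is the delicate part, and the plan is to exploit convexity and monotonicity directly. Take $\theta_n\to\theta$ and a point $p\in\Pi_x(\theta)$.

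If $\theta_n\ge\theta$, then by monotonicity $p\in\Pi_x(\theta_n)$ already, so no approximation is needed.

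If $\theta_n\uparrow\theta$ with $\theta>\underline{\theta}$, pick $q\in\Pi_x(\underline{\theta})$. Using convexity of the sets $\mathcal{A}_x$, I would show that points on the segment from $q$ to $p$, pulled slightly toward $q$, belong to $\mathcal{A}_x(\theta_n)$ for $n$ large. This uses lhc of $\mathcal{A}_x$ together with the fact that $q\in\mathcal{A}_x(\theta_n)$. Such points also lie in $\mathcal{H}_x$, since $\mathcal{H}_x$ is convex and contains both $p$ and $q$.

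If this convexity argument fails, I would simply invoke the theorem's hypothesis, reading the corollary as implicitly maintaining it.

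Alternatively, for one-sided continuity of a sup, it suffices to prove left-continuity of the nondecreasing function $\overline{P}_x$ directly. This follows from the fact that the union $\bigcup_{\theta'<\theta}\Pi_x(\theta')$ is dense in $\Pi_x(\theta)$, which is again a consequence of lhc.

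\textbf{Part 3.} The ATE equals $\Prob(Y(1)=1)-\Prob(Y(0)=1)$. By part 1, the identified set for the pair of probabilities is the rectangle $I_0\times I_1$. The image of a rectangle under $(u,v)\mapsto v-u$ is $[\underline{P}_1-\overline{P}_0,\ \overline{P}_1-\underline{P}_0]$. This holds because the product structure lets the two coordinates be chosen independently, and the difference of two intervals is an interval.
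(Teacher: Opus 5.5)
Your main argument is the paper's proof. You treat the pair of probabilities as the image of $\Pi_0(\theta)\times\Pi_1(\theta)$ under the coordinatewise map, use compactness and convexity of $\Pi_x(\theta)$ to get an interval, and get monotonicity from nestedness. You then apply Berge's theorem to the correspondence that Theorem~\ref{thm:prob_IDset} declares continuous, and read off the ATE from the product structure. You are right that the corollary omits the interiority hypothesis of Theorem~\ref{thm:prob_IDset}.4. The paper's proof also just cites that theorem, so your fallback of treating the hypothesis as implicitly maintained is exactly what the paper does.

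\textbf{The bypass.} As written, it does not work.
\begin{itemize}
\item Lower hemicontinuity of $\mathcal{A}_x$ and $q\in\mathcal{A}_x(\theta_n)$ only give points $(1-t)p_n+tq\in\mathcal{A}_x(\theta_n)$ with $p_n\to p$. These lie off the segment $[q,p]$ and need not belong to $\mathcal{H}_x$.
\item To put points of $[q,p]$ itself into $\mathcal{A}_x(\theta_n)$, you need $q\in\text{int}\,\mathcal{A}_x(\theta)$. Then the half-open segment is interior, and Hausdorff convergence of the convex sets eventually captures it.
\item For an arbitrary $q\in\Pi_x(\underline\theta)$ this fails. If $\Prob(Y=1,X=x\mid Z=z)=0$ for both $z$, then $q=(0,0)\in\Pi_x(\underline\theta)$. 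Suppose that, near $\theta$, $\mathcal{A}_x(\theta)=\{p\in[0,1]^2: p_0\le p_1\le\min\{(1+\theta)p_0,1\}\}$, which is compatible with Assumption~\ref{assn:ind_relax}. Then for $p\in\mathcal{H}_x$ on the edge $p_1=(1+\theta)p_0$, every point of $[q,p]$ except $q$ lies outside $\mathcal{A}_x(\theta_n)$ whenever $\theta_n<\theta$.
\end{itemize}

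\textbf{The alternative route.} It is circular. Density of $\bigcup_{\theta'<\theta}\Pi_x(\theta')$ in $\Pi_x(\theta)$ is exactly left lower hemicontinuity of $\Pi_x$. That property is what is in question, and it does not follow from the lower hemicontinuity of $\mathcal{A}_x$ and $\mathcal{H}_x$ separately.

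\textbf{How the hypothesis can be dropped.} In this binary setting it can in fact be removed, but by a different argument. Each $\mathcal{A}_x(\theta)$ contains the whole diagonal, and $\mathcal{H}_x$ is a box with sides $\pi(1-x\mid z)>0$. One can check that for every $\theta>\underline\theta$, one of two things holds:
\begin{itemize}
\item $\Pi_x(\theta)$ has nonempty interior, so the Lechicki--Spakowski step applies at $\theta$; or
\item $\Pi_x$ is constant on $[\underline\theta,\theta]$, equal either to a single corner of the box or to $\mathcal{H}_x\cap\mathcal{A}_x(0)$. In that case lower hemicontinuity at $\theta$ is trivial from the left and follows from monotonicity from the right.
\end{itemize}
Your proposal contains no argument of this kind.
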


 This discussion implies that ATE will typically be partially identified at the falsification point. That is: The falsification adaptive set for ATE, $[\underline{\text{ATE}}(\underline\theta), \overline{\text{ATE}}(\underline\theta)]$, will generally be an interval with a nonempty interior.

\subsection{Generalization to Non-Binary Discrete Variables}\label{subsec:discrete_generalization}
The previous results illustrate that common sensitivity models yield identified sets for parameters of interest with desirable properties. However, these were illustrated only for cases where $Y$, $X$, and $Z$ were all binary. In practice, many empirical settings have multiple instruments, and treatments or outcomes may be multivalued as well. In this section, we sketch a generalization of the previous results to cases where the support of $Y$, $X$, may be discrete instead of binary, where there may be multiple instruments, and where each instrument may possess finite support rather than being binary.

Let $X$ be a discrete treatment, let $Z$ be a vector of instruments, where each instrument is discrete, and let $Y(x,z)$ be discrete as well. We let $Y = Y(X,Z)$ be the realized, observed outcome. We suppose all their supports are finite.

Let $p_Y(y \mid z ; \ x) \coloneqq \Prob(Y(x) = y\mid Z = z)$, $\textbf{p}_{Y}(x,z) \coloneqq \{p_Y(y \mid z ; \ x)\}_{y \in \supp(Y(x))}$, $\textbf{p}_{Y}(x) \coloneqq \{\textbf{p}_{Y}(x,z)\}_{z \in \supp(Z)}$, and $\textbf{p}_{Y} \coloneqq \{\textbf{p}_{Y}(x)\}_{x \in \supp(X)}$. The vector $\textbf{p}_{Y}$ contains the full distribution of $Y(x) \mid \{Z = z\}$ for all $(x,z) \in \supp(X,Z)$. We define $s_Z \coloneqq |\supp(Z)|$, $s_{Y(x)} \coloneqq |\supp(Y(x))|$, and $\supp(Y(x)) \coloneqq \{y_1,\ldots,y_{s_{Y(x)}}\}$.

To avoid trivial cases, we make the following assumption.
\begin{het1Assump}\label{assn:NonTrivialinstrument_disc}
For all $(x, z) \in \supp(X) \times\supp(Z)$, $\P(Z = z) \in (0,1)$ and $\Prob(X=x \mid Z =z) \in (0,1)$.
\end{het1Assump}

Let $\Delta_{S}$ denote the simplex of dimension $S$:
\begin{align*}
 \Delta_{S} &\coloneqq \left\{\textbf{p} \in \R^{S+1}: \textbf{p} \geq 0, \sum_{s = 1}^{S+1} p_s = 1\right\}.
\end{align*}
For $K \in \mathbb{N}$, let $\Delta_S^K$ denote the $K$-fold cartesian product of $\Delta_S$. 

The no-assumption identified set for $\textbf{p}_Y$ is given by 
\begin{align*}
 \prod_{x \in \supp(X)} \prod_{z \in \supp(Z)}  \mathcal{H}_{x,z},
\end{align*}
where
\begin{align*}
 \mathcal{H}_{x,z} &= \{ (p_1,\ldots,p_{s_{Y(x)}}) \in \Delta_{s_{Y}(x) - 1}: p_s \in [\Prob(Y=y_s,X=x\mid Z = z),\Prob(Y=y_s,X=x\mid Z = z) \\
 &\qquad + \Prob(X \neq x\mid Z = z)], \text{for all } s \in \{1,\ldots,s_{Y(x)}\}\}.
\end{align*}
The set $\mathcal{H}_{x,z}$ is the identified set for $\textbf{p}_Y(x,z)$ under no assumptions. We also note the similar structure of $\mathcal{H}_x = \prod_{z \in \supp(Z)}  \mathcal{H}_{x,z}$  and of the rectangles defined in \eqref{eq:boxSet} for the binary case.

The three sensitivity models we investigated earlier can be defined independently of the supports of the potential outcomes, treatments, or instruments, so they can be used when these variables are non-binary. We can also embed these sensitivity models in a general sensitivity model similar to the one in Assumption \ref{assn:ind_relax}. The following assumption simplifies to \ref{assn:ind_relax} when all variables are binary.

\begin{het1Assump}[General Sensitivity Model]\label{assn:ind_relax_discrete}
 Suppose Assumption \ref{assn:NonTrivialinstrument_disc} holds. For a known sensitivity parameter $\theta \in [0,1]$, let
\begin{align*}
 \textbf{p}_Y \in \prod_{x \in \supp(X)} \mathcal{A}(\theta; x)
\end{align*}
where, for $x \in \supp(X)$, $\mathcal{A}(\theta; x)$ satisfies
\begin{enumerate}
 \item(Spanning) $\mathcal{A}(0; x) = \{ (a_1,\ldots,a_{s_Z}) \in \Delta_{s_Y(x) - 1}^{s_Z} : a_1 = \cdots = a_{s_Z}\}$ and $\mathcal{A}(1; x) = \Delta_{s_Y(x) - 1}^{s_Z}$;
 \item(Monotonicity) $\mathcal{A}(\theta; x) \subseteq \mathcal{A}(\theta'; x)$ when $\theta \leq \theta'$;
 \item(Linearity of Constraints)  $\mathcal{A}(\theta; x)$ is a closed convex polytope for each $\theta \in [0,1]$;
 \item(Continuity) The correspondence $\mathcal{A}(\cdot; x): [0,1] \rightrightarrows \Delta_{s_Y(x) - 1}^{s_Z}$ is continuous.
\end{enumerate}
\end{het1Assump}
This assumption is similar to its counterpart with binary variables, except for parts 1 and 4, which have been modified to allow $Y(x)$ to be nonbinary. The restriction in part 1 states that $\P(Y(x) = y \mid  Z = z)$ is constant in $z \in \supp(Z)$ for each $y \in \supp(Y(x))$, and is stated as equality constraints for on the components of $\mathcal{A}_x(0)$.

As in the binary case, all these assumptions can be written as linear inequalities in the components of vector $\textbf{p}_Y$. Therefore, the bounds on various causal objects can be obtained by solving linear programs. We expect similar results to Theorem \ref{thm:prob_IDset} and Corollary \ref{corr:hetTrtBinATEbounds} to hold in this setting, so that the bounds enjoy the same monotonicity and continuity property.

\section{Identification with Continuous Outcomes}\label{sec:cont_Y}

We now consider cases where the outcome variable is continuously distributed. In this case, we may view the corresponding problem as an infinite dimensional program, whose theoretical properties are harder to analyze. Nevertheless, in this section we show that the previous sensitivity models can be used with continuous outcomes, and we obtain theoretical properties of the corresponding sensitivity analyses for the exogeneity/exclusion of an instrument. To keep other aspects of the problem relatively simple, we consider the case where the treatment and instrument are both binary, although this can be naturally generalized as in Section \ref{subsec:discrete_generalization}. In this section, we show that the analytical results we derived under binary outcomes generalize to continuous outcomes. This leads us to a relatively simple and feasible approach for computing identified sets under relaxations of instrument exogeneity with continuous outcomes.

We begin by assuming that outcomes are continuously distributed.

\begin{het1Assump}
\label{assn:contsupport}
Suppose that $\supp(Z) = \supp(X) = \{0,1\}$. For any $x, x',z \in \{0,1\}$ the distribution of $Y(x) \mid \{X=x', Z=z\}$ is continuous with respect to the Lebesgue measure and is supported on a compact interval $\mathcal{Y}_x \coloneqq \supp(Y(x))$, which is independent of $x'$ and $z$.
\end{het1Assump}

Assumption \ref{assn:contsupport} supposes that, conditional on the treatment and instruments, potential outcomes are continuously distributed. It implies that, conditional on the treatment and instruments, observed outcomes are also continuously distributed. We can allow for discrete instruments as in Section \ref{sec:binary_outcome}, but we only consider a binary instrument to simplify the notation. 

This assumption also states that the conditional support of $Y(x)$ given $(X,Z) = (x',z)$ does not depend on $(x',z)$, which is made for convenience. Our results would remain valid without this restriction, but notation in the proofs would have to be heavier.

Let $f_{Y}(y \mid z; \ x) \coloneqq f_{Y(x) \mid Z}(y \mid z)$ denote the conditional density of $Y(x)$ given $Z = z$. We also let $\textbf{f}_{Y}(y ; \ x) \coloneqq (f_{Y}(y \mid 0 ; \ x),f_{Y}(y \mid 1 ; \ x))$ and  $\textbf{f}_{Y}(y) \coloneqq (\textbf{f}_{Y}(y ; \ 0), \textbf{f}_{Y}(y ; \ 1))$ denote collections of these densities across instrument and treatment values.  We assume that the potential outcomes' densities belong to a convex class of densities that is compact with respect to the supremum norm.
\begin{het1Assump}\label{assn:compactdensities}
For $x,z\in\{0,1\}$, let
\begin{align*}
	f_{Y}(\cdot \mid z ; \ x) &\in \left\{f \in \mathcal{F}_x(\mathcal{Y}_x): \int_{\mathcal{Y}_x}f(y) \, dy = 1, f \geq 0\right\} \eqqcolon \mathcal{F}_{\text{den},x},
\end{align*}
where $\mathcal{F}_x(\mathcal{Y}_x)$ is a convex set of bounded functions supported on $\mathcal{Y}_x$ that is compact with respect to the norm $\|f\|_\infty \coloneqq \sup_{y \in \mathcal{Y}_x}|f(y)|$.
\end{het1Assump}

Examples of compact sets $\mathcal{F}_x(\mathcal{Y}_x)$  include the set of bounded Lipschitz functions:
\begin{align*}
	\mathcal{C}_{0,\infty,1,1}(\mathcal{Y}_x) \coloneqq \left\{f \in \mathcal{C}_0(\mathcal{Y}_x): \sup_{y \in \mathcal{Y}_x}|f(y)| + \sup_{y,y' \in \text{int}(\mathcal{Y}_x), y\neq y'} \frac{|f(y') - f(y)|}{|y' - y|} <M \right\}
\end{align*}
where $\mathcal{C}_0(A)$ denotes the set of continuous functions on domain $A$, and $M<\infty$ is a constant. See \cite{FreybergerMasten2019} for alternative compact sets of functions and associated discussion. 

We start by deriving the no-assumptions bounds for this set of conditional densities.
\begin{proposition}\label{prop:cont_noassbounds}
Suppose assumptions \ref{assn:NonTrivialinstrument}, \ref{assn:contsupport}, and \ref{assn:compactdensities} hold. The identified set for $\textbf{f}_{Y}$ is 
\begin{align*}
	\mathcal{H} \coloneqq \prod_{x = 0,1} \mathcal{H}_{x}
\end{align*} where $\mathcal{H}_{x} \coloneqq \prod_{z = 0,1} \mathcal{H}_{x,z}$ and $\mathcal{H}_{x,z} \coloneqq \{f(\cdot) \in \mathcal{F}_{\text{den},x}: f \geq f_{Y|X,Z}(\cdot\mid x,z)\pi(x\mid z)\}$.
\end{proposition}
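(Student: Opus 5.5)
Two inclusions are needed: every conditional density vector consistent with the model and the data lies in $\mathcal{H}$ (validity), and every element of $\mathcal{H}$ is generated by some joint distribution of $(\{Y(x)\}_{x},X,Z)$ that satisfies assumptions \ref{assn:NonTrivialinstrument}, \ref{assn:contsupport}, and \ref{assn:compactdensities} and reproduces the observed distribution of $(Y,X,Z)$ (sharpness). The argument mirrors the binary-outcome case of Proposition \ref{prop:noassn_discrete}, with densities in place of probabilities. Throughout we use the decomposition
\[
f_{Y}(y\mid z;\,x)=f_{Y(x)\mid X,Z}(y\mid x,z)\pi(x\mid z)+f_{Y(x)\mid X,Z}(y\mid 1-x,z)\pi(1-x\mid z),
\]
together with the fact that $Y=Y(X)$ on $\{X=x\}$, so that $f_{Y(x)\mid X,Z}(\cdot\mid x,z)=f_{Y\mid X,Z}(\cdot\mid x,z)$ is identified.

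\emph{Validity.} Take any admissible distribution. By Assumption \ref{assn:compactdensities}, $f_Y(\cdot\mid z;\,x)\in\mathcal{F}_{\text{den},x}$. The second term in the decomposition is nonnegative and the first is identified, so $f_Y(\cdot\mid z;\,x)\ge f_{Y\mid X,Z}(\cdot\mid x,z)\pi(x\mid z)$ pointwise. Hence $f_Y(\cdot\mid z;\,x)\in\mathcal{H}_{x,z}$ for every $(x,z)$, and $\textbf{f}_Y\in\mathcal{H}$.

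\emph{Sharpness.} Fix $(f_{x,z})_{x,z}\in\mathcal{H}$. For each $(x,z)$, define the candidate counterfactual density
\[
g_{x,z}\coloneqq \frac{f_{x,z}-f_{Y\mid X,Z}(\cdot\mid x,z)\pi(x\mid z)}{\pi(1-x\mid z)}.
\]
This is well defined by Assumption \ref{assn:NonTrivialinstrument}, it is nonnegative because $f_{x,z}\in\mathcal{H}_{x,z}$, and it integrates to $(1-\pi(x\mid z))/\pi(1-x\mid z)=1$. Now build the joint law as follows. Draw $Z$ with $\Prob(Z=1)=p_Z$. Given $Z=z$, draw $X$ from $\pi(\cdot\mid z)$. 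Given $(X,Z)=(x',z)$, set $Y(x')\sim f_{Y\mid X,Z}(\cdot\mid x',z)$ and $Y(1-x')\sim g_{1-x',z}$, taking the two potential outcomes independent (any coupling would do). Then:
\begin{itemize}
\item the observed $(Y,X,Z)$ has the correct distribution;
\item the decomposition gives $f_{Y}(\cdot\mid z;\,x)=f_{x,z}$;
\item Assumption \ref{assn:compactdensities} holds because $f_{x,z}\in\mathcal{F}_{\text{den},x}$.
\end{itemize}
Since $Y(x)=Y(x,Z)$ and the instrument-indexed potential outcomes $Y(x,z')$ for $z'\neq Z$ are never used in $\textbf{f}_Y$, they can be set equal to $Y(x)$.

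\emph{Main obstacle.} The delicate step is Assumption \ref{assn:contsupport}: $Y(x)\mid\{X=1-x,Z=z\}$ must be continuous with support exactly $\mathcal{Y}_x$, not depending on $(x',z)$. Continuity is immediate since $g_{x,z}$ is a density. The support requirement may fail, however, if $g_{x,z}$ vanishes on a subinterval of $\mathcal{Y}_x$. I would handle this in two steps:
\begin{enumerate}
\item Read the identified set as the set of $\textbf{f}_Y$ consistent with the data under the stated assumptions, and interpret support in the weak sense ("contained in $\mathcal{Y}_x$"). The paper notes that the common-support restriction is only for convenience, which supports this reading.
\item Alternatively, show that any element of $\mathcal{H}$ whose $g_{x,z}$ has a gap can be approximated by elements with full support. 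Perturb $f_{x,z}$ to $f_{x,z}+\varepsilon(u-f_{x,z})$ with $u$ uniform on $\mathcal{Y}_x$, using convexity of $\mathcal{F}_x(\mathcal{Y}_x)$, provided $u$ lies in that class. Closedness of $\mathcal{H}$ then recovers the limit.
\end{enumerate}
I would first try to state the result under the weak reading (step 1). I would include the approximation argument (step 2) only if it is necessary for the set to match exactly.
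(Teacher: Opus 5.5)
Your validity and sharpness arguments are the paper's proof. You use the same total-probability decomposition, dropping the nonnegative counterfactual term. You use the same counterfactual density $g_{x,z}=(f_{x,z}-f_{Y\mid X,Z}(\cdot\mid x,z)\pi(x\mid z))/\pi(1-x\mid z)$. And you use the same product coupling of $Y(0)$ and $Y(1)$ given $(X,Z)$. The paper never discusses the support issue; it simply asserts that the constructed law is consistent with Assumption \ref{assn:contsupport}. So it is implicitly using your step 1 reading, and that is the right way to state the result.

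You should drop step 2, because it cannot deliver exact equality. Given the data, $f_{x,z}$ pins down $g_{x,z}$ uniquely. So under the strict reading of Assumption \ref{assn:contsupport}, an element of $\mathcal{H}$ whose $g_{x,z}$ vanishes on an open subinterval of $\mathcal{Y}_x$ is genuinely not in the identified set. Approximating it and invoking closedness of $\mathcal{H}$ only shows that $\mathcal{H}$ is the closure of the identified set. Under that reading the proposition is false as stated, not merely in need of an approximation argument.

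The specific perturbation also fails. Where $g_{x,z}=0$ you have $f_{x,z}=f_{Y\mid X,Z}(\cdot\mid x,z)\pi(x\mid z)$. There, $(1-\varepsilon)f_{x,z}+\varepsilon u$ drops below this lower envelope wherever $u<f_{Y\mid X,Z}(\cdot\mid x,z)\pi(x\mid z)$, so it leaves $\mathcal{H}_{x,z}$. A workable perturbation would mix $g_{x,z}$, not $f_{x,z}$, with $u$. Even that needs extra conditions on $\mathcal{F}_x(\mathcal{Y}_x)$ and still only gives density, not equality.
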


We next consider the baseline case where the instruments are exogenous and excluded. In this case, the instrument's validity implies that the densities $\textbf{f}_Y$ must lie in
\begin{align*}
	\mathcal{A}_\text{indep} \coloneqq \{(f_{00},f_{01},f_{10},f_{11}) \in \mathcal{F}_0(\mathcal{Y}_0)^2 \times \mathcal{F}_1(\mathcal{Y}_1)^2 : f_{00} = f_{01}, f_{10} = f_{11}\},
\end{align*}
since this set imposes that $f_{Y(x)|Z}(\cdot\mid 0) = f_{Y(x)|Z}(\cdot\mid 1)$ for $x = 0,1$. Thus, the identified set for $\textbf{f}_Y$ under independence is given by
\begin{align*}
	\mathcal{H} \cap \mathcal{A}_\text{indep}.
\end{align*}
This is precisely the setting studied in \cite{Kitagawa2021}, and he provides a characterization of this set in his Proposition 3.1, which we include without proof.
\begin{proposition}\label{prop:kitagawa3point1} (Proposition 3.1 in \cite{Kitagawa2021}) Suppose assumptions \ref{assn:NonTrivialinstrument}, \ref{assn:contsupport}, and \ref{assn:compactdensities} hold. Suppose $Z$ is exogenous and weakly excluded. Then the identified set for $(f_{Y(0)},f_{Y(1)})$ is 
\[
	 \left\{ f_0 \in \mathcal{F}_{\text{den},0}: f_0(\cdot) \geq \max_{z=0,1} \; f_{Y|X,Z}(\cdot\mid  0 , z) \pi(0\mid z)\right\} \times \left\{ f_1 \in \mathcal{F}_{\text{den},1}: f_1(\cdot) \geq \max_{z=0,1} \; f_{Y|X,Z}(\cdot \mid 1, z)\pi(1\mid z)\right\}.
\]
Consequently, the model is refuted if
\[
	\int_{\mathcal{Y}_x} \max_{z=0,1} \; f_{Y,X|Z}(y,x \mid z) \; dy > 1
\]
for some $x \in \{0,1 \}$.
\end{proposition}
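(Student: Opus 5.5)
The plan is to build on Proposition \ref{prop:cont_noassbounds}: under exogeneity and weak exclusion the identified set for $\textbf{f}_Y$ is $\mathcal{H}\cap\mathcal{A}_\text{indep}$. By Lemma \ref{lemma:equivalence}, adapted to densities, exogeneity together with weak exclusion means $f_Y(\cdot\mid 0;x)=f_Y(\cdot\mid 1;x)=f_{Y(x)}$. Hence $(f_{Y(0)},f_{Y(1)})$ is obtained by projecting $\mathcal{H}\cap\mathcal{A}_\text{indep}$ onto its diagonal. Since $\mathcal{H}$ and $\mathcal{A}_\text{indep}$ are both products over $x$, the identified set factors as a product over $x$, and I treat each $x$ separately. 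For fixed $x$, a density $f_x\in\mathcal{F}_{\text{den},x}$ lies in the projection if and only if $(f_x,f_x)\in\mathcal{H}_{x,0}\times\mathcal{H}_{x,1}$. This holds if and only if $f_x\ge f_{Y|X,Z}(\cdot\mid x,z)\pi(x\mid z)$ for both $z$, that is, $f_x\ge \max_z f_{Y|X,Z}(\cdot\mid x,z)\pi(x\mid z)$ pointwise. This gives the displayed set.

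The substantive step is sharpness, which I take from Proposition \ref{prop:cont_noassbounds}. I must check that any such $f_x$, used for both $z$, is consistent with the observed $f_{Y,X|Z}$. I would construct the joint law directly. For each $z$, set $f_{Y(x)\mid X=x,Z=z}=f_{Y|X,Z}(\cdot\mid x,z)$, as the data dictate. Then set
\[
f_{Y(x)\mid X=1-x,Z=z}=\bigl(f_x-f_{Y|X,Z}(\cdot\mid x,z)\pi(x\mid z)\bigr)/\pi(1-x\mid z).
\]
This is nonnegative by the lower bound and integrates to one, because $f_x$ is a density and $\pi(1-x\mid z)>0$ by Assumption \ref{assn:NonTrivialinstrument}. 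Mixing over $X$ given $Z=z$ then gives $f_Y(\cdot\mid z;x)=f_x$ for both $z$, so $Y(x)\indep Z$. Exogeneity or weak exclusion then holds as in the discussion after Lemma \ref{lemma:equivalence}.

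The obstacle I expect is membership in the compact class of Assumption \ref{assn:compactdensities}. Assumption \ref{assn:contsupport} requires the constructed conditional density for the counterfactual group $X=1-x$ to be continuous and supported on $\mathcal{Y}_x$. I would argue this is harmless: Assumption \ref{assn:compactdensities} constrains only $f_Y(\cdot\mid z;x)=f_x\in\mathcal{F}_{\text{den},x}$, which holds by construction. Support on $\mathcal{Y}_x$ holds because $f_x$ vanishes outside $\mathcal{Y}_x$. If an exact support equality is needed, I would invoke the paper's remark that the support condition is imposed only for convenience. This is the same construction that underlies Proposition \ref{prop:cont_noassbounds}, so I would cite it rather than redo it.

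Finally, the refutation claim. If $\int_{\mathcal{Y}_x}\max_z f_{Y,X|Z}(y,x\mid z)\,dy>1$, then every $f_x$ in the set satisfies $\int f_x\ge\int\max_z f_{Y,X|Z}(y,x\mid z)\,dy>1$, which contradicts $\int f_x=1$. Here I use $f_{Y|X,Z}(y\mid x,z)\pi(x\mid z)=f_{Y,X|Z}(y,x\mid z)$. So the identified set is empty and the model is refuted. This is essentially Kitagawa's argument, the continuous analogue of the Balke--Pearl emptiness condition.
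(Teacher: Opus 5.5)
The paper states this result without proof and defers to Kitagawa, but your argument is exactly the one its setup points to, and it is correct. You identify $\textbf{f}_Y$ as lying in $\mathcal{H}\cap\mathcal{A}_\text{indep}$, project onto the diagonal $f_{x0}=f_{x1}=f_{Y(x)}$, obtain sharpness from the construction in the proof of Proposition~\ref{prop:cont_noassbounds}, and treat the support condition in Assumption~\ref{assn:contsupport} no less carefully than the paper's own proof of that proposition does. One small tightening: the model requires exogeneity \emph{and} weak exclusion, not ``or,'' and you get both at once by setting $Y(x,z)\coloneqq Y(x)$ for both $z$ in your constructed joint law, so that $Z\indep Y(x,z)$ and $Y(x,0)=Y(x,1)$ hold trivially.
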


The previous two results establish the identification region for conditional densities of $Y(x)$ given $Z$ under no-assumptions, and under the full validity of the instrument, which correspond to the ends of a spectrum of assumptions about the dependence between $Y(x)$ and $Z$. We now consider sensitivity models that consider intermediate assumptions on the instrument's validity. We again consider the following three restrictions, which are adapted from Section \ref{subsec:partial_indep_assn}.

\subsubsection*{Marginal Sensitivity Model}
Consider the Marginal Sensitivity Model of definition \ref{def:MSModel}. When the outcome is continuously distributed, Bayes' rule allows us to rewrite equation \eqref{eq:MSModel} as a density ratio:
\begin{align*}
	 \frac{f_{Y}(\cdot\mid z; \ x)}{f_{Y}(\cdot\mid z'; \ x)} \in \left[\Lambda^{-1},\Lambda\right]
\end{align*}
for $(x,z,z') \in \supp(X) \times \supp(Z)^2$. As in the previous sections, we reparametrize $\Lambda$ as $\lambda = 1 - \Lambda^{-1} \in [0,1]$. The set of densities satisfying this restriction can be viewed as a set of functions satisfying linear inequality constraints. Specifically, we can write the set of restricted densities as
\begin{align}\label{eq:MSM_cont}
	\mathcal{A}_\text{MSM}(\lambda) &= \mathcal{A}_\text{MSM}(\lambda;0) \times \mathcal{A}_\text{MSM}(\lambda;1)
\end{align}
where $\mathcal{A}_\text{MSM}(\lambda;x) \coloneqq \{\textbf{f} \in \mathcal{F}_{\text{den},x}^2: A_\text{MSM}(\lambda)\textbf{f} \leq \textbf{0}\}$ and
\begin{align*}
	A_\text{MSM}(\lambda) \coloneqq \begin{pmatrix}
		-1 & 1-\lambda\\
		1-\lambda & -1
	\end{pmatrix}.
\end{align*} 
Inequalities involving functions $\textbf{f}$ are meant to hold across all $y \in \R$.

\subsubsection*{$c$-dependence}
As defined in equation \eqref{eq:c-indep1}, $c$-dependence is collection of inequalities across values of $y$. Again using Bayes' Rule, we can rewrite these inequalities using conditional densities of $Y(x)$ given the instrument:
\begin{align*}
	-\min\{p_Z + c,1\}(1-p_Z)f_{Y}(y\mid 0;x) + (1 - \min\{p_Z + c,1\})p_Z f_{Y}(y\mid 1;x) &\leq 0\\
	\max\{p_Z - c,0\}(1-p_Z)f_{Y}(y\mid 0;x) + (\max\{p_Z - c,0\} - 1)p_Z f_{Y}(y\mid 1;x) &\leq 0.
\end{align*}

These are densities restricted by linear inequalities that depend on the observed variables only through the marginal distribution of the instrument. The set of densities as restricted by $c$-dependence is given by
\begin{align}\label{eq:cdep_cont}
	\mathcal{A}_\text{$c$-dep}(c) &\coloneqq \mathcal{A}_\text{$c$-dep}(c;0) \times \mathcal{A}_\text{$c$-dep}(c;1)
\end{align}
where $\mathcal{A}_\text{$c$-dep}(c;x) \coloneqq \{\textbf{f} \in \mathcal{F}_{\text{den},x}^2: A_\text{$c$-dep}(c)\textbf{f} \leq \textbf{0}\}$ and
\begin{align}\label{eq:cdep_cont_matrix}
		A_\text{$c$-dep}(c) &\coloneqq \begin{pmatrix}
		-1 & k_1(c)\\
		k_0(c) & -1
	\end{pmatrix}.
\end{align}
We can see that setting $c = 0$ implies that $k_0(c) = k_1(c) = 1$, which mechanically imposes that the conditional densities $f_{Y(x)|Z}(\cdot\mid 0)$ and $f_{Y(x)|Z}(\cdot\mid 1)$ are equal. As a result, we can verify that $c=0$ implies independence of potential outcomes and the instrument, as it does when the outcome is discrete.

\subsubsection*{Supremum Distance}
Using the Kolmogorov-Smirnov as a starting point, we consider a sensitivity model that bounds the supremum distance between densities rather than distribution functions. Hence, we assume that 
\begin{align*}
	\sup_{y \in \R}|f_Y(y\mid 0;x) - f_{Y}(y\mid 1;x)| \leq K/(1-K)
\end{align*}
for $x \in \supp(X)$, for some known $K$ satisfying $K \in [0,1]$.\footnote{We let $K/(1-K) = +\infty$ when $K = 1$.} The sensitivity parameter $K$ bounds the difference between density functions, and we used the strictly increasing mapping $a\mapsto 1/(1-a)$ to span the continuum between independence and no restrictions, as $K=0$ maps to exact equality of densities, and $K=1$ does not impose any restrictions on the dependence of the distribution of $Y(x)$ in $Z$. An alternate mapping from $[0,1]$ to $[0,+\infty]$ could be used instead.

The set of densities as restricted by this sup distance is given by
\begin{align}\label{eq:KS_cont}
	\mathcal{A}_\text{KS}(K) &\coloneqq \mathcal{A}_\text{KS}(K;0) \times \mathcal{A}_\text{KS}(K;1)
\end{align}
where $\mathcal{A}_\text{KS}(K;x) \coloneqq \{\textbf{f} \in \mathcal{F}_{\text{den},x}^2: A_\text{KS}\textbf{f} \leq (K/(1-K), K/(1-K))^\top\}$, where $A_\text{KS}$ is defined as in equation \eqref{eq:KS_Matrix}.

\subsection{A Unifying Sensitivity Model with Continuous Outcomes}

As in Section \ref{sec:binary_outcome}, all these relaxations can be viewed as special cases of a unifying class of relaxations encoding various types of departures from independence.

\begin{het1Assump}[General Sensitivity Model with Continuous Outcomes]\label{assn:ind_relax_cont}
	For a known sensitivity parameter $\theta \in [0,1]$, suppose
\begin{align*}
	\textbf{f}_Y \in \mathcal{A}_0(\theta) \times \mathcal{A}_1(\theta)
\end{align*}
where, for $x \in \{0,1\}$, $\mathcal{A}_x$ satisfies
\begin{enumerate}
	\item(Spanning) $\mathcal{A}_x(0) = \{(f_0,f_1) \in \mathcal{F}_{\text{den},x}^2: f_0 = f_1\}$ and $\mathcal{A}_x(1) = \mathcal{F}_{\text{den},x}^2$;
	\item(Monotonicity) $\mathcal{A}_x(\theta) \subseteq \mathcal{A}_x(\theta')$ when $\theta \leq \theta'$;
	\item(Linearity of Constraints)  The set $\mathcal{A}_x(\theta)$ is a closed convex subset of $\mathcal{F}_{\text{den},x}^2$ characterized by finitely many componentwise weak linear inequalities in the densities for each $\theta \in [0,1]$;
	\item(Continuity) The correspondence $\mathcal{A}_x: [0,1] \rightrightarrows \mathcal{F}_{\text{den},x}^2$ is continuous with respect to the sup-norm.
\end{enumerate}
\end{het1Assump}
The constraint set $\mathcal{A}_0(\theta) \times \mathcal{A}_1(\theta)$ is a convex set of functions defined by linear inequalities that weakly expands as $\theta$ increases. It nests the identified set under the baseline independence assumption ($\theta=0$) and the identified set under no assumptions on the dependence between potential outcomes and instruments ($\theta=1$). The third requirement is that the constraint set is of the form $\mathcal{A}_x(\theta) = \{\textbf{f} = (f_0,f_1) \in \mathcal{F}_{\text{den},x}^2: A(\theta)\textbf{f} \leq a(\theta)\}$ where $A(\theta)$ is a finite dimensional matrix. It involves finitely many componentwise weak inequalities, even though the inequality $A(\theta)\textbf{f} \leq a(\theta)$ hold for infinitely many values on the support $\mathcal{Y}_x$.   As in the binary outcome case, this relaxation encompasses the previous three restrictions.

\begin{proposition}\label{prop:high-level_relax_cont}
	Suppose assumptions \ref{assn:NonTrivialinstrument}, \ref{assn:contsupport}, and \ref{assn:compactdensities} hold. Relabeling $(\lambda,K,c)$ as $\theta \in [0,1]$, the restrictions from equations \eqref{eq:MSM_cont}, \eqref{eq:cdep_cont}, and \eqref{eq:KS_cont} all satisfy Assumption \ref{assn:ind_relax_cont}.
\end{proposition}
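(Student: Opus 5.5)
We verify the four parts of Assumption \ref{assn:ind_relax_cont} separately for each of the three families. For each $x$ and each model, write the constraint set as $\mathcal{A}_x(\theta)=\{\textbf{f}\in\mathcal{F}_{\text{den},x}^2: A(\theta)\textbf{f}\le a(\theta)\}$ with a $2\times 2$ matrix. Under the relabelings $\theta=\lambda$, $\theta=K$, and $\theta=c/\max\{p_Z,1-p_Z\}$ (truncated at 1), the matrices are $A_\text{MSM}(\theta)$, $A_{\text{KS}}$ with right-hand side $\theta/(1-\theta)$, and $A_{c\text{-dep}}(c)$, where $k_0(c),k_1(c)$ are continuous and weakly decreasing in $c$ with $k_z(0)=1$ (Assumption \ref{assn:NonTrivialinstrument} keeps the denominators positive). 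The rescaling of $c$ is needed so that $\theta=1$ corresponds to $k_0=k_1=0$; the paper's relabeling presumably means exactly this.

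\emph{Linearity} is immediate. Each set is an intersection of $\mathcal{F}_{\text{den},x}^2$ (convex, and sup-norm closed as a subset of a compact set, since integration over compact $\mathcal{Y}_x$ and positivity are sup-norm continuous) with finitely many pointwise linear inequalities. Such inequalities define closed convex sets because sup-norm convergence implies pointwise convergence.

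\emph{Spanning.} At $\theta=0$ all three models force $f_0\le f_1\le f_0$: MSM with $\lambda=0$ gives $f_1\le f_0$ and $f_0\le f_1$; KS gives a bound of $0$; $c$-dependence gives $k_0=k_1=1$. At $\theta=1$, MSM gives $-f_0\le 0$ and $-f_1\le0$, which is vacuous for densities, and $c$-dependence gives $k_z=0$, likewise vacuous. For KS we must interpret $\theta/(1-\theta)=+\infty$ as vacuous.

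\emph{Monotonicity.} Each inequality has the form $-f_0+k(\theta) f_1\le 0$ with $k(\theta)$ weakly decreasing and $f\ge0$, or $|f_0-f_1|\le b(\theta)$ with $b$ increasing. So any $\textbf{f}$ satisfying the constraints at $\theta$ also satisfies them at $\theta'\ge\theta$.

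\emph{Continuity} is the main step. Upper hemicontinuity follows from the closed-graph theorem: the range $\mathcal{F}_{\text{den},x}^2$ is compact, and if $\theta_n\to\theta$ and $\textbf{f}_n\to \textbf{f}$ in sup norm with $\textbf{f}_n\in\mathcal{A}_x(\theta_n)$, then passing to the limit pointwise in the continuous inequalities gives $\textbf{f}\in\mathcal{A}_x(\theta)$. For KS at $\theta=1$ this holds trivially.

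For lower hemicontinuity at $\theta$, take $\textbf{f}=(f_0,f_1)\in\mathcal{A}_x(\theta)$ and $\theta_n\to\theta$. We construct $\textbf{f}_n\in\mathcal{A}_x(\theta_n)$ converging to $\textbf{f}$ by shrinking toward the diagonal. Set $\bar f=(f_0+f_1)/2$, which lies in $\mathcal{F}_{\text{den},x}$ by convexity, and $\textbf{f}_n=(1-t_n)\textbf{f}+t_n(\bar f,\bar f)$. This stays in $\mathcal{F}_{\text{den},x}^2$ by convexity and satisfies $\|\textbf{f}_n-\textbf{f}\|_\infty\le t_n\|\textbf{f}\|_\infty$. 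The difference $f_{n,0}-f_{n,1}=(1-t_n)(f_0-f_1)$ shrinks, and each ratio-type inequality interpolates between the $\theta$ constraint and equality. So for KS we may choose $t_n=\max\{0,1-b(\theta_n)/b(\theta)\}\to0$. For the ratio models we need $(1-t_n)$ times the violation to be absorbed. Writing the constraint as $k f_1\le f_0$ for the mixed densities, a direct computation gives a sufficient $t_n$ of order $|k(\theta_n)-k(\theta)|/(1-k(\theta_n))$ or similar. Because $k$ is continuous, this tends to zero whenever $k(\theta)<1$, i.e. $\theta>0$. The case $\theta=0$ is separate but easy: there $f_0=f_1$, and this point lies in every $\mathcal{A}_x(\theta_n)$ by monotonicity. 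Checking this algebra carefully, together with the handling of the compact class $\mathcal{F}_x$ (where convexity is all that is needed), is where the bulk of the work lies.
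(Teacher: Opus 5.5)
\paragraph{Overall assessment.} Your proposal is correct, and its skeleton is the paper's (Lemmas \ref{lemma:cont_MS_prim_assn}--\ref{lemma:cont_KSdist_prim_assn}). You verify the four parts model by model. You get upper hemicontinuity by passing to pointwise limits in the inequalities; your closed-graph-plus-compact-range phrasing just makes explicit the use of Lemma \ref{lemma:compact_subset}. You get lower hemicontinuity by pushing $\textbf{f}$ toward the diagonal with a vanishing convex weight, handling $\theta=0$ by monotonicity.

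\paragraph{Where you differ: the perturbation.} The paper keeps $f_{1n}=f_1$ and moves only $f_0$, via $f_{0n}=(1-\varepsilon_n)f_0+\varepsilon_n f_1$ with a closed-form $\varepsilon_n$. You shrink both coordinates toward $\bar f$, as the paper itself does in the discrete Lemma \ref{lemma:MSM_prim_assn}.

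The algebra you defer does close. Suppose $k f_1\le f_0$ with $k=k(\theta)<1$, and write $k_n=k(\theta_n)$. Then $k_n f_{n,1}\le f_{n,0}$ holds whenever $t_n\ge 2(k_n-k)^+/((1-k)(1+k_n))$, and this condition is monotone in $t_n$. Taking the maximum over the two constraints therefore gives a valid $t_n\to 0$.

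Your symmetric choice buys robustness at the boundary $k(\theta)=0$. The paper's weight contains $(\lambda-\lambda_n)/(\lambda(1-\lambda_n))$, which equals $1$ at $\lambda=1$ whenever $\lambda_n<1$. This is not just a bad formula: with $f_{1n}=f_1$ fixed, the constraint $(1-\lambda_n)f_{0n}\le f_1$ forces $f_{0n}=0$ wherever $f_1=0$. So no perturbation keeping $f_{1n}=f_1$ can converge if $f_0$ is positive at a point where $f_1$ vanishes. The same obstruction appears for $c$-dependence at $c=p_Z$, where $k_0$ first reaches $0$. Your threshold stays well defined there and tends to zero. The paper's one-sided version only buys slightly simpler explicit algebra.

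\paragraph{Two small corrections.} First, no rescaling of $c$ is needed. We have $k_z(1)=0$ because $\max\{\P(Z=1-z)-1,0\}=0$, so the plain relabeling $\theta=c$ already gives $\mathcal{A}_x(1)=\mathcal{F}_{\text{den},x}^2$. Your rescaling is harmless, just unnecessary.

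Second, your KS weight $t_n=\max\{0,1-b(\theta_n)/b(\theta)\}$ equals $1$ at $\theta=1$, where $b=+\infty$, so lower hemicontinuity there is not covered. Use instead $t_n=\max\{0,1-b(\theta_n)/\|f_0-f_1\|_\infty\}$, with $t_n=0$ if $f_0=f_1$. This works for every $\theta$, because $\|f_0-f_1\|_\infty\le b(\theta)$ and $\|f_0-f_1\|_\infty<\infty$ by Assumption \ref{assn:compactdensities}. This is essentially how the paper treats $K=1$.
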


We now state our main result about theoretical properties of the identified set for densities of the potential outcomes.
\begin{theorem}\label{thm:cont_prob_IDset}
	Suppose assumptions \ref{assn:NonTrivialinstrument}, \ref{assn:contsupport}, \ref{assn:compactdensities}, and \ref{assn:ind_relax_cont} hold, and suppose that $\mathcal{F}_{\text{den},x}^2 \cap \mathcal{H}_x \neq \emptyset$ for $x \in \{0,1\}$. Then,
	\begin{enumerate}
		\item The identified set for $\textbf{f}_Y$ is
\begin{align}\label{eq:cont_IDset_corr}
	\Pi(\theta) \coloneqq \Pi_0(\theta) \times \Pi_1(\theta),
\end{align}
where
\begin{align*}
	\Pi_x(\theta) &\coloneqq \mathcal{H}_x \cap \mathcal{A}_x(\theta);
\end{align*}
		\item There exists $\underline{\theta}\in [0,1]$ such that $\Pi(\theta)$ is non-empty for $\theta \geq \underline{\theta}$ and empty for $\theta < \underline{\theta}$;
		\item Suppose $\text{int}(\mathcal{H}_x \cap \mathcal{A}_x(\theta)) \neq \emptyset$ for all $\theta > \underline{\theta}$. Then, the correspondence $\Pi:[\underline{\theta},1] \rightrightarrows \mathcal{F}_{\text{den},0}^2 \times \mathcal{F}_{\text{den},1}^2$ defined by $\Pi(\theta)$ in equation \eqref{eq:cont_IDset_corr} is continuous.
	\end{enumerate}
\end{theorem}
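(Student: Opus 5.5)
The plan is to follow the structure of the proof of Theorem \ref{thm:prob_IDset}, replacing finite-dimensional compactness arguments with compactness in the sup-norm from Assumption \ref{assn:compactdensities}.

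\textbf{Part 1.} Proposition \ref{prop:cont_noassbounds} gives $\mathcal{H}$ as the set of densities consistent with the observables, with no restriction linking different $x$. Assumption \ref{assn:ind_relax_cont} adds only the restriction $\textbf{f}_Y(\cdot;x)\in\mathcal{A}_x(\theta)$, which is separate for each $x$. So the identified set is contained in $\mathcal{H}\cap(\mathcal{A}_0(\theta)\times\mathcal{A}_1(\theta))=\Pi_0(\theta)\times\Pi_1(\theta)$.

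For sharpness, take any $(f_{00},f_{01},f_{10},f_{11})$ in this product. Build a joint distribution of $(Y(0),Y(1),X,Z)$ as follows. Set the distribution of $Z$ and of $X\mid Z$ to the observed ones. For $X=x$, $Z=z$, let $Y(x)$ have the observed conditional density $f_{Y|X,Z}(\cdot\mid x,z)$. For $X=1-x$, $Z=z$, let $Y(x)$ have density
\[
\bigl(f_{xz}-f_{Y|X,Z}(\cdot\mid x,z)\pi(x\mid z)\bigr)/\pi(1-x\mid z).
\]
This density is nonnegative because $f_{xz}\in\mathcal{H}_{x,z}$, and it integrates to one. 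Couple $Y(0)$ and $Y(1)$ independently given $(X,Z)$. The resulting distribution reproduces the observables and satisfies Assumption \ref{assn:exog_or_excl}: set $Y(x,z)=Y(x)$, so weak exclusion holds trivially.

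\textbf{Part 2.} Let $\Theta^*=\{\theta:\Pi(\theta)\neq\emptyset\}$. By monotonicity of $\mathcal{A}_x$, $\Pi_x(\theta)$ is increasing in $\theta$, so $\Theta^*$ is an upper interval. It contains $1$, because $\mathcal{A}_x(1)=\mathcal{F}_{\text{den},x}^2$ and $\mathcal{H}_x\neq\emptyset$ by hypothesis. Set $\underline\theta=\inf\Theta^*$.

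The key claim is that $\underline\theta\in\Theta^*$. Take $\theta_n\downarrow\underline\theta$ and points $\textbf{f}_n\in\Pi(\theta_n)$. The ambient set $\mathcal{F}_x(\mathcal{Y}_x)$ is sup-norm compact, and $\mathcal{H}_x$ is closed in it: pointwise inequalities and the unit-integral constraint both pass to uniform limits on the compact $\mathcal{Y}_x$. So a subsequence converges to some $\textbf{f}^*\in\mathcal{H}$. Continuity, in particular upper hemicontinuity, of $\mathcal{A}_x$ with closed values gives that its graph is closed. Hence $\textbf{f}^*\in\mathcal{A}_x(\underline\theta)$, and $\Pi(\underline\theta)\neq\emptyset$.

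\textbf{Part 3.} Continuity is where I expect the main difficulty.

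Upper hemicontinuity is the easier half. $\Pi_x$ is the intersection of the constant compact-valued correspondence $\mathcal{H}_x$ with the closed-graph correspondence $\mathcal{A}_x$, all inside a compact metric space. Its graph is closed, so it is uhc (Border's results on intersections).

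Lower hemicontinuity is the harder half, because intersections of lhc correspondences need not be lhc. Here I would use the interior hypothesis together with convexity. The standard result says that if $\varphi$ is lhc, $\psi$ has open graph (or is constant with convex values), and $\text{int}(\varphi\cap\psi)\neq\emptyset$, then the intersection is lhc. To apply it, fix $\theta_0>\underline\theta$, $\textbf{f}\in\Pi_x(\theta_0)$, and $\theta_n\to\theta_0$.

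\begin{enumerate}
\item Pick $\textbf{g}\in\text{int}\,\Pi_x(\theta_0)$ and form the segment $\textbf{f}_\epsilon=(1-\epsilon)\textbf{f}+\epsilon\textbf{g}$. By convexity, $\textbf{f}_\epsilon$ lies in the relative interior of $\Pi_x(\theta_0)$.
\item By lhc of $\mathcal{A}_x$, there are points of $\mathcal{A}_x(\theta_n)$ converging to $\textbf{f}_\epsilon$.
\item Because $\textbf{f}_\epsilon$ sits strictly inside the linear inequality constraints defining $\mathcal{H}_x$, points close enough to it eventually lie in $\mathcal{H}_x$. A diagonal argument in $\epsilon$ then gives approximants of $\textbf{f}$ in $\Pi_x(\theta_n)$.
\end{enumerate}

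At the endpoint $\theta_0=\underline\theta$, only sequences with $\theta_n\downarrow\underline\theta$ are relevant. There, monotonicity gives $\Pi_x(\underline\theta)\subseteq\Pi_x(\theta_n)$, so lhc is immediate.

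The delicate point is interpreting "interior" relative to the affine constraint $\int f=1$ and the sup-norm topology. I would work with interior relative to $\mathcal{F}_{\text{den},x}^2$, which is what the hypothesis should mean. Finally, continuity of the product correspondence $\Pi=\Pi_0\times\Pi_1$ follows because products of continuous compact-valued correspondences are continuous.
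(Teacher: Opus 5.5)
Your proposal is correct and follows the paper's proof essentially step for step: Part 1 via Proposition \ref{prop:cont_noassbounds} intersected with $\mathcal{A}_0(\theta)\times\mathcal{A}_1(\theta)$; Part 2 via sup-norm compactness of $\mathcal{F}^2_{\text{den},x}$ and the closed graph of $\mathcal{A}_x$; Part 3 via closed-graph upper hemicontinuity, an interior-point convexity argument for lower hemicontinuity on $(\underline{\theta},1]$, and monotonicity at $\underline{\theta}$. The only differences are that you spell out sharpness explicitly and prove directly the lower-hemicontinuity-of-intersections step that the paper gets by citing Theorem B of Lechicki and Spakowski (1985). The caveat you raise about how to read ``int'' in the infinite-dimensional sup-norm setting applies equally to the paper's argument: the segment step is automatic only for interior in the ambient normed space, whereas under your reading (interior relative to $\mathcal{F}^2_{\text{den},x}$) it would need a further argument.
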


This theorem establishes the main theoretical properties of the identified sets for densities, including their continuity as an infinite-dimensional correspondence. This continuity will carry over to functionals of these densities, in particular to linear or continuous functionals.

In particular, consider the class of linear mappings, for which the sharp bounds can be obtained as the solution to a linear program.
Let
\[
	\Gamma(\textbf{f}) 
	\coloneqq \int_{\mathcal{Y}_0} \omega_0(y)'\textbf{f}_Y(y ; \ 0)dy + \int_{\mathcal{Y}_1} \omega_1(y)'\textbf{f}_Y(y ; \ 1)dy
\]
where, for $x = 0,1$, $\omega_x$ is a known weight function that maps $\R$ to $\R^2$. The $\Gamma$ mapping is used to characterize a functional of the conditional densities of $Y(x)\mid Z$.

For example, with $\omega_1(y) = -\omega_0(y) = (y (1-p_Z), y p_Z)$, we have that
\begin{align*}
	\Gamma(\textbf{f}_Y) &= \int_{\mathcal{Y}_1} y \left(p_Zf_{Y}(y\mid 1; \ 1) + (1-p_Z)f_{Y}(y\mid 0 ; \ 1)\right) dy\\
	&\qquad - \int_{\mathcal{Y}_0} y \left(p_Z f_{Y}(y\mid 1; \ 0) + (1-p_Z)f_{Y}(y\mid 0 ; \ 0)\right) dy\\
	&= \int_{\mathcal{Y}_1} y f_{Y(1)}(y) dy - \int_{\mathcal{Y}_0} y f_{Y(0)}(y) dy\\
	&= \Exp[Y(1)] - \Exp[Y(0)],
\end{align*}
the average treatment effect. Letting $\omega_x(y) = \indicator(y \leq a)$ and $\omega_{1-x}(y) = 0$ yields $\Gamma(\textbf{f}_Y) = \P(Y(x) \leq a)$, the cumulative distribution function evaluated at $a \in \R$.. This choice can be used to obtain bounds on quantiles of $Y(x)$ or on the quantile treatment effect $\text{QTE}(\tau) \coloneqq Q_{Y(1)}(\tau) - Q_{Y(0)}(\tau)$ for a quantile index $\tau \in (0,1)$.

The proposition below shows that bounds on these functionals are continuous and monotonic. This result uses the Maximum Theorem \citep{Berge1959} applied to an infinite-dimensional correspondence. 
Let
\[
	\overline{\Gamma}(\theta)
	\coloneqq \sup_{\textbf{f}_1 \in \Pi_1(\theta)} \int_{\mathcal{Y}_1} \omega_1(y_1)'\textbf{f}_1(y_1) \; dy_1 
	+ \sup_{\textbf{f}_0 \in \Pi_0(\theta)}\int_{\mathcal{Y}_0} \omega_0(y_0)'\textbf{f}_0(y_0) \; dy_0
\]
and
\[
	\underline{\Gamma}(\theta) 
	\coloneqq \inf_{\textbf{f}_1 \in \Pi_1(\theta)} \int_{\mathcal{Y}_1} \omega_1(y_1)'\textbf{f}_1(y_1) \; dy_1 
	+ \inf_{\textbf{f}_0 \in \Pi_0(\theta)}\int_{\mathcal{Y}_0} \omega_0(y_0)'\textbf{f}_0(y_0) \; dy_0
\]
denote the lower and upper bounds of the functional over the sets $\Pi_x(\theta)$, $x=0,1$.
\begin{corollary}\label{corr:cont_functionalbounds}
Suppose the assumptions of Theorem \ref{thm:cont_prob_IDset} hold. Let $\|\omega_x(\cdot)\|_\infty < \infty$. Then,
\begin{enumerate}
\item Let $\theta\in [\underline{\theta},1]$. The identified set for $(f_{Y(0)}, f_{Y(1)})$ is $I_0(\theta) \times I_1(\theta)$ where $I_x(\theta) \coloneqq \{f_0 (1-p_Z) + f_1 p_Z: (f_0,f_1) \in \Pi_x(\theta)\}$ when $\theta\in [\underline{\theta},1]$, and the empty set when $\theta < \underline{\theta}$;

\item The functions $\underline{\Gamma}(\theta)$ and $\overline{\Gamma}(\theta)$ are continuous and monotonic over $\theta \in [\underline{\theta},1]$.

\item Let $\theta \in [\underline{\theta},1]$. The identified set for $\Gamma(\textbf{f}_{Y})$ is $[\underline{\Gamma}(\theta), \overline{\Gamma}(\theta)]$.
\end{enumerate}
\end{corollary}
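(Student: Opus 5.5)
The plan is to derive all three parts from Theorem \ref{thm:cont_prob_IDset}, using the fact that $\Gamma$ and the map $(f_0,f_1)\mapsto f_0(1-p_Z)+f_1p_Z$ are linear and sup-norm continuous.

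\emph{Part 1.} By the law of total probability, $f_{Y(x)} = (1-p_Z) f_Y(\cdot\mid 0;x) + p_Z f_Y(\cdot\mid 1;x)$. Hence the identified set for $(f_{Y(0)},f_{Y(1)})$ is the image of the identified set for $\textbf{f}_Y$ under this map.
\begin{itemize}
\item By part 1 of Theorem \ref{thm:cont_prob_IDset}, that identified set is $\Pi_0(\theta)\times\Pi_1(\theta)$.
\item The map acts separately on the $x=0$ and $x=1$ blocks, so the image is exactly $I_0(\theta)\times I_1(\theta)$.
\item For $\theta<\underline\theta$, $\Pi(\theta)=\emptyset$ by part 2 of the theorem, so the image is empty.
\end{itemize}
Part 3 then follows once we show the image of $\Pi(\theta)$ under $\Gamma$ is the interval $[\underline\Gamma(\theta),\overline\Gamma(\theta)]$. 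Three facts give this.
\begin{itemize}
\item $\Gamma$ splits as $\Gamma_0(\textbf{f}_0)+\Gamma_1(\textbf{f}_1)$, with the two pieces optimized over the separate factors of the product set. So the sup (inf) of $\Gamma$ over $\Pi(\theta)$ is the sum of the two sups (infs), which are exactly $\overline\Gamma(\theta)$ and $\underline\Gamma(\theta)$.
\item Each $\Pi_x(\theta)$ is convex, being the intersection of the convex sets $\mathcal H_x$ (defined by linear pointwise lower bounds inside the convex $\mathcal{F}_{\text{den},x}$) and $\mathcal{A}_x(\theta)$. Since $\Gamma$ is linear, its image of the convex set $\Pi(\theta)$ is an interval.
\item The sup and inf are attained, so the interval is closed. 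To see this, note that $\Pi_x(\theta)$ is closed: the constraints in $\mathcal H_x$ are closed under uniform limits, and $\mathcal A_x(\theta)$ is closed by Assumption \ref{assn:ind_relax_cont}. As a closed subset of the sup-norm compact $\mathcal{F}_x(\mathcal{Y}_x)^2$ (Assumption \ref{assn:compactdensities}), it is compact.
\end{itemize}
Attainment needs $\Gamma_x$ to be sup-norm continuous. Because $\mathcal{Y}_x$ is a compact interval and $\|\omega_x\|_\infty<\infty$,
\[
|\Gamma_x(\textbf{f})-\Gamma_x(\textbf{g})|\le 2\|\omega_x\|_\infty\,|\mathcal{Y}_x|\,\|\textbf{f}-\textbf{g}\|_\infty .
\]
So $\Gamma_x$ is Lipschitz and the extrema are attained.

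\emph{Part 2.}
\begin{itemize}
\item \emph{Monotonicity.} $\mathcal A_x(\theta)\subseteq\mathcal A_x(\theta')$ for $\theta\le\theta'$, so $\Pi_x(\theta)\subseteq\Pi_x(\theta')$. Therefore $\overline\Gamma$ is nondecreasing and $\underline\Gamma$ is nonincreasing.
\item \emph{Continuity.} Apply Berge's Maximum Theorem to $\theta\mapsto\sup_{\textbf{f}\in\Pi_x(\theta)}\Gamma_x(\textbf{f})$ on $[\underline\theta,1]$. Its hypotheses are: the objective $\Gamma_x$ is continuous and independent of $\theta$; the correspondence $\Pi_x$ is continuous with nonempty compact values.
\item Continuity of $\Pi=\Pi_0\times\Pi_1$ is part 3 of Theorem \ref{thm:cont_prob_IDset}. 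Continuity of each factor $\Pi_x$ follows, since a projection of a continuous product correspondence is continuous.
\item Nonemptiness on $[\underline\theta,1]$ is part 2 of the theorem, and compactness was shown above.
\end{itemize}
Berge then gives continuity of each sup and inf, and sums of continuous functions are continuous.

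The main obstacle is this last step. Berge's theorem is usually stated for metric spaces, so it has to be justified in the infinite-dimensional setting: we must confirm that the compactness of values holds in the sup-norm metric in which $\Pi_x$ is continuous. Since $(\mathcal{F}_x(\mathcal{Y}_x)^2,\|\cdot\|_\infty)$ is a compact metric space, the standard metric-space version of Berge's theorem (e.g.\ in \cite{Border1985}) applies directly.
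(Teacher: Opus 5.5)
Your proposal is correct and follows essentially the same route as the paper. Part 1 comes from the mixture identity $f_{Y(x)} = (1-p_Z)f_Y(\cdot\mid 0;x) + p_Z f_Y(\cdot\mid 1;x)$ together with the product structure of $\Pi(\theta)$. The interval characterization (your Part 3) comes from splitting the sup/inf across the two factors, with attainment by continuity on a compact set and convexity filling in the interior. Continuity and monotonicity (your Part 2) come from $\Pi_x(\theta)\subseteq\Pi_x(\theta')$ plus the Maximum Theorem applied to the continuous, nonempty, compact-valued correspondence from part 3 of Theorem \ref{thm:cont_prob_IDset}.

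The differences are only cosmetic. You apply Berge factor by factor, obtaining continuity of each $\Pi_x$ by projection, and you check sup-norm continuity of $\Gamma_x$ and compactness of $\Pi_x(\theta)$ explicitly. The paper instead applies the infinite-dimensional Maximum Theorem (Theorem 17.31 of \cite{AliprantisBorder2006}) directly to $\Pi$ and leaves those checks implicit.
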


Therefore, as in the discrete case, bounds a can be obtained in the continuous case through infinite-dimensional linear programming. To make this approach feasible, we show in the next section how to convert an infinite-dimensional linear program into a feasible, finite-dimensional linear program that can be directly implemented.

\subsection{Computation}\label{sec:comp}

The identified set $\Pi(\theta)$ is an infinite-dimensional set of continuous densities. If we restrict attention to the class of linear functionals described in Corollary \ref{corr:cont_functionalbounds}, the corresponding identified set is an interval (or the empty set). However, Corollary \ref{corr:cont_functionalbounds} characterizes this interval by optimization over the infinite-dimensional spaces $\Pi(\theta)$, which is generally not feasible to compute directly. In this section, we discuss one approach to computing these identified sets by approximating the infinite-dimensional space of densities with a finite-dimensional sieve space and the constraint sets with a finite set of constraints. Similar approximations of identified sets have been used, for example, in \cite{MogstadSantosTorgovitsky2018}. Alternatively, the computational approach developed in \cite{ChristensenConnault2023} could be adapted to our setting. Unlike the sieve-based approach we consider below, the dimension of their optimization problem does not depend on the precision of the density approximation. We leave the application of their approach to our problem to future work.

For simplicity, let $\mathcal{Y}_x = [0,1]$ for $x \in \{0,1\}$. This restriction can be relaxed by linearly transforming the outcome variable so that it has support on the unit interval. We also assume that $\mathcal{F} \coloneqq \mathcal{F}_0 = \mathcal{F}_1$, and therefore $\mathcal{F}_{\text{den}} \coloneqq \mathcal{F}_{\text{den},0} = \mathcal{F}_{\text{den}, 1}$. We also impose assumptions \ref{assn:contsupport} and \ref{assn:compactdensities}. 

We will approximate $\mathcal{F}_{\text{den}}$ by the convex sieve space $\mathcal{F}_M$, defined by 
\begin{align*}
 \mathcal{F}_{M} &\coloneqq \left\{f^M = w^{\top}\textbf{b}^M: w \in \Delta_M \right\},
\end{align*}
where $\textbf{b}^M \coloneqq \{b_0^M, b_1^M, \ldots,b_{M}^M\}$ are the $M$-degree Bernstein basis polynomials scaled by $M + 1$. That is,
$$
b^M_m(y) \coloneqq (M+1) \binom{M}{m} y^m (1-y)^{M-m}
$$
for $m \in \{0, \ldots, M\}$.

Since $\mathcal{F}_M$ is increasing in $M$ and $\bigcup_{M:M > 0}\mathcal{F}_M$ is dense in $\mathcal{F}_{\text{den}}$, $\mathcal{F}_M$ is a sieve space for $\mathcal{F}_{\text{den}}$. We denote the Bernstein polynomial approximation to function $f$ at $y \in [0,1]$ as
\begin{align*}
 (B_M f)(y) &\coloneqq \frac{1}{M + 1}\sum_{m=0}^M f\left(\frac{m}{M}\right) b_m^M(y).
\end{align*}

We also define approximate constraint sets, which are characterized by a finite number of linear equality or inequality constraints. First, we approximate $\mathcal{H}_x$ by the sets,
\begin{align*}
 \mathcal{H}_x^M \coloneqq \{(f_1,\ldots,f_{s_Z}) \in \mathcal{F}_M^{s_Z} : f_j(\cdot) \geq \pi(x \mid z_j) (B_M f_{Y \mid X, Z})( \cdot \mid x, z_j) \text{ for } j = 1, \ldots, s_Z\},
\end{align*}
where $\mathcal{F}_M^{s_Z}$ is the $s_Z$-fold Cartesian product of $\mathcal{F}_M$. In the proposition below, we show that replacing $f_{Y \mid X, Z}$ by its Bernstein approximation is sufficient to characterize this set by a finite number of linear constraints.

Next, we approximate $\mathcal{A}(\theta)$ using a finite set of inequalities. Each model in Section \ref{sec:cont_Y} uses linear inequalities: $\mathcal{A}(\theta) = \{\mathbf{f} \in \mathcal{F}^{s_Z}_{\text{den}} : A(\theta)\mathbf{f} \le a(\theta)\}$. We use a grid of $N$ points in $[0,1]$ (for example, $y_n = n/(N + 1)$ for $n = 1, ..., N$), and then define $\mathcal{A}^{M,N}(\theta)$ as all $\textbf{f} \in \mathcal{F}_M^{s_Z}$ such that $A(\theta) \textbf{f}(y_n) \le a(\theta)$ for each grid point.

The approximate identified set for $\textbf{f}_Y$ is $\Pi_0^{M,N}(\theta) \times \Pi_1^{M,N}(\theta)$, where $\Pi_x^{M,N}(\theta)$ is the intersection of $\mathcal{A}^{M,N}(\theta)$ and $\mathcal{H}_x^M$. The next proposition gives a more convenient representation of this set for computation. Here, $\bar{\Delta}_s^r \coloneqq \left\{\begin{bmatrix} a_1 & \cdots & a_r \end{bmatrix}^{\top} : a_j \in \Delta_s \text{ for } j = 1, \ldots, r\right\}$. $\operatorname{vec}(W)$ is the vectorization of matrix $W$, and $\otimes$ is the Kronecker product.

\begin{proposition}\label{prop:approxIDset}
 For $\mathcal{A}(\theta) = \{\mathbf{f} \in \mathcal{F}_{\text{den}}^{s_Z} : A(\theta) \mathbf{f} \le a(\theta)\}$, $N,M \in \mathbb{N}$, and $\{y_1, \ldots, y_N\} \subset [0,1]$, the approximate constraint sets $\mathcal{A}^{M,N}(\theta)$ and $\mathcal{H}_x^M$ can be represented as
\begin{align*}
 \mathcal{A}^{M,N}(\theta) = \{ W \textbf{b}^M : W \in \mathcal{W}^{M,N}(\theta) \}
\end{align*} 
and
\begin{align*}
 \mathcal{H}_x^M = \{ W \textbf{b}^M : W \in \mathcal{W}_x^M \}
\end{align*}
where
\begin{align*}
 \mathcal{W}^{M,N}(\theta) &\coloneqq \left\{ W \in \bar{\Delta}_{M}^{s_Z} : \left( (B^{M,N})^{\top} \otimes A(\theta) \right) \operatorname{vec}(W) \le \iota_{N} \otimes a(\theta) \right\} \\
 \mathcal{W}_x^{M} &\coloneqq \left\{ \textbf{D}_x \Xi^M_x + \textbf{D}_{1-x} W : W \in \bar{\Delta}_{M}^{s_Z} \right\}.
\end{align*}
In $\mathcal{W}^{M,N}(\theta)$, we define $B^{M,N} \coloneqq \begin{bmatrix} \textbf{b}^M(y_1) & \ldots & \textbf{b}^M(y_{N}) \end{bmatrix}$ and $\iota_{N}$ to be the $N$-dimensional vector of ones. In $\mathcal{W}_x^M$, we define $\textbf{D}_x \coloneqq \operatorname{diag}(\pi(x \mid z_1), \ldots, \pi(x \mid z_{s_Z}))$ and $\Xi^M_x$ to be the $s_Z \times (M + 1)$ matrix with elements $f_{Y \mid X, Z}\left( \frac{m-1}{M} \mid x, z_j\right)$ in the $(j, m)$-th position.
\end{proposition}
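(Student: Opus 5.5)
The plan is to prove the two representations separately. Both follow from the fact that every element of $\mathcal{F}_M^{s_Z}$ can be written as $W\textbf{b}^M$ for some $W \in \bar{\Delta}_M^{s_Z}$: stack the weight vectors $w_j \in \Delta_M$ of the components $f_j = w_j^\top \textbf{b}^M$ as the rows of $W$. I would first record that this is a bijection between $\bar{\Delta}_M^{s_Z}$ and $\mathcal{F}_M^{s_Z}$. Injectivity holds because the Bernstein basis polynomials $b_0^M,\ldots,b_M^M$ are linearly independent. Each $w_j^\top\textbf{b}^M$ is a density on $[0,1]$ because $b_m^M \geq 0$ and $\int_0^1 b_m^M(y)\,dy = 1$; the latter is why the basis is scaled by $M+1$. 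Membership in $\mathcal{F}_M^{s_Z}$ is then a statement purely about $W$.

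For $\mathcal{A}^{M,N}(\theta)$, take $\textbf{f} = W\textbf{b}^M$. Then $\textbf{f}(y_n) = W\textbf{b}^M(y_n)$, and the $n$-th grid constraint reads $A(\theta)W\textbf{b}^M(y_n) \leq a(\theta)$. I would vectorize this using the identity $\operatorname{vec}(PQR) = (R^\top \otimes P)\operatorname{vec}(Q)$ with $P = A(\theta)$, $Q = W$ and $R = \textbf{b}^M(y_n)$. This turns the constraint into $(\textbf{b}^M(y_n)^\top \otimes A(\theta))\operatorname{vec}(W) \leq a(\theta)$. Stacking over $n = 1,\ldots,N$ gives the block rows of $(B^{M,N})^\top \otimes A(\theta)$, since the $n$-th row of $(B^{M,N})^\top$ is $\textbf{b}^M(y_n)^\top$. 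The right-hand side stacks to $\iota_N \otimes a(\theta)$. Combined with the bijection above, this gives $\mathcal{A}^{M,N}(\theta) = \{W\textbf{b}^M : W \in \mathcal{W}^{M,N}(\theta)\}$, with both inclusions holding simultaneously because every step is an equivalence.

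For $\mathcal{H}_x^M$, the key observation is that $(B_M f_{Y|X,Z})(\cdot \mid x,z_j) = \frac{1}{M+1}\sum_m f_{Y|X,Z}(m/M \mid x,z_j)\, b_m^M$. In coefficient form this is the $j$-th row of $\Xi_x^M$ (up to the $1/(M+1)$ normalization, which I would track and absorb into the definition of $\Xi^M_x$). Hence $\pi(x\mid z_j)\,(B_M f_{Y|X,Z})$ has coefficient row $\pi(x\mid z_j)\,\Xi^M_{x,j}$. For the inclusion $\supseteq$, take $W \in \bar\Delta_M^{s_Z}$ and form $\textbf{D}_x\Xi_x^M + \textbf{D}_{1-x}W$. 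Its $j$-th row yields $f_j = \pi(x\mid z_j)B_Mf_{Y|X,Z} + \pi(1-x\mid z_j)\,w_j^\top\textbf{b}^M$. The second term is nonnegative, so the lower-bound constraint holds. Normalization follows because $\pi(x\mid z_j) + \pi(1-x\mid z_j) = 1$, provided the Bernstein approximation of the observed conditional density integrates to one, i.e.\ its coefficients sum to one; I would verify this, or impose it as part of the approximation, at this step.

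The main obstacle is the reverse inclusion $\subseteq$. There I must show that if $f_j \in \mathcal{F}_M$ satisfies $f_j \geq \pi(x\mid z_j)B_Mf_{Y|X,Z}$ pointwise, then the difference has \emph{nonnegative Bernstein coefficients}, so that after dividing by $\pi(1-x\mid z_j)$ it lies in $\mathcal{F}_M$. Pointwise nonnegativity of a polynomial does not in general imply nonnegative Bernstein coefficients at a fixed degree $M$. I would therefore first check whether the intended meaning of ``$\geq$'' in $\mathcal{H}_x^M$ is the coefficientwise (cone) ordering induced by the basis. That reading makes the equivalence immediate, and I would state the proposition with this interpretation. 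If the pointwise reading is intended, I would instead argue that the coefficientwise set is a (possibly strict) inner approximation, and note that it still converges as $M\to\infty$ by degree elevation.
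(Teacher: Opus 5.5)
Your proposal follows the paper's route. For $\mathcal{A}^{M,N}(\theta)$ the paper does exactly what you do: it writes the $N$ grid constraints as $A(\theta)WB^{M,N}\le a(\theta)\iota_N^\top$ and vectorizes.

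For $\mathcal{H}_x^M$ the paper reaches the same place, but without saying so. It writes $\mathbf{f}=\textbf{D}_x\mathbf{f}_Y(\cdot;x)+\textbf{D}_{1-x}\mathbf{q}$ with a slack density $\mathbf{q}$. It then states that the approximation replaces $f_{Y|X,Z}$ by its Bernstein approximation and \emph{imposes} $\mathbf{q}\in\mathcal{F}_M^{s_Z}$. That is your coefficientwise reading adopted by definition; the step from the pointwise inequality to a slack with coefficients in $\Delta_M$ is never proved. The paper's proof also uses $B_Mf=\sum_m f(m/M)\,b^M_m$, which drops the $1/(M+1)$ from its own definition of $B_M$; this is why $\Xi^M_x$ carries no such factor. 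It also never checks that the rows of $\textbf{D}_x\Xi^M_x+\textbf{D}_{1-x}W$ lie in $\Delta_M$. So the three issues you flag are defects of the statement, not of your argument, and what you propose to prove is what the paper's proof actually establishes.

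You should resolve your hedges rather than leave them open, because each one genuinely fails.

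\emph{Pointwise reading.} It is false. Take $M=2$, $f_{Y|X,Z}(\cdot\mid x,z_j)\equiv 1$ (so $B_Mf_{Y|X,Z}\equiv 1$ with coefficients $(\tfrac13,\tfrac13,\tfrac13)$), $\pi(x\mid z_j)=\tfrac34$, and $f_j=\tfrac12(b^M_0+b^M_2)\in\mathcal{F}_M$. Then $f_j-\tfrac34=3(y-\tfrac12)^2\ge 0$, but the unique slack coefficient vector is $4\bigl((\tfrac12,0,\tfrac12)-(\tfrac14,\tfrac14,\tfrac14)\bigr)=(1,-1,1)\notin\Delta_2$.

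\emph{Normalization.} It cannot be verified, only imposed. The coefficients of $B_Mf$ sum to $\frac{1}{M+1}\sum_{m=0}^M f(m/M)$, which equals $\frac{2M+1}{2M}$ for $f(y)=3y^2$. Then, even after restoring the factor $1/(M+1)$, row $j$ of every matrix in $\mathcal{W}^M_x$ sums to $1+\pi(x\mid z_j)/(2M)>1$. Hence $\{W\textbf{b}^M:W\in\mathcal{W}^M_x\}$ is disjoint from $\mathcal{H}^M_x\subseteq\mathcal{F}_M^{s_Z}$. State the result with each row of $\Xi^M_x$ normalized to sum to one, which also absorbs the missing $1/(M+1)$, and with the coefficientwise definition of $\mathcal{H}^M_x$.

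\emph{Degree elevation.} Drop this remark. A nonzero polynomial that is nonnegative on $[0,1]$ and vanishes at an interior point, such as $3(y-\tfrac12)^2$ above, has a negative Bernstein coefficient at every degree. To see this, evaluate at the root, where every basis polynomial is strictly positive. So degree elevation alone never closes this gap. A convergence claim would need a separate argument, e.g.\ first perturbing toward a strictly feasible point.
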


This proposition shows that the approximate identified set, $\prod_{x \in \{0, 1\}} \left(\mathcal{A}(\theta)^{M,N} \cap \mathcal{H}_x^M\right)$ can be characterized by a finite number of linear constraints. Following Corollary 2, we use this result to characterize the approximate identified set of a functional of $\textbf{f}_Y$ as the solution to a finite linear program. 

Approximating the functional $\Gamma(\textbf{f}_Y) = \int_{\mathcal{Y}_0} \omega_0(y)^{\top} f_0(y) dy + \int_{\mathcal{Y}_1} \omega_1(y)^{\top} f_1(y) dy$ with a Riemann sum with $L$ points, we can characterize $\underline{\Gamma}^{M,N,L}(\theta)$ as the solution to the linear program,
\begin{align}
\minimize_{\substack{W_1, W_{1,0},\\ W_0, W_{0,1}}} \quad &\frac{1}{L} \sum_{n=0}^{L-1} \left( \omega_1\left(\frac{n}{L}\right)^{\top} W_1 + \omega_0\left(\frac{n}{L}\right)^{\top} W_0 \right) \textbf{b}^M\left(\frac{n}{L}\right) \notag \\
\text{subject to} \quad 
&W_{0}, W_{0,1}, W_{1}, W_{1,0} \in \bar{\Delta}_{M}^{s_Z} \notag \\
&W_1 - \textbf{D}_1 \Xi^M_1 - \textbf{D}_0 W_{1,0} = 0 \label{eq:LPconstraints1}\\
&W_0 - \textbf{D}_0 \Xi^M_0 - \textbf{D}_1 W_{0,1} = 0 \label{eq:LPconstraints2}\\
&\left((B^{M,N})^{\top} \otimes A(\theta)\right) \operatorname{vec}(W_1) \le \iota_{N} \otimes a(\theta) \label{eq:LPconstraints3}\\
&\left((B^{M,N})^{\top} \otimes A(\theta)\right) \operatorname{vec}(W_0) \le \iota_{N} \otimes a(\theta) \label{eq:LPconstraints4}
\end{align} 
The linear inequalities \eqref{eq:LPconstraints3} and \eqref{eq:LPconstraints4} correspond to the constraints that $W_1, W_0 \in \mathcal{W}^{M,N}(\theta)$, and the equality constraints \eqref{eq:LPconstraints1} and \eqref{eq:LPconstraints2} together with the simplex constraints on $W_{x,1-x}$ for $x \in \{0, 1\}$ correspond to the constraints that $W_1 \in \mathcal{W}_1^M$ and $ W_0 \in \mathcal{W}_0^M$ respectively. The optimization program is therefore a linear program in the $s_Z \times (M + 1)$ weight matrices $W_1, W_{1,0}, W_0, W_{0,1}$, which can be solved using standard software.

$\overline{\Gamma}^{M,N,L}(\theta)$ is the solution to the corresponding maximization problem, which is also a linear program.

Since $\Pi^{M,N}(\theta)$ is closed, bounded, and convex, the approximate identified set is 
\begin{align*}
	[\underline{\Gamma}^{M,N,L}(\theta), \overline{\Gamma}^{M,N,L}(\theta)] = \{ \Gamma^{M,N,L}(\textbf{f}_Y) : \textbf{f}_Y \in \Pi^{M,N}(\theta) \}.
\end{align*}

Although we omit a full analysis, we expect that $\underline{\Gamma}^{M,N,L}(\theta)$ and $\overline{\Gamma}^{M,N,L}(\theta)$ will converge to $\underline{\Gamma}(\theta)$ and $\overline{\Gamma}(\theta)$ respectively as $M, N, L \rightarrow \infty$ under suitable regularity conditions.

\section{Empirical Application}\label{subsec:empirical}

Here we revisit the empirical study of peer effects in consumer demand by \cite{GilchristSands2016}. Specifically, they study whether movie viewership is affected by peer viewership choices. They provide evidence that movie viewership can have ``momentum" from one weekend to the next. They argue that this is partly because if a movie does well on its opening weekend, it motivates people to see it in subsequent weekends, so they can discuss it with their peers or attend it as a social event.

Identifying this effect is a challenging empirical problem: an apparent peer effect on consumer demand could simply reflect a common understanding of the movie’s unobserved quality. To address this, the authors use a classic instrumental variables approach, using weather as an instrument for opening weekend viewership. They argue that outdoor activities are a substitute for going to the movies, so days with especially nice weather provide a plausibly negative, exogenous shock to viewership.

While its inherent randomness makes weather an appealing instrument, recent literature has cast doubt on its validity as an instrument in many contexts (e.g., \citealt{Mellon2025}). For this application, we highlight three potential violations of the exclusion assumption: (1)  social learning about movie quality, (2) dynamic consumer behavior, and (3) dynamic behavior by movie studios.

\cite{GilchristSands2016} acknowledge that social learning is an important alternative explanation for the observed momentum in movie viewership. The concern is that consumers may be uncertain about a movie's quality and rely on their peers to learn about it. When viewership is high, there is a higher probability that a consumer has friends who have seen the movie and can share their opinion of it. For more reluctant consumers, they may wait until they have good information about the film's quality before seeing it. This is a similar but distinct mechanism from the social incentive that the authors are interested in.

One approach would be to redefine the ``peer effect'' to include this learning effect; however, \cite{GilchristSands2016} are clear that they are interested in the direct social incentive to see the movie. Instead, they explore whether there are learning effects by testing an implication from a model of social learning in \cite{Young2009}. This auxiliary model introduces several additional strong behavioral and distributional assumptions, and the results are not decisive. They conclude that ``Although our estimates do not rule out some role for learning, taken together the results suggest that the observed momentum is driven in part by a preference for shared experience, and not only by learning.'' \citep[][p.1342]{GilchristSands2016}.

Dynamic behavior could also lead to violations of exclusion. When a consumer skips seeing a particular movie one weekend to enjoy the weather, she may simply plan to see the movie on a future weekend. However, the set of available movies in that future weekend is often different, possibly leading them to make a different choice about what movie to see altogether. Similarly, movie studios may respond to first-weekend viewership by adjusting their advertising strategy, which could affect subsequent viewership.

Finally, we note an additional challenge to the exogeneity condition which \cite{GilchristSands2016} address directly in their main specifications. Movie studios may strategically time movie release dates based on seasonal weather patterns, inducing a correlation between weather shocks and unobserved movie quality. To address this problem, the authors condition on several calendar controls, including the week of the year, the year, and holiday indicators. Since movie studios have to release movies based on their expectations of the weather far in advance rather than short-term forecasts, they argue that this strategic behavior should be captured by these calendar controls. In our analyses, we follow their approach of controlling for these time-of-year variables. However, this could still be insufficient if movie studios use more accurate long-term weather forecasts than the average weather for that week of the year.

These potential violations of the exclusion and exogeneity assumptions motivate the importance of assessing sensitivity in this application.

\subsection{Data and Definitions}\label{data-and-definitions}

We use the dataset assembled by \cite{GilchristSands2016} for our analysis. Viewership data on daily ticket sales is obtained from the Internet Movie Database (IMDb) for all movies released between 2002 and 2013. The sample is restricted to movies that were in theaters for at least six weeks, and uses only data on ticket sales on Friday, Saturday, and Sunday.

The instruments are measures of the weather on each weekend. These data come from Weather Underground and consist of (1) the daily maximum temperature, (2) inches of rain, and (3) inches of snow in \(1,941\) weather stations across the country. In order to create national aggregate measures, weather station-level data is weighted by $\omega_j = \frac{n_j}{\sum_j n_j}$ for each weather station $j$ where $n_j$ is the number of movie theaters for which $j$ is the closest weather station.\footnote{To do this, they first assign each zip code to the closest weather station, and obtain the number of movie theaters in each zip code from the U.S. Census Zip Code Business Patterns data.} For any weather station-level weather measure, $Z_{tj}$, the aggregate instrument is, $Z_{t} = \sum_j \omega_j Z_{tj}$.

We define the potential outcome, $Y_{i}(x)$, to be the viewership of movie $i$ in the second weekend of its release with or without a negative shock to viewership in the opening weekend, $x$. The treatment is binary, with $x = 1$ when opening-weekend viewership is below its 25th percentile. This specification of the treatment is motivated by the observation in \cite{GilchristSands2016} that good weather tends to suppress viewership.

We want to ask whether such a negative shock to initial viewership increases the probability of low viewership in subsequent weekends through peer effects. We begin by defining low viewership in the second weekend analogously to the treatment. Specifically, we consider the summary outcome $\mathbf{1}(Y_i(x) \le \underbar{y})$, where $\underbar{y}$ is the 25th percentile of viewership in the second weekend across all movies. The natural parameter of interest is the average treatment effect (ATE), $\mathbb{E}(\mathbf{1}(Y_i(1) \le \underbar{y}) - \mathbf{1}(Y_i(0) \le \underbar{y}))$. This is the effect of a negative shock to opening weekend viewership on the probability of low viewership in the second weekend. Moving beyond this coarse measure of low viewership in the second weekend, we also consider quantile treatment effects across the distribution of viewership in that weekend. That is for, a range of quantiles $\tau \in (0,1)$, we consider the parameter $\text{QTE}(\tau) = Q_{Y_i(1)}(\tau) - Q_{Y_i(0)}(\tau)$ where $Q_{Y_i(x)}(\tau)$ is the $\tau$th quantile of the distribution of $Y_i(x)$.

To minimize endogeneity between movie quality and opening weekend weather, we follow the approach of Gilchrist and Sands (2016) and residualize all variables (viewership in the first and second weekends and the weather instrument) using a set of week-of-year dummies. We use their preferred weather instrument, the share of theaters with a daily high temperature between 75 and 80 degrees Fahrenheit, which we discretize into quintiles. Finally, we condition on this same weather variable on the second weekend. This helps control for potential serial correlation in weather across weekends, which is not captured by the week-of-year dummies.

\subsection{Sensitivity Analysis}\label{sensitivity-analysis}

We begin with the discretized outcome. Under the baseline of exogeneity, we find that a negative shock on viewership in the initial weekend increases the probability of low viewership in the second weekend. The estimated identified set for the ATE is $[0.04, 0.87]$. This result, which bounds the ATE above zero, is qualitatively consistent with the conclusion of Gilchrist and Sands (2016), who find a positive effect of opening weekend viewership on subsequent weekend viewership using a 2SLS estimator. While the lower bound is small, this means that peer effects increase the probability of low viewership in the second weekend by at least $4.4\%$, which is a quantitatively important effect size.

We find, however, that this conclusion is sensitive to relatively small violations of the exogeneity assumption. In Table \ref{tab:ate_bounds} we present the estimated ATE bounds for different levels of $c$-dependence. The interval between the lower and upper lines is the identified set for the ATE at each level of $c$-dependence. Even at low levels of $c$-dependence, the identified set for the ATE includes zero. The lowest level of $c$-dependence at which the identified set for the ATE includes $0$ -- the \textit{breakdown point} -- is $0.015$, or when the latent propensity score is allowed to be 1.5 percentage points away from the observed propensity score.

\begin{table}[tbh]
\begin{center}
\begin{tabular}{cc}
\hline
$c$ & Estimated ATE bounds \\
\hline
0.000 & [0.038, 0.872] \\
0.010 & [0.012, 0.880] \\
0.015 & [0.000, 0.883] \\
0.025 & [-0.024, 0.889] \\
0.050 & [-0.055, 0.901] \\
0.100 & [-0.071, 0.917] \\
0.200 & [-0.071, 0.928] \\
0.500 & [-0.071, 0.929] \\
1.000 & [-0.071, 0.929] \\
\hline
\end{tabular}

\caption{Estimated ATE bounds for different values of sensitivity parameter $c$.}
\label{tab:ate_bounds}
\end{center}
\end{table}

To explore the distributional effects of a negative shock to opening weekend viewership, we now turn to the quantile treatment effects (QTE) for the continuous outcome $Y_i(x)$. In Table \ref{tab:qte_bounds}, we report the identified set of the QTE across several quantiles and different levels of $c$-dependence. Consistent with the results of the discretized outcome, we find that under the baseline assumption of exogeneity, the identified set for the QTE at the $10$th and $25$th percentile is negative and bounded away from zero. A negative shock to opening weekend viewership causes the 25th percentile of viewership in the second weekend to decrease by at least $0.39$ million tickets. These results, however, hold only for the bottom half of the distribution of potential outcomes. At the $50$th, $75$th, and $90$th percentiles, the identified set is very wide and includes zero.

\begin{table}[tbh]
\begin{center}
\begin{tabular}{lccccc}
\toprule
Percentile           & 10\% & 25\% & 50\% & 75\% & 90\% \\
\midrule
$c$ = 0.00   & [-2.94, -0.60] & [-3.45, -0.39] & [-4.02, 8.89] & [-3.11, 7.99] & [-5.57, 6.39] \\
$c$ = 0.02   & [-2.97, 1.90] & [-3.50, -0.13] & [-4.15, 9.02] & [-3.90, 8.11] & [-9.92, 6.57] \\
$c$ = 0.10   & [-3.06, 2.29] & [-3.58, 0.90] & [-4.29, 9.15] & [-7.16, 8.32] & [-10.05, 7.08] \\
\bottomrule
\end{tabular}

\caption{QTE bounds under $c$-dependence. This table shows the estimated identified set for $Q_{Y_i(1)}(\tau) - Q_{Y_i(0)}(\tau)$ for a range of values of $\tau$ (columns) and levels of $c$-dependence (rows).}
\label{tab:qte_bounds}
\end{center}
\end{table}

To see why the identified set for the QTE is much less informative for higher quantiles, it is useful to examine the identified sets for the potential outcome CDFs directly. Figure \ref{fig:cdf_bounds} shows the upper and lower bounds on the CDF for $(Y_i(1), Y_i(0))$ at different levels of $c$-dependence. The first panel shows the bounds under exogeneity, while the second shows a $c$-dependence level of $0.1$. There is an asymmetry in the bounds of the distributions of potential outcomes, with much tighter bounds for the potential outcome with $x = 0$ in which viewership in the opening weekend is above the $25$th percentile. This is because there is a much larger mass of observations with $X = 0$ than with $X = 1$. In addition, the data is largely uninformative about the top half of the distribution of $Y_i(1)$. This reflects the fact that nearly all of the observed mass of $Y$ conditional on $X = 1$ is in the lower half of the support of $Y$. Since we make no monotonicity assumption or other shape restriction, the bounds on the CDF of $Y_i(1)$ have no other restriction except for the lower bound from the mass below $0$. 

\begin{figure}[tbh]
\centering
\input{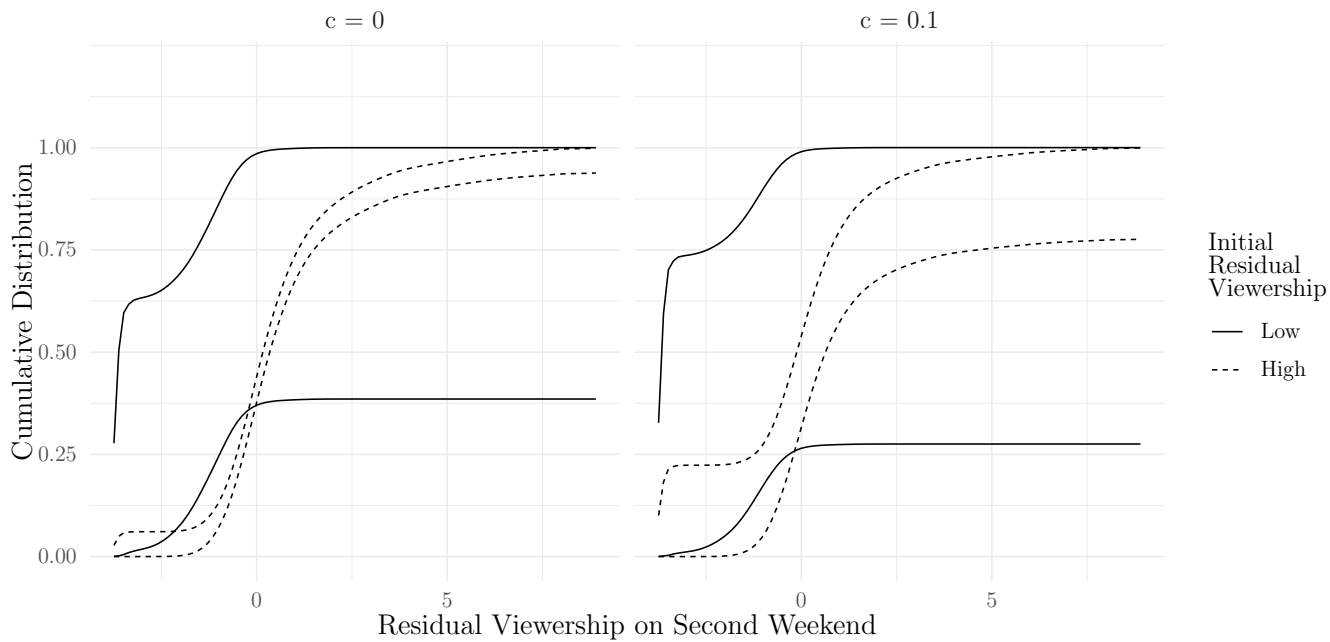}
\caption{Distributional bounds. These plots show the bounds on the cumulative distribution function of $Y_i(x)$ at two values of $c$-dependence.}
\label{fig:cdf_bounds}
\end{figure}

\section{Conclusion}

We introduced a new, computationally tractable approach for conducting sensitivity to the instrument exclusion and exogeneity assumptions. Our approach does not impose any kind of monotonicity assumption in the first stage, and allows for arbitrarily heterogeneous treatment effects. We did this by developing a unifying sensitivity model which nests several well known approaches to continuously parameterizing relaxations of statistical independence assumptions from the literature. We showed that, under those relaxations, identified sets for parameters like ATE and QTE are solutions to linear programs. Our approach can be used when the outcome is discrete or continuous, and when there are one or multiple discretely supported instruments.

We illustrated the practical value of our results in an empirical study of peer effects in movie viewership. There our sensitivity analysis shows that although ATE is positive under full exclusion and exogeneity (meaning peer effects are present), that conclusion is highly sensitive to minor relaxations of the exclusion and exogeneity assumptions. Overall, our results allow researchers to transparently study and report the robustness of their instrumental variable conclusions to violations of exclusion or exogeneity.

\bibliographystyle{econometrica}
\bibliography{FF_paper}

\appendix
\newpage
\section{Proofs for Section \ref{sec:binary_outcome}: Binary Outcomes}\label{appsec:binary_outcomes_proofs}

\begin{proof}[Proof of Proposition \ref{prop:noassn_discrete}]
 We have that 
\begin{align*}
 \Prob(Y(x) = 1|Z=z) &= \Prob(Y(x) = 1,X=x|Z=z) + \Prob(Y(x) = 1,X = 1-x\mid Z=z)\\
 &= \Prob(Y = 1,X=x\mid Z=z) + \Prob(Y(x) = 1\mid X = 1-x,Z=z)\pi(1-x\mid z)\\
 &\in [\Prob(Y=1,X=x\mid Z=z),\Prob(Y=1,X=x\mid Z=z) + \pi(1-x\mid z)]
\end{align*}
by $\Prob(Y(x) = 1\mid X = 1-x,Z=z) \in [0,1]$. All these conditional probabilities are well defined by Assumption \ref{assn:NonTrivialinstrument}. This inclusion holds for all $x,z \in \{0,1\}$, therefore $\textbf{p}_Y \in \mathcal{H}_0 \times \mathcal{H}_1$. 

To show sharpness, let $(p_{00},p_{01},p_{10},p_{11}) = \textbf{p} \in \mathcal{H}_0 \times \mathcal{H}_1$. We will find a distribution for $(Y(0),Y(1))\mid X,Z$ that is consistent with $\textbf{p}_Y = \textbf{p}$ and the known distribution of $Y\mid X,Z$. For $x,z \in \{0,1\}$, let
\begin{align*}
 \Prob(Y(0) = y_0,Y(1) = y_1\mid X=x,Z=z) &= \Prob(Y(0) = y_0\mid X=x,Z=z) \cdot \Prob(Y(1) = y_1\mid X=x,Z=z)\\
 \Prob(Y(x) = 1\mid X= x,Z=z) &= \Prob(Y=1\mid X=x,Z=z)\\
 \Prob(Y(x) = 1\mid X=1-x,Z=z) &= \frac{p_{xz} - \Prob(Y=1\mid X=x,Z=z)\pi(x \mid z)}{\pi(1-x\mid z)}.
\end{align*}
By $\textbf{p} \in \mathcal{H}_0 \times \mathcal{H}_1$, $(p_{xz} - \Prob(Y=1\mid X=x,Z=z)\pi(x \mid z))/\pi(1-x\mid z) \in [0,1]$ and hence $\Prob(Y(x) = y_x\mid X=1-x,Z=z)$ is a probability. This choice of $\Prob(Y(0) = y_0,Y(1) = y_1\mid X=x,Z=z)$ has a distribution of $Y\mid X,Z$ that coincides with its known distribution. Finally, we can compute
\begin{align*}
 \Prob(Y(x) = 1\mid Z=z) &= \Prob(Y = 1,X=x\mid Z=z) + \Prob(Y(x) = 1\mid X = 1-x,Z=z)\pi(1-x\mid z)\\
 &= \Prob(Y = 1,X=x\mid Z=z) + \frac{p_{xz} - \Prob(Y=1\mid X=x,Z=z)\pi(x \mid z)}{\pi(1-x\mid z)}\pi(1-x\mid z)\\
 &= p_{xz}.
\end{align*}
Therefore, $\mathcal{H}_0 \times \mathcal{H}_1$ is sharp.
\end{proof}

\begin{proof}[Proof of Lemma \ref{lemma:equivalence}]
 By Assumption \ref{assn:exog_or_excl}, we consider the lemma's result under exogeneity and under weak exclusion  separately.
 
 First, suppose exogeneity of $Z$ holds. In this case, if $p_Y(x,1) = p_Y(x,0)$, then $\P(Y(x,1) = 1) = \P(Y(x,0) = 1)$. By exogeneity of $Z$, this equivalently implies $\P(Y(x,1) = 1 \mid Z) = \P(Y(x,0) = 1 \mid Z)$ almost surely, showing that weak exclusion also holds. For the reverse direction, note that weak exclusion immediately implies $p_Y(x,1) = p_Y(x,0)$, so this implication is direct and omitted.
 
 Now consider the case where weak exclusion of $Z$ holds. If $p_Y(x,1) = p_Y(x,0)$, then we have $\P(Y(x,1) = 1 \mid  Z = 1)  = \P(Y(x,0) = 1 \mid  Z = 0)$. By weak exclusion, $\P(Y(x,0) = 1 \mid  Z = 1) = \P(Y(x,1) = 1 \mid  Z = 1)  = \P(Y(x,0) = 1 \mid  Z = 0) = \P(Y(x,1) = 1 \mid  Z = 0)$. Therefore, it follows that $Z$ is independent of both $Y(x,1)$ and $Y(x,0)$. The reverse implication that exogeneity implies $p_Y(x,1) = p_Y(x,0)$ is again immediate and omitted.
\end{proof}

\begin{proof}[Proof of Proposition \ref{prop:high-level_relax_discrete}]
 This proposition follows from lemmas \ref{lemma:MSM_prim_assn}--\ref{lemma:KS_prim_assn}, which verify that the four conditions of Assumption \ref{assn:ind_relax} hold for the corresponding sensitivity model.
\end{proof}

\begin{lemma}\label{lemma:MSM_prim_assn}
Let Assumption \ref{assn:NonTrivialinstrument} hold. The correspondence defined in equation \eqref{eq:MSM_discrete_correspondence} satisfies Assumption \ref{assn:ind_relax}.
\end{lemma}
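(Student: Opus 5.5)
We verify the four conditions of Assumption \ref{assn:ind_relax} directly for the set $S(\lambda) \coloneqq \{\textbf{p} \in [0,1]^2 : A_\text{MSM}(\lambda)\textbf{p} \leq a_\text{MSM}(\lambda)\}$, since $\mathcal{A}_x(\lambda) = S(\lambda)$ for both $x$. Writing $\textbf{p} = (p_0,p_1)$, the four rows read
\begin{align*}
 (1-\lambda)p_0 \leq p_1, \quad (1-\lambda)p_1 \leq p_0, \quad (1-\lambda)(1-p_0) \leq 1-p_1, \quad (1-\lambda)(1-p_1) \leq 1-p_0.
\end{align*}
These are exactly the ratio bounds in \eqref{eq:MSM_constraints} with $\Lambda^{-1} = 1-\lambda$, cleared of denominators (so they also cover boundary points where a denominator vanishes).

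\textbf{Spanning.} At $\lambda=0$ the first two rows give $p_0 \leq p_1 \leq p_0$, and then the last two hold trivially, so $S(0)$ is the diagonal of $[0,1]^2$. At $\lambda=1$ every row has the form $0 \leq$ something in $[0,1]$, so $S(1)=[0,1]^2$.

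\textbf{Monotonicity.} Each constraint has the form $(1-\lambda)u \leq v$ with $u,v \geq 0$ on $[0,1]^2$. Its left side is nonincreasing in $\lambda$, so if the constraint holds at $\lambda$ it holds at every $\lambda' \geq \lambda$.

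\textbf{Linearity.} $S(\lambda)$ is the intersection of the box $[0,1]^2$ with four closed half-planes. It is therefore a bounded closed convex polytope, and it is nonempty because it contains the diagonal.

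\textbf{Continuity (main step).} We argue upper and lower hemicontinuity separately, with continuity on $[0,1]$ meaning both.
\begin{itemize}
 \item \emph{Upper hemicontinuity.} The graph $\{(\lambda,\textbf{p}) : \textbf{p} \in S(\lambda)\}$ is closed, since each defining inequality is jointly continuous in $(\lambda,\textbf{p})$. The range $[0,1]^2$ is compact, so the closed graph theorem for correspondences \citep{Border1985} gives uhc.
 \item \emph{Lower hemicontinuity.} Fix $\lambda_n \to \lambda$ and $\textbf{p} \in S(\lambda)$. We must exhibit $\textbf{p}_n \in S(\lambda_n)$ with $\textbf{p}_n \to \textbf{p}$. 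The natural choice is to shrink $\textbf{p}$ toward the diagonal. Let $\bar p = (p_0+p_1)/2$ and set $\textbf{p}_n \coloneqq \bar p(1,1) + t_n(\textbf{p} - \bar p(1,1))$, with $t_n \coloneqq \min\{1, \, t^*(\lambda_n)\}$, where $t^*(\lambda_n)$ is the largest $t$ keeping the point feasible at $\lambda_n$.
\end{itemize}
The points $\textbf{p}_n$ lie in $S(\lambda_n)$: the diagonal point is in every $S(\lambda')$, and each $S(\lambda_n)$ is convex, so the segment from the diagonal point up to parameter $t^*(\lambda_n)$ is feasible. It remains to show $t_n \to 1$.

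For this we solve each constraint for $t$ along the segment. Writing $\delta = (p_1-p_0)/2$, the point is $(\bar p - t\delta, \bar p + t\delta)$. Suppose $\delta > 0$ and take the first constraint, $(1-\lambda_n)(\bar p - t\delta) \leq \bar p + t\delta$. It yields
\begin{align*}
 t \geq -\lambda_n\bar p / \bigl(\delta(2-\lambda_n)\bigr),
\end{align*}
a lower bound that is irrelevant for $t \in [0,1]$. The binding constraints are the second and fourth. For example, the second, $(1-\lambda_n)(\bar p + t\delta) \leq \bar p - t\delta$, yields
\begin{align*}
 t \leq \lambda_n \bar p / \bigl(\delta(2-\lambda_n)\bigr),
\end{align*}
whose right side is continuous in $\lambda_n$. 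The same holds for the fourth constraint, and symmetrically for the case $\delta<0$; if $\delta = 0$, then $\textbf{p}$ is diagonal and we take $\textbf{p}_n = \textbf{p}$. Hence $t^*(\lambda)$ is the minimum of finitely many functions continuous in $\lambda$, and $t^*(\lambda) \geq 1$ because $\textbf{p} \in S(\lambda)$. Therefore
\begin{align*}
 \liminf_n \min\{1,t^*(\lambda_n)\} = 1,
\end{align*}
so $t_n \to 1$ and $\textbf{p}_n \to \textbf{p}$.

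The obstacle is the case $\lambda = 0$, where $\bar p$ may be $0$ or $1$ and the bounds degenerate. When $\lambda=0$, however, $\textbf{p}$ is diagonal, so $\delta=0$ and this case is already handled.
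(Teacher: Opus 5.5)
Your proof is correct and follows essentially the same route as the paper. You check spanning, monotonicity and the polytope structure directly, get uhc from the closed graph with compact range, and get lhc by contracting $\textbf{p}$ toward the diagonal point $(\bar p,\bar p)$, with your $1-t_n$ playing the role of the paper's $\varepsilon_n=\max_i \varepsilon_{i,n}$. One small slip: for $\delta>0$ the binding constraints are the second and the \emph{third}, which gives $t\leq \lambda_n(1-\bar p)/\bigl(\delta(2-\lambda_n)\bigr)$, whereas the fourth gives only an irrelevant nonpositive lower bound, so your conclusion is unaffected.
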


\begin{proof}[Proof of Lemma \ref{lemma:MSM_prim_assn}]

First, define $\mathcal{A}_\text{MSM}(\theta;x) \coloneqq \{\textbf{p} \in [0,1]^2: A_\text{MSM}(\lambda)\textbf{p} \leq a_\text{MSM}(\lambda)\}$. 

\textbf{Part 1:} From the definitions of $(A_\text{MSM}(\lambda),a_\text{MSM}(\lambda))$, we can directly see that
\begin{align*}
 \mathcal{A}_\text{MSM}(0;x) &= \{(p_0,p_1) \in [0,1]^2: p_{0} - p_{1} \leq 0, p_{1} - p_{0} \leq 0\} = \{(p_0,p_1) \in [0,1]^2: p_{0} = p_{1}\}
\end{align*}
and that
\begin{align*}
 \mathcal{A}_\text{MSM}(1;x) &= \{(p_0,p_1) \in [0,1]^2: - p_{1} \leq 0,  - p_{0} \leq 0, p_{0} \leq 1, p_{1} \leq 1\}= [0,1]^2.
\end{align*}

\textbf{Part 2:} Let $\lambda' \geq \lambda$ and suppose $\textbf{p}  \in \mathcal{A}_\text{MSM}(\lambda;x)$. Therefore,  $A_\text{MSM}(\lambda)\textbf{p} - a_\text{MSM}\leq 0$. This implies that 
\begin{align*}
 A_\text{MSM}(\lambda')\textbf{p} - a_\text{MSM}(\lambda') &= \begin{pmatrix}
 (1 - \lambda')p_{0} - p_{1}\\
 -p_{0} + (1 - \lambda')p_{1}\\
 -(1-p_{0})\lambda' + p_{1} - p_{0}\\
 -(1-p_{1})\lambda' + p_{0} - p_{1}
 \end{pmatrix}\leq \begin{pmatrix}
 (1 - \lambda)p_{0} - p_{1}\\
 -p_{0} + (1 - \lambda)p_{1}\\
 -(1-p_{0})\lambda + p_{1} - p_{0}\\
 -(1-p_{1})\lambda + p_{0} - p_{1}
 \end{pmatrix}\leq 0
\end{align*}
from $\lambda \leq \lambda' \leq 1$ and from $p_{0},p_{1} \in [0,1]$. Therefore, $\textbf{p} \in \mathcal{A}_\text{MSM}(\lambda';x)$.

\textbf{Part 3:} $\mathcal{A}_\text{MSM}(\lambda;x)$ trivially defines a bounded set defined as the intersection of finitely many closed half-planes. Hence, it is a closed and convex polytope. Since $\mathcal{A}_\text{MSM}(\lambda)$ is a Cartesian product of closed convex polytopes, it is also a closed convex polytope.

\textbf{Part 4:} We break this part by first showing that the correspondence $\mathcal{A}_x:[0,1]\rightrightarrows [0,1]^2$ is upper-hemicontinuous (uhc), and then second showing that it is lower-hemicontinuous (lhc).

To show uhc, note that $\mathcal{A}_\text{MSM}(\cdot;x)$ is compact-valued since $\mathcal{A}_\text{MSM}(\lambda;x)$ is a closed and bounded set for all $\lambda \in [0,1]$. Let $\lambda_n \to \lambda$, $\textbf{p}_n \in \mathcal{A}_\text{MSM}(\lambda_n;x)$ and $\textbf{p}_n \to \textbf{p} \in [0,1]^2$ as $n\to\infty$. We can see that $\textbf{p} \in \mathcal{A}_\text{MSM}(\lambda;x)$ because
\begin{align*}
 A_\text{MSM}(\lambda)\textbf{p} - a_\text{MSM}(\lambda) &= \lim_{n \to \infty} \left(A_\text{MSM}(\lambda_n)\textbf{p}_n - a_\text{MSM}(\lambda_n)\right)\leq  \lim_{n \to \infty} (0) = 0,
\end{align*}
where the equality follows from the continuity of $A_\text{MSM}(\lambda)\textbf{p} - a_\text{MSM}(\lambda)$ in $(\lambda,\textbf{p})$. Thus, $\mathcal{A}_\text{MSM}(\lambda;x)$ is uhc.

To show lhc, let $\lambda_n \to \lambda$ and fix $\textbf{p} =(p_0,p_1) \in \mathcal{A}_\text{MSM}(\lambda;x)$. $\mathcal{A}_\text{MSM}(\lambda;x)$ is lhc if we can find $\textbf{p}_n \in \mathcal{A}_\text{MSM}(\lambda_n;x)$ such that $\textbf{p}_n \to \textbf{p}$. 

Let 
\begin{align*}
 \varepsilon_{1,n} &\coloneqq \frac{(1-\lambda_n)p_0 - p_1}{(1-\lambda_n)p_0 - p_1 + \lambda_n (p_0 + p_1)/2}\indicator((1-\lambda_n)p_0 - p_1 > 0).
\end{align*}
 Note that the denominator is nonzero when $(1-\lambda_n)p_0 - p_1 >0$ because $\lambda_n (p_0+p_1)/2 \geq 0$. Therefore, $\varepsilon_{1,n} \in [0,1]$. For $\varepsilon \in [0,1]$, define
\begin{align*}
 \textbf{p}(\varepsilon) &\coloneqq (1-\varepsilon)\textbf{p} + \varepsilon \begin{pmatrix}
  (p_0+p_1)/2\\
  (p_0+p_1)/2
 \end{pmatrix}.
\end{align*}
We can show that $\textbf{p}(\varepsilon_{1,n})$ satisfies the first inequality characterizing $\mathcal{A}_\text{MSM}(\lambda_n;x)$. To see this,
\begin{align*}
 (1-\lambda_n)p_0(\varepsilon_{1,n}) - p_1(\varepsilon_{1,n}) &= (1-\lambda_n)\left((1-\varepsilon_{1,n})p_0 + \varepsilon_{1,n}(p_0+p_1)/2\right) - \left((1-\varepsilon_{1,n})p_1 + \varepsilon_{1,n}(p_0+p_1)/2\right)\\
 &= (1-\lambda_n)p_0 - p_1 - \varepsilon_{1,n}((1-\lambda_n)p_0 - p_1 +\lambda_n(p_0 + p_1)/2)\\
 &= (1-\lambda_n)p_0 - p_1 - ((1-\lambda_n)p_0 - p_1)\indicator((1-\lambda_n)p_0 - p_1 > 0)\\
 &= ((1-\lambda_n)p_0 - p_1)\indicator((1-\lambda_n)p_0 - p_1 \leq 0)\\
 &\leq 0.
\end{align*}
We can also show that $\textbf{p}(\varepsilon')$ satisfies the first inequality characterizing $\mathcal{A}_\text{MSM}(\lambda_n;x)$ for all $\varepsilon' \in [\varepsilon_{1,n},1]$. To see this,
\begin{align*}
 &(1-\lambda_n)p_0(\varepsilon') - p_1(\varepsilon')\\
 &= (1-\lambda_n)p_0 - p_1 - \varepsilon'((1-\lambda_n)p_0 - p_1 + \lambda_n(p_0 + p_1)/2)\\
 &= \left((1-\lambda_n)p_0 - p_1 - \varepsilon'((1-\lambda_n)p_0 - p_1 + \lambda_n(p_0 + p_1)/2)\right)1(p_1 \geq p_0)\\
 &+ \left((1-\lambda_n)p_0(\varepsilon_{1,n}) - p_1(\varepsilon_{1,n}) + (\varepsilon_{1,n} - \varepsilon')((1-\lambda_n)p_0 - p_1 + \lambda_n(p_0 + p_1)/2)\right)\indicator(p_1 < p_0)\\
 &= \Big(\underbrace{(p_0 - p_1)(1 - \varepsilon'(1-\lambda_n/2))}_{\leq 0} - \lambda_n p_0\Big)\indicator(p_1 \geq p_0)\\
 &+ \left(\underbrace{(1-\lambda_n)p_0(\varepsilon_{1,n}) - p_1(\varepsilon_{1,n})}_{\leq 0} + \underbrace{(\varepsilon_{1,n} - \varepsilon')}_{\leq 0}\underbrace{(1-\lambda_n/2)(p_0 - p_1)}_{\geq 0}\right)\indicator(p_1 < p_0)\\
 & \leq 0.
\end{align*}

Finally, we can see that $\varepsilon_{1,n} \to 0$ as $n\to \infty$ because 
\begin{align*}
 ((1-\lambda_n)p_0 - p_1)\indicator((1-\lambda_n)p_0 - p_1 > 0) &= \max\{(1-\lambda_n)p_0 - p_1,0\} \to \max\{(1-\lambda)p_0 - p_1,0\} = 0
\end{align*}
because $(1-\lambda)p_0 - p_1 \leq 0$, which follows from $(p_0,p_1) \in \mathcal{A}_\text{MSM}(\lambda;x)$.

Define
\begin{align*}
 \varepsilon_{2,n} &\coloneqq \frac{(1-\lambda_n)p_1 - p_0}{(1-\lambda_n)p_1 - p_0 + \lambda_n (p_0 + p_1)/2}\indicator((1-\lambda_n)p_1 - p_0 > 0)\\
 \varepsilon_{3,n} &\coloneqq \frac{p_1 - (1-\lambda_n)p_0 - \lambda_n}{(1 - \lambda_n/2)(p_1-p_0)}\indicator(p_1 - (1-\lambda_n)p_0 - \lambda_n > 0)\\
 \varepsilon_{4,n} &\coloneqq \frac{p_0 - (1-\lambda_n)p_1 - \lambda_n}{(1 - \lambda_n/2)(p_0-p_1)}\indicator(p_0 - (1-\lambda_n)p_1 - \lambda_n > 0).
\end{align*}
As we did above for $\varepsilon_{1,n}$, we can verify that $\textbf{p}(\varepsilon_{i,n})$ satisfies the $i$th equation in $\mathcal{A}_\text{MSM}(\lambda_n;x)$, $\textbf{p}(\varepsilon')$ satisfies the $i$th equation for all $\varepsilon' \in [\varepsilon_{i,n},1]$, and that $\varepsilon_{i,n} \to 0$ as $n\to\infty$ for $i \in \{2,3,4\}$. Let $\varepsilon_n \coloneqq \max_{i \in \{1,2,3,4\}} \varepsilon_{i,n}$. Since $\varepsilon_n \geq \varepsilon_{i,n}$ for all $i$, $\textbf{p}(\varepsilon_n)\in \mathcal{A}_\text{MSM}(\lambda_n;x)$. Moreover, $\lim_{n\to\infty} \textbf{p}(\varepsilon_n) = \textbf{p}(\lim_{n\to\infty}\max_{i \in \{1,2,3,4\}} \varepsilon_{i,n}) =  \textbf{p}(0) = \textbf{p}$. Therefore, $\mathcal{A}_\text{MSM}(\lambda;x)$ is lhc. Since it is also uhc, it is continuous. The Cartesian product of continuous compact-valued correspondences is continuous by Theorem 11.25 in \cite{Border1985}.
\end{proof}

\begin{lemma}\label{lemma:cdep_prim_assn}
Let Assumption \ref{assn:NonTrivialinstrument} hold. The correspondence defined in equation \eqref{eq:cdep_discrete_correspondence} satisfies Assumption \ref{assn:ind_relax}.
\end{lemma}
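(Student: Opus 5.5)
The plan is to verify the four parts of Assumption \ref{assn:ind_relax} for $\mathcal{A}_{c\text{-dep}}(c;x) \coloneqq \{\textbf{p} \in [0,1]^2 : A_{c\text{-dep}}(c)\textbf{p} \leq a_{c\text{-dep}}(c)\}$, following the structure of the proof of Lemma \ref{lemma:MSM_prim_assn}, and then take the Cartesian product over $x$ using Theorem 11.25 of \cite{Border1985}.

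\textbf{Step 1: properties of $k_z$.} Everything runs through the scalar functions $k_z(c)$, $z=0,1$. By Assumption \ref{assn:NonTrivialinstrument}, $p_Z \in (0,1)$, so the denominator $\Prob(Z=1-z)\min\{\Prob(Z=z)+c,1\}$ is bounded away from zero. I will first establish four facts:
\begin{itemize}
\item[(i)] $k_z$ is continuous on $[0,1]$, since it is a ratio of continuous functions built from $\max$ and $\min$ with a positive denominator.
\item[(ii)] $k_z$ is nonincreasing in $c$: the numerator is nonincreasing and the denominator is nondecreasing.
\item[(iii)] $k_z(0)=1$.
\item[(iv)] $k_z(1)=0$, since $\Prob(Z=1-z)-1 \leq 0$.
\end{itemize}
So $k_z(c) \in [0,1]$ throughout.

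Next I rewrite the four inequalities defining the set as
\[
k_0 p_0 \leq p_1, \quad k_1 p_1 \leq p_0, \quad k_0(1-p_0) \leq 1-p_1, \quad k_1(1-p_1) \leq 1-p_0.
\]

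\textbf{Step 2: Parts 1--3.}
\begin{itemize}
\item[] Spanning: at $c=0$ the constraints with $k_0=k_1=1$ force $p_0=p_1$. At $c=1$ they reduce to $p_z \geq 0$ and $p_z \leq 1$, so the set is all of $[0,1]^2$.
\item[] Monotonicity: each constraint has the form $k\cdot u \leq v$ with $u \geq 0$, so it only loosens as $k$ decreases, and $k$ decreases in $c$ by (ii).
\item[] Linearity: the set is a bounded intersection of finitely many closed half-planes, hence a closed convex polytope.
\end{itemize}

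\textbf{Step 3: upper hemicontinuity.} This follows the MSM proof exactly. The correspondence is compact-valued, and its graph is closed because $(c,\textbf{p}) \mapsto A_{c\text{-dep}}(c)\textbf{p} - a_{c\text{-dep}}(c)$ is jointly continuous by (i).

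\textbf{Step 4: lower hemicontinuity (main obstacle).} Here I change coordinates to $m=(p_0+p_1)/2$ and $\delta=(p_0-p_1)/2$. Each inequality becomes a one-sided bound on $\delta$, for example $\delta \leq m(1-k_0)/(1+k_0)$ and $\delta \leq (1-m)(1-k_0)/(1+k_0)$, with the mirror-image bounds from $k_1$ for $-\delta$.

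For fixed $m$, the slice of $\mathcal{A}_{c\text{-dep}}(c;x)$ is therefore an interval
\[
-\underline{b}(c,m) \leq \delta \leq \overline{b}(c,m),
\]
where $\overline{b}$ and $\underline{b}$ are nonnegative, continuous in $c$, and obtained as minima of finitely many continuous expressions. The diagonal point $(m,m)$ always lies in the set.

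Given $c_n \to c$ and $\textbf{p} \in \mathcal{A}_{c\text{-dep}}(c;x)$ with $\delta \geq 0$ (the case $\delta<0$ is symmetric), I set $\textbf{p}_n = (m,m) + s_n(\textbf{p}-(m,m))$, where
\[
s_n = \min\{1, \overline{b}(c_n,m)/\overline{b}(c,m)\}, \qquad s_n = 1 \text{ if } \delta = 0.
\]
If $\delta>0$ then $\overline{b}(c,m) \geq \delta > 0$, so the ratio is well defined and tends to $1$ by continuity. Hence $\textbf{p}_n \in \mathcal{A}_{c\text{-dep}}(c_n;x)$ and $\textbf{p}_n \to \textbf{p}$.

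The delicate points are the zero-bound cases and confirming that the slice description is exact on $[0,1]^2$. I would check both carefully, since they replace the four separate $\varepsilon_{i,n}$ constructions used in the MSM proof.
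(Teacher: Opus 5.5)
Your proposal is correct and follows essentially the paper's route. The facts about $k_z$ drive Parts 1--3, upper hemicontinuity comes from the closed graph, and lower hemicontinuity comes from shrinking $\textbf{p}$ toward the diagonal point $((p_0+p_1)/2,(p_0+p_1)/2)$; your single factor $s_n$, built from the $(m,\delta)$ slice bounds, replaces the paper's four $\varepsilon_{i,n}$ and their maximum, and it also absorbs the paper's separate $c=0$ case.

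There is one harmless slip. The constraint $k_0(1-p_0)\le 1-p_1$ bounds $-\delta$, not $\delta$, so the correct upper bounds are $\delta\le m(1-k_0)/(1+k_0)$ and $\delta\le (1-m)(1-k_1)/(1+k_1)$. The two points you flag as delicate are already settled: $\delta>0$ forces $\overline{b}(c,m)\ge\delta>0$, and $(1-k)/(1+k)\le 1$ makes the box constraints redundant on each slice.
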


\begin{proof}[Proof of Lemma \ref{lemma:cdep_prim_assn}]

First, let $\mathcal{A}_\text{$c$-dep}(\theta;x) \coloneqq \{\textbf{p} \in [0,1]^2: A_\text{$c$-dep}(c)\textbf{p} \leq a_\text{$c$-dep}(c)\}$.

\textbf{Part 1:} From the definitions of $(A_\text{$c$-dep}(c),a_\text{$c$-dep}(c))$ and from $k_z(0) = 1$, we can directly see that
\begin{align*}
 \mathcal{A}_\text{$c$-dep}(0;x) &= \{\textbf{p} \in [0,1]^2: p_{0} - p_{1} \leq 0, p_{1} - p_{0} \leq 0\} = \{\textbf{p} \in [0,1]^2: p_{0} = p_{1}\}.
\end{align*}
From $k_z(1) = 0$, we can also see that
\begin{align*}
 \mathcal{A}_\text{$c$-dep}(1;x) &= \{\textbf{p} \in [0,1]^2: -p_{1} \leq 0,-p_{0} \leq 0, p_{1} \leq 1,  p_{1} \leq 1\}= [0,1]^2.
\end{align*}

\textbf{Part 2:} Let $c' \geq c$ and suppose $\textbf{p} = (p_0,p_1) \in \mathcal{A}_\text{$c$-dep}(c;x)$. Therefore, $A_\text{$c$-dep}(c)\textbf{p} - a_\text{$c$-dep}(c) \leq 0$. Then,
\begin{align*}
 A_\text{$c$-dep}(c')\textbf{p} - a_\text{$c$-dep}(c') &= \begin{pmatrix}
 k_0(c')p_{0} - p_{1} \\
 -p_{0} + k_1(c') p_{1}\\
 -k_0(c')p_{0} + p_{1} - 1 + k_0(c')\\
 p_{0} - k_1(c')p_{1} - 1 + k_1(c')
 \end{pmatrix}\\
 &= \begin{pmatrix}
 k_0(c')p_{0} - p_{1} \\
 -p_{0} + k_1(c') p_{1}\\
 p_{1} - 1 + k_0(c')(1-p_{0})\\
 p_{0}  - 1 + k_1(c')(1-p_{1})
 \end{pmatrix}\\
 &\leq \begin{pmatrix}
 k_0(c)p_{0} - p_{1} \\
 -p_{0} + k_1(c) p_{1}\\
 p_{1} - 1 + k_0(c)(1-p_{0})\\
 p_{0}  - 1 + k_1(c)(1-p_{1})
 \end{pmatrix}\\
 &\leq 0.
\end{align*}
The second-to-last inequality follows from $k_z(c)$ being nonincreasing for $c\in[0,1]$. Therefore, $\textbf{p} \in \mathcal{A}_\text{$c$-dep}(c';x)$.

\textbf{Part 3:} Similar to the proof of part 3 for Lemma \ref{lemma:MSM_prim_assn}.

\textbf{Part 4:} We again break this part by first showing that the correspondence $\mathcal{A}_\text{$c$-dep}(\cdot;x):[0,1]\rightrightarrows [0,1]^2$ is uhc and lhc.

To show uhc, note that $\mathcal{A}_\text{$c$-dep}(\cdot,x)$ is compact-valued since $\mathcal{A}_\text{$c$-dep}(c;x)$ is a closed and bounded set for all $c \in [0,1]$. Let $c_n \to c$, $\textbf{p}_n \in \mathcal{A}_\text{$c$-dep}(c_n;x)$ and $\textbf{p}_n \to \textbf{p}$ as $n\to\infty$. We can see that $\textbf{p} \in \mathcal{A}_\text{$c$-dep}(c;x)$ because
\begin{align*}
 A_\text{$c$-dep}(c)\textbf{p} - a_\text{$c$-dep}(c) &= \lim_{n \to \infty} \left(A_\text{$c$-dep}(c_n)\textbf{p}_n - a_\text{$c$-dep}(c_n)\right)\leq \lim_{n\to\infty}(0) =  0,
\end{align*}
which follows from the continuity of $A_\text{$c$-dep}(c)\textbf{p} - a_\text{$c$-dep}(c)$ in $(c,\textbf{p})$, which itself follows from the continuity of $k_z(c)$ in $c$. Therefore $\mathcal{A}_\text{$c$-dep}(c;x)$ is uhc.

To show lhc, let $c_n \to c$ and fix $\textbf{p} \in \mathcal{A}_\text{$c$-dep}(c;x)$. $\mathcal{A}_\text{$c$-dep}(c;x)$ is lhc if we can find $\textbf{p}_n \in \mathcal{A}_\text{$c$-dep}(c_n;x)$ such that $\textbf{p}_n \to \textbf{p}$ as $n\to\infty$. 

If $c = 0$, let $\textbf{p}_n = \textbf{p}$, which is in $\mathcal{A}_\text{$c$-dep}(c_n;x)$ by $\mathcal{A}_\text{$c$-dep}(0;x) \subseteq \mathcal{A}_\text{$c$-dep}(c_n;x)$.

When $c> 0$, we construct $\textbf{p}_n$ as in the proof of Lemma \ref{lemma:MSM_prim_assn}. Let
\begin{align*}
 \varepsilon_{1,n} &\coloneqq \frac{k_0(c_n)p_0 - p_1}{k_0(c_n)p_0 - p_1 + (1-k_0(c_n))(p_0 + p_1)/2}\indicator(k_0(c_n)p_0 - p_1 > 0)
\end{align*}

Note that the denominator is nonzero when $k_0(c_n)p_0 - p_1 >0$ because $(1-k_0(c_n))(p_0 + p_1)/2 \geq 0$. Therefore, $\varepsilon_{1,n} \in [0,1]$. For $\varepsilon \in [0,1]$ define 
\begin{align*}
 \textbf{p}(\varepsilon) &\coloneqq (1-\varepsilon)\textbf{p} + \varepsilon \begin{pmatrix}
  (p_0+p_1)/2\\
  (p_0+p_1)/2
 \end{pmatrix}.
\end{align*}
We can show that $\textbf{p}(\varepsilon_{1,n})$ satisfies the first inequality characterizing $\mathcal{A}_\text{$c$-dep}(c_n,x)$ since
\begin{align*}
 k_0(c_n)p_0(\varepsilon_{1,n}) - p_1(\varepsilon_{1,n}) &= k_0(c_n)\left((1-\epsilon_n)p_0 + \varepsilon_{1,n}(p_0+p_1)/2\right) - \left((1-\varepsilon_{1,n})p_1 + \varepsilon_{1,n}(p_0+p_1)/2\right)\\
 &= k_0(c_n)p_0 - p_1 - \varepsilon_{1,n}(k_0(c_n)p_0 - p_1 +(1-k_0(c_n))(p_0 + p_1)/2)\\
 &= k_0(c_n)p_0 - p_1 - (k_0(c_n)p_0 - p_1)\indicator(k_0(c_n)p_0 - p_1 > 0)\\
 &= (k_0(c_n)p_0 - p_1)\indicator(k_0(c_n)p_0 - p_1 \leq 0)\\
 &\leq 0.
\end{align*}
We can show that $\textbf{p}(\varepsilon')$ satisfies the first equality for all $\varepsilon' \in [\varepsilon_{1,n},1]$. To see this,
\begin{align*}
 &k_0(c_n)p_0(\varepsilon') - p_1(\varepsilon')\\
 &= k_0(c_n)p_0 - p_1 - \varepsilon'(k_0(c_n)p_0 - p_1 + (1-k_0(c_n))(p_0 + p_1)/2)\\
 &= \left(k_0(c_n)p_0 - p_1 - \varepsilon'(k_0(c_n)p_0 - p_1 + (1- k_0(c_n))(p_0 + p_1)/2)\right)\indicator(p_1 \geq p_0)\\
 &+ \left(k_0(c_n)p_0(\varepsilon_{1,n}) - p_1(\varepsilon_{1,n}) + (\varepsilon_{1,n} - \varepsilon')(k_0(c_n)p_0 - p_1 + (1-k_0(c_n))(p_0 + p_1)/2)\right)\indicator(p_1 < p_0)\\
 &= \left(\underbrace{(p_0 - p_1)(1 - \varepsilon'(1-(1-k_0(c_n))/2))}_{\leq 0} - (1-k_0(c_n)) p_0)\right)\indicator(p_1 \geq p_0)\\
 &+ \left(\underbrace{k_0(c_n)p_0(\varepsilon_{1,n}) - p_1(\varepsilon_{1,n})}_{\leq 0} + \underbrace{(\varepsilon_{1,n} - \varepsilon')}_{\leq 0}\underbrace{(1-(1-k_0(c_n))/2)(p_0 - p_1)}_{\geq 0}\right)\indicator(p_1 < p_0)\\
 & \leq 0.
\end{align*}

Finally, we can see that $\varepsilon_{1,n} \to 0$ as $n\to \infty$ because 
\begin{align*}
 (k_0(c_n)p_0 - p_1)\indicator(k_0(c_n)p_0 - p_1 > 0) &= \max\{k_0(c_n)p_0 - p_1,0\}\\
 &\to \max\{k_0(c)p_0 - p_1,0\} = 0.
\end{align*}
The limit follows from the continuity of $k_0$ and the maximum, and the last equality follows from $(p_0,p_1) \in \mathcal{A}_\text{$c$-dep}(c;x)$.

Define
\begin{align*}
 \varepsilon_{2,n} &\coloneqq \frac{k_1(c_n)p_1 - p_0}{k_1(c_n)p_1 - p_0 + (1-k_1(c_n)) (p_0 + p_1)/2}\indicator(k_1(c_n)p_1 - p_0 > 0)\\
 \varepsilon_{3,n} &\coloneqq \frac{p_1 - k_0(c_n)p_0 - (1-k_0(c_n))}{(1 - (1-k_0(c_n))/2)(p_1-p_0)}\indicator(p_1 - k_0(c_n)p_0 - (1-k_0(c_n)) > 0)\\
 \varepsilon_{4,n} &\coloneqq \frac{p_0 - k_1(c_n)p_1 - (1-k_1(c_n))}{(1 - (1-k_1(c_n))/2)(p_0-p_1)}\indicator(p_0 - k_1(c_n)p_1 - (1-k_1(c_n)) > 0).
\end{align*}
As we did above for $\varepsilon_{1,n}$, we can verify that $\textbf{p}(\varepsilon_{i,n})$ satisfies the $i$th equation in $\mathcal{A}_\text{$c$-dep}(c_n;x)$, $\textbf{p}(\varepsilon')$ satisfies the $i$th equation for all $\varepsilon' \in [\varepsilon_{i,n},1]$, and that $\varepsilon_{i,n} \to 0$ as $n\to\infty$ for $i \in \{2,3,4\}$. Let $\varepsilon_n = \max_{i \in \{1,2,3,4\}} \varepsilon_{i,n}$. Then, $\textbf{p}(\varepsilon_n)\in \mathcal{A}_\text{$c$-dep}(c_n;x)$ and $\lim_{n\to\infty} \textbf{p}(\varepsilon_n) = \textbf{p}(\lim_{n\to\infty}\max_{i \in \{1,2,3,4\}} \varepsilon_{i,n}) = \textbf{p}(0) = \textbf{p}$. Therefore, $\mathcal{A}_\text{$c$-dep}(c;x)$ is lhc. Since it is also uhc, it is continuous. We conclude that $\mathcal{A}_\text{$c$-dep}(c)$ is continuous.
\end{proof}

\begin{lemma}\label{lemma:KS_prim_assn}
Let Assumption \ref{assn:NonTrivialinstrument} hold. The correspondence defined in \eqref{eq:KS_discrete_correspondence} satisfies Assumption \ref{assn:ind_relax}.
\end{lemma}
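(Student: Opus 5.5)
We follow the same four-part structure as in the proofs of Lemmas \ref{lemma:MSM_prim_assn} and \ref{lemma:cdep_prim_assn}, working with the single-$x$ set $\mathcal{A}_\text{KS}(K;x) \coloneqq \{\textbf{p} \in [0,1]^2 : |p_0 - p_1| \leq K\}$. The full correspondence is the Cartesian product of two copies of this set. It therefore suffices to verify the four conditions for $\mathcal{A}_\text{KS}(\cdot;x)$ and then invoke Theorem 11.25 in \cite{Border1985} for the product, exactly as before.

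\textbf{Parts 1--3.} For spanning, $K=0$ gives $p_0 - p_1 \leq 0$ and $p_1 - p_0 \leq 0$, i.e.\ $p_0 = p_1$. For $K = 1$ the constraint $|p_0-p_1|\leq 1$ is implied by $\textbf{p}\in[0,1]^2$, so the set is $[0,1]^2$. Monotonicity is immediate: $A_\text{KS}$ does not depend on $K$ and $a_\text{KS}(K)$ is increasing componentwise, so $A_\text{KS}\textbf{p} \leq a_\text{KS}(K) \leq a_\text{KS}(K')$ for $K \leq K'$. Linearity of constraints holds because the set is the intersection of $[0,1]^2$ with two closed half-planes, hence a bounded closed convex polytope.

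\textbf{Part 4, uhc.} Compact-valuedness is clear. If $K_n \to K$ and $\textbf{p}_n \to \textbf{p}$ with $|p_{n,0}-p_{n,1}| \leq K_n$, then passing to the limit gives $|p_0 - p_1|\leq K$, since $|p_0 - p_1| - K$ is jointly continuous in $(\textbf{p},K)$.

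\textbf{Part 4, lhc.} Fix $\textbf{p}\in\mathcal{A}_\text{KS}(K;x)$ and $K_n\to K$. We reuse the shrinkage toward the diagonal, $\textbf{p}(\varepsilon) \coloneqq (1-\varepsilon)\textbf{p} + \varepsilon\,((p_0+p_1)/2,(p_0+p_1)/2)$. It stays in $[0,1]^2$ and satisfies $|p_0(\varepsilon)-p_1(\varepsilon)| = (1-\varepsilon)|p_0-p_1|$. Define
\[
\varepsilon_n \coloneqq \frac{\max\{|p_0-p_1| - K_n, 0\}}{|p_0-p_1|}
\]
when $p_0\neq p_1$, and $\varepsilon_n \coloneqq 0$ otherwise. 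Then $\varepsilon_n\in[0,1]$ and $(1-\varepsilon_n)|p_0-p_1| = \min\{|p_0-p_1|,K_n\} \leq K_n$, so $\textbf{p}(\varepsilon_n)\in\mathcal{A}_\text{KS}(K_n;x)$. Since $|p_0-p_1|\leq K$, continuity of the maximum gives $\max\{|p_0-p_1|-K_n,0\}\to 0$, so $\varepsilon_n\to 0$ and $\textbf{p}(\varepsilon_n)\to\textbf{p}$.

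This case is simpler than the MSM and $c$-dependence cases because only one shrinkage factor is needed rather than four. The only delicate point is the degenerate case $p_0 = p_1$ in the definition of $\varepsilon_n$, which the convention $\varepsilon_n \coloneqq 0$ handles.
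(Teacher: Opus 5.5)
Your proof is correct and has the same four-part structure as the paper's. The only difference is in the lhc step: the paper rescales toward the origin, $\textbf{p}_n = \textbf{p}\cdot\min\{K_n/K,1\}$, and treats $K=0$ separately via monotonicity, whereas you shrink toward the diagonal with a $\textbf{p}$-dependent factor, which covers $K=0$ without a separate case; both work because $|p_0-p_1|$ shrinks proportionally along either path while the point stays in $[0,1]^2$.
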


\begin{proof}[Proof of Lemma \ref{lemma:KS_prim_assn}]
First, define $\mathcal{A}_\text{KS}(\theta;x) \coloneqq \{\textbf{p} \in [0,1]^2 : A_\text{KS}\textbf{p} \leq a_\text{KS}(K)\}$. We show that the four components of Assumption \ref{assn:ind_relax} hold. 

\textbf{Part 1:} From the definitions of $(A_\text{KS},a_\text{KS}(K))$, we can directly see that
\begin{align*}
 \mathcal{A}_\text{KS}(0;x) &= \{(p_0,p_1) \in [0,1]^2: p_{0} - p_{1} \leq 0, p_{1} - p_{0} \leq 0\} = \{\textbf{p}_x \in [0,1]^2: p_{0} = p_{1}\}
\end{align*}
and that
\begin{align*}
 \mathcal{A}_\text{KS}(1;x) &= \{(p_0,p_1) \in [0,1]^2: -1 \leq p_{0} - p_{1} \leq 1\}= [0,1]^2.
\end{align*}

\textbf{Part 2:} This follows from $a_\text{KS}(K) \leq a_\text{KS}(K')$ when $K \leq K'$.

\textbf{Part 3:} $\mathcal{A}_\text{KS}(K;x)$ is a bounded set defined as the intersection of finitely many closed half-planes. Hence $\mathcal{A}_\text{KS}(K;x)$ and $\mathcal{A}_\text{KS}(K)$  are closed and convex polytopes.

\textbf{Part 4:} We again break this part by first showing that the correspondence $\mathcal{A}_\text{KS}(\cdot;x):[0,1]\rightrightarrows [0,1]^2$ is uhc and lhc.

To show uhc, note that $\mathcal{A}_\text{KS}(\cdot,x)$ is compact-valued since $\mathcal{A}_\text{KS}(K,x)$ is closed and bounded for all $K \in [0,1]$. Let $K_n \to K$, $\textbf{p}_n \in \mathcal{A}_\text{KS}(K_n;x)$ and $\textbf{p}_n \to \textbf{p}$ as $n\to\infty$. We can see that $\textbf{p} \in \mathcal{A}_\text{KS}(K;x)$ because
\begin{align*}
 A_\text{KS}\textbf{p} - a_\text{KS}(K) &= \lim_{n \to \infty} \left(A_\text{KS}\textbf{p}_n - a_\text{KS}(K_n)\right)\leq \lim_{n\to\infty}(0) = 0,
\end{align*}
which follows from the continuity of $A_\text{KS}\textbf{p} - a_\text{KS}(K)$ in $(K,\textbf{p})$. Therefore $\mathcal{A}_\text{KS}(K;x)$ is uhc.

To show lhc, let $K_n \to K \in [0,1]$ and fix $\textbf{p} = (p_0,p_1) \in \mathcal{A}_\text{KS}(K;x)$. $\mathcal{A}_\text{KS}(K;x)$ is lhc if we can find $\textbf{p}_n \in \mathcal{A}_\text{KS}(K_n;x)$ such that $\textbf{p}_n \to \textbf{p}$. 

If $K = 0$, let $\textbf{p}_n = \textbf{p}\in \mathcal{A}_\text{KS}(0;x)$. In this case, $\textbf{p}_n \in \mathcal{A}_\text{KS}(K_n;x)$ since $\mathcal{A}_\text{KS}(0;x) \subseteq \mathcal{A}_\text{KS}(K_n;x)$ for all $n$. Trivially $\textbf{p}_n \to \textbf{p}$ and therefore lhc is established.

When $K> 0$ let
\begin{align*}
 (p_{0n},p_{1n}) = \textbf{p}_n &= \textbf{p} \cdot \min\{K_n/K,1\}.
\end{align*}
First note that $\textbf{p}_n \in [0,1]^2$ since $\textbf{p} \in [0,1]^2$ and $\min\{K_n/K,1\} \in [0,1]$. Also note that $|p_{0n} - p_{1n}| = |p_{0} - p_{1}| \min\{K_n/K,1\} \leq K \cdot \min\{K_n/K,1\} \leq K_n$ so $\textbf{p}_n \in \mathcal{A}_\text{KS}(K_n;x)$. Finally, $\textbf{p}_n \to \textbf{p}$ since $\min\{K_n/K,1\} \to 1$ as $n\to\infty$. Therefore, $\mathcal{A}_\text{KS}(K;x)$ is lhc. We conclude the proof similarly to that of Lemma \ref{lemma:MSM_prim_assn}.
\end{proof}

\begin{proof}[Proof of Theorem \ref{thm:prob_IDset}]
We prove the four claims of the theorem separately.

\textbf{Claim 1:} By Proposition \ref{prop:noassn_discrete}, the identified set for $\textbf{p}_Y$ under Assumption \ref{assn:NonTrivialinstrument} is $\mathcal{H}_0 \times \mathcal{H}_1$. By assumption \ref{assn:ind_relax}, $\textbf{p}_Y$ lies in $\mathcal{A}_0(\theta) \times \mathcal{A}_1(\theta)$. Therefore, the identified set under Assumption \ref{assn:ind_relax} is given by their intersection.

\bigskip

\textbf{Claim 2:} Fix $x \in \{0,1\}$. To show this claim, we first note that the constant correspondence which maps $\theta$ to $\mathcal{H}_x$ is continuous for all $\theta \in [0,1]$, which can be established from the definition of uhc and lhc. Second, we note that $\mathcal{H}_x$ is a closed set. Third, by Exercise 11.18 b in \cite{Border1985}, the set $\Theta_x = \{\theta \in [0,1]: \mathcal{H}_x \cap \mathcal{A}_x(\theta) \neq \emptyset\}$ is closed. By Assumption \ref{assn:ind_relax}.1, $\mathcal{H}_x \cap \mathcal{A}_x(1) = \mathcal{H}_x \neq \emptyset$ so $\Pi_x(1)$ is non-empty. By construction, the set $\Pi_x(\theta)  = \mathcal{H}_x \cap \mathcal{A}_x(\theta)$ is weakly increasing in $\theta$ so the set $\Theta_x$ must be a closed interval of the kind $[\underline{\theta}_x,1]$. The set $\Pi(\theta)$ is non-empty when $\Pi_0(\theta)$ and $\Pi_1(\theta)$ are non-empty, or when $\theta \in [\underline{\theta}_0,1]\cap[\underline{\theta}_1,1]$. This occurs when $\theta \geq \underline{\theta} \coloneqq \max\{\underline{\theta}_0,\underline{\theta}_1\}$.

\bigskip

\textbf{Claim 3:} This follows from $\mathcal{H}_x$ and $\mathcal{A}_x(\theta)$ being closed convex polytopes, and by the fact that polytopes, closedness, and convexity are preserved by finite intersections and Cartesian products.

\bigskip

\textbf{Claim 4:} As shown above, both $\mathcal{A}_x(\theta)$ and $\mathcal{H}_x$ are closed-valued uhc correspondences for $x = 0,1$. By Proposition 11.21.a in \cite{Border1985}, this implies their intersection is a uhc correspondence. By the assumption that $\text{int}(\mathcal{H}_x \cap \mathcal{A}_x(\theta)) \neq \emptyset$ for $\theta > \underline{\theta}$, that both $\mathcal{H}_x$ and $\mathcal{A}_x(\theta)$ are lhc correspondences, and that they are both convex-valued, we can use Theorem B in \cite{LechickiSpakowski1985} to show that $\mathcal{H}_x \cap \mathcal{A}_x(\theta)$ is lhc for $\theta \in (\underline{\theta},1]$. By Theorem 11.25 in \cite{Border1985}, the correspondence $\Pi_0 \times \Pi_1:[\underline{\theta},1] \rightrightarrows [0,1]^4$ is therefore uhc for $\theta \in [\underline{\theta},1]$ and lhc for $\theta \in (\underline{\theta},1]$. 

We finish proving this claim by showing that $\Pi(\theta)$ is also lhc at $\theta = \underline{\theta}$. To see this, let $\theta_n \to \underline{\theta}$ and let $\textbf{p} \in \Pi(\underline{\theta})$. Since $\underline{\theta}$ is the lower bound of the correspondence's domain, we must have that $\theta_n \geq \underline{\theta}$ for all $n$. Let $\textbf{p}_n = \textbf{p}$. By monotonicity of the $\Pi$ correspondence, $\textbf{p}_n = \textbf{p} \in \Pi(\underline{\theta}) \subseteq \Pi(\theta_n)$ for all $n$. Trivially, $\textbf{p}_n \to \textbf{p}$. Therefore, $\Pi$ is uhc and lhc, and hence continuous, for $\theta \in [\underline{\theta},1]$.
\end{proof}

\begin{proof}[Proof of Corollary \ref{corr:hetTrtBinATEbounds}]

\textbf{Claim 1:} By definition, the identified set for $(\Prob(Y(0)=1),\Prob(Y(1)=1))$ is given by
\begin{align*}
 &\{ (p_{00}(1-p_Z) + p_{01}p_Z,p_{10}(1-p_Z) + p_{11}p_Z): \textbf{p} \in \Pi(\theta) = \Pi_0(\theta) \times \Pi_1(\theta)\}\\
 &= \{ (p_{00}(1-p_Z) + p_{01}p_Z,p_{10}(1-p_Z) + p_{11}p_Z): (p_{00},p_{01}) \in \Pi_0(\theta),  (p_{10},p_{11}) \in \Pi_1(\theta)\}\\
 &= \{ p_{00}(1-p_Z) + p_{01}p_Z: (p_{00},p_{01}) \in \Pi_0(\theta)\} \times \{ p_{10}(1-p_Z) + p_{11}p_Z: (p_{10},p_{11}) \in \Pi_1(\theta)\}.
\end{align*}
For $x=0,1$, the set $\Pi_x(\theta)$ is convex and compact, and the function $(p_{x0},p_{x1})\mapsto p_{x0}(1-p_Z) + p_{x1}p_Z$ is continuous. Hence, the function $p_{x0}(1-p_Z) + p_{x1}p_Z$ attains its minimum and maximum, denoted by $\underline{P}_x(\theta)$ and $\overline{P}_x(\theta)$ respectively. By the convexity of $\Pi_x(\theta)$,  all values in $[\underline{P}_x(\theta),\overline{P}_x(\theta)] = I_x(\theta)$ are attained.

\bigskip

\textbf{Claim 2:} By Theorem \ref{thm:prob_IDset}, the correspondence $\Pi(\theta)$ is continuous and compact-valued for $\theta \in [\underline{\theta},1]$. The function $(p_{x0},p_{x1})\mapsto p_{x0}(1-p_Z) + p_{x1}p_Z$ is continuous for $x=0,1$. Therefore, by the Maximum Theorem (\cite{Border1985} Theorem 12.1  or \cite{Berge1959}), $\overline{P}_x(\theta)$ is continuous. Applying this theorem again to the negative of that function yields that $\underline{P}_x(\theta)$ is continuous. Monotonicity of these function follows from $\Pi(\theta) \subseteq \Pi(\theta')$ for $\theta \leq \theta'$.

\bigskip

\textbf{Claim 3:} This follows from the identified set of $(\Prob(Y(0)=1),\Prob(Y(1)=1))$ being a Cartesian product.
\end{proof}

\section{Proofs for Section \ref{sec:cont_Y}: Continuous Outcomes}\label{appsec:cont_outcomes_proofs}
\begin{proof}[Proof of Proposition \ref{prop:cont_noassbounds}]
 We have by the law of total probability that
\begin{align}\label{eq:cont_noassbounds}
	f_{Y}(y\mid z;x) &= f_{Y(x),X|Z}(y,x\mid z) + f_{Y(x),X|Z}(y,1-x\mid z)\notag\\
	&\geq f_{Y(x),X|Z}(y,x\mid z)\notag\\
	&= f_{Y|X,Z}(y\mid x, z)\pi(x \mid z).
\end{align}
These densities are well defined by assumptions \ref{assn:NonTrivialinstrument} and \ref{assn:contsupport}. By Assumption \ref{assn:compactdensities}, $\textbf{f}_{Y}(\cdot \mid z;x) \in \mathcal{F}_{\text{den},x}$. Combining with equation \eqref{eq:cont_noassbounds}, this yields that $\textbf{f}_{Y} \in \mathcal{H}$. To show sharpness, let $\textbf{f} = (f_{00},f_{01},f_{10},f_{11}) \in \mathcal{H}$. For $x=0,1$ define
 \begin{align*}
  f_{Y(0),Y(1)|X,Z}(y_0,y_1\mid x,z) &= f_{Y(0)|X,Z}(y_0\mid x,z)f_{Y(1)|X,Z}(y_1\mid x,z)\\
  f_{Y(0)|X,Z}(y\mid x,z) &= \begin{cases} 
        f_{Y|X,Z}(y\mid 0,z) &\text{ if } x = 0\\
        (f_{0z}(y) - f_{Y,X|Z}(y,0\mid z))/\pi(1\mid z) &\text{ if } x= 1
        \end{cases}\\
  f_{Y(1)|X,Z}(y\mid x,z) &= \begin{cases} 
        f_{Y|X,Z}(y\mid 1,z) &\text{ if } x = 1\\
        (f_{1z}(y) - f_{Y,X|Z}(y,1\mid z))/\pi(0\mid z) &\text{ if } x= 0.      
        \end{cases}
 \end{align*}
By $\textbf{f} \in \mathcal{H}$, these are all non-negative functions that integrate to 1 over $\mathcal{Y}_x$, hence they are probability density functions. They coincide with the observed distributions $f_{Y|X,Z}$ because $f_{Y|X,Z}(y\mid x,z) = f_{Y(x)|X,Z}(y\mid x,z)$.

Also, we have that
\begin{align*}
 f_{Y}(y\mid z;x) &= f_{Y(x),X|Z}(y,x\mid z) + f_{Y(x),X|Z}(y,1-x\mid z)\\
 &= f_{Y(x),X|Z}(y,x\mid z) + f_{Y(x)|X,Z}(y\mid 1-x, z) \pi(1-x \mid z)\\
 &= f_{Y,X|Z}(y,x\mid z) + \left(f_{xz}(y) - f_{Y,X|Z}(y,x\mid z)\right)\\
 &= f_{xz}(y).
\end{align*}
Therefore, this density $f_{Y(0),Y(1)\mid X,Z}$ is consistent with the known conditional distribution $f_{Y|X,Z}$, with $\textbf{f} \in \mathcal{H}$, and with Assumption \ref{assn:contsupport}. 
\end{proof}

\begin{proof}[Proof of Proposition \ref{prop:high-level_relax_cont}]
 This proposition follows from lemmas \ref{lemma:cont_MS_prim_assn}--\ref{lemma:cont_KSdist_prim_assn}, in which we verify that the four conditions of Assumption \ref{assn:ind_relax_cont} hold for the corresponding sensitivity model.
\end{proof}

\begin{lemma}\label{lemma:cont_MS_prim_assn}
Let assumptions \ref{assn:NonTrivialinstrument}, \ref{assn:contsupport}, and \ref{assn:compactdensities} hold. Then, the correspondence defined in equation \eqref{eq:MSM_cont} satisfies Assumption \ref{assn:ind_relax_cont}.
\end{lemma}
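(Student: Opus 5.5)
We verify the four parts of Assumption \ref{assn:ind_relax_cont} for $\mathcal{A}_\text{MSM}(\lambda;x) = \{(f_0,f_1) \in \mathcal{F}_{\text{den},x}^2 : f_1 \geq (1-\lambda) f_0,\ f_0 \geq (1-\lambda) f_1\}$, following the structure of the proof of Lemma \ref{lemma:MSM_prim_assn}. We then pass to the product $\mathcal{A}_\text{MSM}(\lambda;0)\times\mathcal{A}_\text{MSM}(\lambda;1)$ using Theorem 11.25 in \cite{Border1985}.

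\textbf{Parts 1--3.} For Part 1, at $\lambda=0$ the two inequalities read $f_1 \geq f_0$ and $f_0 \geq f_1$ pointwise, so $f_0=f_1$. At $\lambda = 1$ they reduce to $f_z \geq 0$, which already holds in $\mathcal{F}_{\text{den},x}$. For Part 2, if $\lambda\le\lambda'$ and $f_z\ge 0$, then $(1-\lambda')f_0 \le (1-\lambda)f_0 \le f_1$, and symmetrically for the other inequality, so the sets increase in $\lambda$. For Part 3, the set is defined by two componentwise weak linear inequalities inside the convex set $\mathcal{F}_{\text{den},x}^2$, so it is convex. It is closed in sup-norm because pointwise inequalities are preserved under uniform limits. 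It is also a closed subset of a compact set: $\mathcal{F}_{\text{den},x}$ is closed in the compact $\mathcal{F}_x(\mathcal{Y}_x)$, since nonnegativity passes to uniform limits and integrals over the compact $\mathcal{Y}_x$ converge. Hence each $\mathcal{A}_\text{MSM}(\lambda;x)$ is compact.

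\textbf{Part 4, uhc.} The correspondence is compact-valued into the compact space $\mathcal{F}_{\text{den},x}^2$, so by the closed-graph characterization it suffices to check closed graph. Take $\lambda_n\to\lambda$ and $\textbf{f}_n \to \textbf{f}$ uniformly with $f_{1n} - (1-\lambda_n)f_{0n} \geq 0$. Passing to the limit pointwise gives $f_1-(1-\lambda)f_0\ge 0$, and likewise for the second inequality.

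\textbf{Part 4, lhc.} This is the main step. Fix $\textbf{f}=(f_0,f_1)\in\mathcal{A}_\text{MSM}(\lambda;x)$ and $\lambda_n\to\lambda$. If $\lambda_n\ge\lambda$, monotonicity lets us take $\textbf{f}_n=\textbf{f}$. Otherwise we mimic the discrete construction and shrink toward the average $\bar f = (f_0+f_1)/2$, which lies in $\mathcal{F}_{\text{den},x}$ by convexity. Define $\textbf{f}(\varepsilon) = (1-\varepsilon)\textbf{f} + \varepsilon(\bar f,\bar f)$, which stays in $\mathcal{F}_{\text{den},x}^2$ by convexity. Computing,
\[
f_1(\varepsilon) - (1-\lambda_n) f_0(\varepsilon) = \bigl(1-\tfrac{\varepsilon}{2}(2-\lambda_n)\bigr)(f_1 - f_0) + \lambda_n f_0 + \tfrac{\varepsilon}{2}\lambda_n (f_0-f_1).
\]
Rather than choosing $\varepsilon$ pointwise, I would use a uniform $\varepsilon_n$. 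The known inequality $f_1 \ge (1-\lambda) f_0$ gives $f_1 - (1-\lambda_n)f_0 \geq -(\lambda-\lambda_n) f_0 \geq -(\lambda-\lambda_n)\|f_0\|_\infty$. Mixing toward $\bar f$ adds at least a multiple of $\varepsilon\,\lambda_n \bar f$, which controls the negative part wherever $\bar f$ is bounded below. Because $\bar f$ may vanish, the cleaner route is to write the violation as proportional to $f_0$: by the display above, the required condition is $(1-\varepsilon)\bigl(f_1-(1-\lambda_n)f_0\bigr) + \varepsilon\lambda_n \bar f \geq 0$. Using $f_1 \ge (1-\lambda)f_0$ and $\bar f \ge f_0/2$, it suffices that $-(1-\varepsilon)(\lambda-\lambda_n) + \varepsilon\lambda_n/2 \geq 0$. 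This yields $\varepsilon_n = 2(\lambda-\lambda_n)/(\lambda_n + 2(\lambda-\lambda_n)) \to 0$ whenever $\lambda>0$; the case $\lambda=0$ falls under $\lambda_n\ge\lambda$. By symmetry the same $\varepsilon_n$ handles the second inequality. Then $\|\textbf{f}(\varepsilon_n)-\textbf{f}\|_\infty \le \varepsilon_n \|f_0-f_1\|_\infty \to 0$, which gives lhc. Combining uhc and lhc, and taking the Cartesian product over $x$, completes the argument.
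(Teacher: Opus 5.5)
Your proof is correct. Parts 1--3 and the uhc step match the paper's argument: the paper checks the sequential closed-graph property directly, and it proves compactness of $\mathcal{F}_{\text{den},x}$ separately in Lemma \ref{lemma:compact_subset}.

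The real difference is the lhc construction. The paper moves only one component, setting $f_{0n} = (1-\varepsilon_n)f_0 + \varepsilon_n f_1$ and $f_{1n} = f_1$. Because this is asymmetric, it needs a separate threshold for each inequality, combined as $\varepsilon_n = \max\{1-\lambda_n/\lambda,\ (\lambda-\lambda_n)/(\lambda(1-\lambda_n)),\ 0\}$. You instead shrink both components toward $\bar f$, which gives $f_1(\varepsilon)-(1-\lambda_n)f_0(\varepsilon) = (1-\varepsilon)\bigl(f_1-(1-\lambda_n)f_0\bigr)+\varepsilon\lambda_n\bar f$ and its mirror image. The bound $\bar f\ge f_z/2$ then yields a single symmetric threshold $\varepsilon_n = 2(\lambda-\lambda_n)/(2\lambda-\lambda_n)\in(0,1]$. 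This is the continuous analogue of the paper's discrete construction in Lemma \ref{lemma:MSM_prim_assn}, with one closed-form $\varepsilon_n$ in place of a maximum over per-constraint $\varepsilon_{i,n}$.

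Splitting off the case $\lambda_n\ge\lambda$ via monotonicity also keeps your denominator strictly positive. The paper's $1-\lambda_n$ denominator, by contrast, is undefined in the edge case $\lambda_n=1$. The paper's version has the minor feature of leaving $f_1$ untouched; yours is shorter and symmetric.

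One slip: your first displayed identity is wrong. The correct expansion is $\bigl(1-\tfrac{\varepsilon}{2}(2-\lambda_n)\bigr)(f_1-f_0)+\lambda_n f_0$, without the extra $\tfrac{\varepsilon}{2}\lambda_n(f_0-f_1)$ term. The condition you actually use afterward is the correct one, so the argument is unaffected.
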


\begin{proof}[Proof of Lemma \ref{lemma:cont_MS_prim_assn}]
\textbf{Part 1:} When $\lambda = 0$, we have that
\begin{align*}
 \mathcal{A}_\text{MSM}(0;x) &= \{(f_0,f_1) \in \mathcal{F}_{\text{den},x}^2: -f_{0} + f_{1} \leq 0, f_{0} - f_{1} \leq 0\} = \{\textbf{f} \in \mathcal{F}_{\text{den},x}^2: f_{0} = f_{1}\}
\end{align*}
and that
\begin{align*}
 \mathcal{A}_\text{MSM}(1;x) &= \{(f_0,f_1) \in \mathcal{F}_{\text{den},x}^2: -f_{0x} \leq 0,  - f_{1x}\leq 0\} = \mathcal{F}_{\text{den},x}^2.
\end{align*}

\textbf{Part 2:} Suppose $(f_0,f_1) \in \mathcal{A}_\text{MSM}(\lambda;x)$ and let $\lambda' \in [\lambda,1]$. Then, since densities are non-negative,
\begin{align*}
 A_\text{MSM}(\lambda')\textbf{f} = \begin{pmatrix}
 - f_0 + f_1 - \lambda' f_1\\
 - f_1 + f_0 - \lambda' f_0
\end{pmatrix} \leq \begin{pmatrix}
 - f_0 + f_1 - \lambda f_1\\
 - f_1 + f_0 - \lambda f_0
\end{pmatrix} \leq \begin{pmatrix}
 0\\
 0
 \end{pmatrix}.
\end{align*}
Therefore, $\textbf{f} \in \mathcal{A}_\text{MSM}(\lambda';x)$.

\textbf{Part 3:} To show $\mathcal{A}_\text{MSM}(\lambda;x)$ is closed, suppose that $\textbf{f}_n = (f_{0n},f_{1n}) \in \mathcal{A}_\text{MSM}(\lambda;x)$ and $\textbf{f}_n \to \textbf{f} = (f_0,f_1) \in \mathcal{F}_{\text{den},x}^2$ in sup norm as $n\rightarrow\infty$. We show that $\textbf{f} \in \mathcal{A}_\text{MSM}(\lambda;x)$. To see this, note that sup norm convergence implies pointwise convergence, and therefore
\begin{align*}
 \begin{pmatrix}
 - f_0(y) + (1-\lambda)f_1(y)\\
 - f_1(y) + (1-\lambda)f_0(y)
\end{pmatrix} &= \lim_{n \to \infty} \begin{pmatrix}
 - f_{0n}(y) + (1-\lambda)f_{1n}(y)\\
 - f_{1n}(y) + (1-\lambda)f_{0n}(y)
\end{pmatrix} \leq \begin{pmatrix}
0\\
0
\end{pmatrix}
\end{align*}
for all $y \in \mathcal{Y}_x$.

It is characterized by finitely many weak componentwise inequalities by construction, and it is convex because $\mathcal{F}_{\text{den},x}$ is convex (Assumption \ref{assn:compactdensities}) and by the fact that it is characterized by finitely many linear inequalities.

\textbf{Part 4:} We break this part in two and first show the correspondence is uhc followed by lhc.

To show uhc, let $\lambda_n \to \lambda$, $\textbf{f}_n \in \mathcal{A}_\text{MSM}(\lambda_n;x)$, and $\textbf{f}_n \to \textbf{f}$ in sup-norm. The correspondence is uhc at $\lambda$ if $\textbf{f} \in \mathcal{A}_\text{MSM}(\lambda;x)$. This is the case because
\begin{align*}
 \begin{pmatrix}
 - f_0(y) + (1-\lambda)f_1(y)\\
 - f_1(y) + (1-\lambda)f_0(y)
\end{pmatrix} &= \lim_{n \to \infty} \begin{pmatrix}
 - f_{0n}(y) + (1-\lambda_n)f_{1n}(y)\\
 - f_{1n}(y) + (1-\lambda_n)f_{0n}(y)
\end{pmatrix} \leq 0
\end{align*}
for all $y \in \mathcal{Y}_x$, where the equality follows from the pointwise convergence of $(\textbf{f}_n(y),\lambda_n)$ to $(\textbf{f}(y),\lambda)$.

To show lhc, let $\lambda_n \to \lambda$ and $\textbf{f} = (f_0,f_1) \in \mathcal{A}_\text{MSM}(\lambda;x)$. We aim to find $\textbf{f}_n = (f_{0n},f_{1n}) \in \mathcal{A}_\text{MSM}(\lambda_n;x)$ such that $\|\textbf{f}_n - \textbf{f}\|_\infty \to 0$. If $\lambda = 0$, then let $\textbf{f}_n = \textbf{f}$, where $\textbf{f} \in \mathcal{A}_\text{MSM}(0;x) \subseteq \mathcal{A}_\text{MSM}(\lambda_n,x)$ for all $\lambda_n$. Therefore, $\mathcal{A}_\text{MSM}(\lambda;x)$ is lhc at $\lambda = 0$.

Let $\lambda \in (0,1]$ and 
\begin{align*}
 \varepsilon_n &= \max\left\{1 - \frac{\lambda_n}{\lambda},\frac{\lambda - \lambda_n}{\lambda(1-\lambda_n)},0\right\}\\
 f_{0n} &= (1-\varepsilon_n)f_0 + \varepsilon_n f_1\\
 f_{1n} &= f_1.
\end{align*}
We see that $\varepsilon_n \to 0$ as $\lambda_n \to \lambda$. Trivially, $\varepsilon_n \geq 0$. $\varepsilon_n \leq 1$ because $1 - \lambda_n/\lambda \leq 1$ and because $\frac{\lambda - \lambda_n}{\lambda(1-\lambda_n)} \leq \frac{\lambda - \lambda\lambda_n}{\lambda(1-\lambda_n)} = 1$. Therefore, $(1-\varepsilon_n)f_0 + \varepsilon_n f_1$ is a convex combination of $f_0$ and $f_1$ which implies $f_{0n} \in \mathcal{F}_{\text{den},x}$ by the convexity of $\mathcal{F}_{\text{den},x}$.  

The first inequality characterizing the Marginal Sensitivity Model is satisfied at $(\lambda_n,\textbf{f}_n)$ because
\begin{align*}
 - f_{0n} + (1-\lambda_n)f_{1n} &= - (1-\varepsilon_n)f_0 - \varepsilon_n f_1 + (1-\lambda_n)f_1\\
 &= (1-\varepsilon_n)\left(-f_0 + \frac{1-\lambda_n - \varepsilon_n}{1-\varepsilon_n}f_1\right)\\
 &= \underbrace{(1-\varepsilon_n)}_{\geq 0}\underbrace{\left(-f_0 + (1-\lambda) f_1\right)}_{\leq 0} + f_1 \left( -\lambda_n  + \lambda  - \lambda\varepsilon_n \right)\\
 &\leq f_1 \left( -\lambda_n  + \lambda  - \lambda\max\left\{1 - \frac{\lambda_n}{\lambda},\frac{\lambda - \lambda_n}{\lambda(1-\lambda_n)},0\right\} \right)\\
 &\leq  \underbrace{f_1}_{\geq 0}\underbrace{ \left( -\lambda_n  + \lambda  - \lambda\left(1 - \frac{\lambda_n}{\lambda}\right) \right)}_{=0}\\
 &= 0.
\end{align*}
The first inequality follows from $\varepsilon_n \leq 1$ and $\textbf{f}\in \mathcal{A}_\text{MSM}(\lambda;x)$. The second follows from $f_1 \geq 0$ and from the definition of $\varepsilon_n$. Therefore, $\textbf{f}_n$ satisfies the first inequality.

It also satisfies the second inequality because
\begin{align*}
 (1-\lambda_n)f_{0n} - f_{1n} &= (1-\lambda_n)(1-\varepsilon_n)f_0 - (1-(1-\lambda_n)\varepsilon_n)f_1\\
 &= (1-(1-\lambda_n)\varepsilon_n)\left(\left(\frac{(1-\lambda_n)(1-\varepsilon_n)}{1 - (1-\lambda_n)\varepsilon_n} - (1-\lambda)\right)f_0 + ((1-\lambda)f_0 - f_1) \right)\\
 &= (1-(1-\lambda_n)\varepsilon_n)((1-\lambda)f_0 - f_1) + f_0((1-\lambda_n)(1-\varepsilon_n) - (1-\lambda)(1-(1-\lambda_n)\varepsilon_n))\\
 &\leq f_0\left(-\max\left\{1 - \frac{\lambda_n}{\lambda},\frac{\lambda - \lambda_n}{\lambda(1-\lambda_n)},0\right\}\lambda(1-\lambda_n) + \lambda - \lambda_n\right)\\
 &\leq f_0\left(-\frac{\lambda - \lambda_n}{\lambda(1-\lambda_n)}\lambda(1-\lambda_n) + \lambda - \lambda_n\right)\\
 &= 0.
\end{align*}
The first inequality follows from $\varepsilon_n \leq 1$ and $\textbf{f}\in \mathcal{A}_\text{MSM}(\lambda;x)$. The second follows from $f_1 \geq 0$ and from the definition of $\varepsilon_n$. Therefore, $\textbf{f}_n$ satisfies both inequalities. This implies that $\mathcal{A}_\text{MSM}(\lambda;x)$ is lhc and concludes the proof of Part 4.
\end{proof}

\begin{lemma}\label{lemma:cont_cdep_prim_assn}
Let assumptions \ref{assn:NonTrivialinstrument}, \ref{assn:contsupport}, and \ref{assn:compactdensities} hold. Then, the correspondence defined in equation \eqref{eq:cdep_cont} satisfies Assumption \ref{assn:ind_relax_cont}.
\end{lemma}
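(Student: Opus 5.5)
The plan is to mirror the proof of Lemma \ref{lemma:cont_MS_prim_assn}, checking the four parts of Assumption \ref{assn:ind_relax_cont} for $\mathcal{A}_\text{$c$-dep}(c;x)=\{\textbf{f}\in\mathcal{F}_{\text{den},x}^2: -f_0+k_1(c)f_1\le 0,\ k_0(c)f_0-f_1\le 0\}$. Two properties of $k_z(c)$, used already in Lemma \ref{lemma:cdep_prim_assn}, will carry the argument: $k_z$ is continuous and nonincreasing on $[0,1]$, with $k_z(0)=1$ and $k_z(c)=0$ once $c\ge \Prob(Z=1-z)$ (in particular $k_z(1)=0$).

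\textbf{Part 1} then follows at once. At $c=0$ the inequalities become $f_1\le f_0$ and $f_0\le f_1$, so $f_0=f_1$. At $c=1$ they become $-f_0\le 0$ and $-f_1\le 0$, which every density satisfies.

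\textbf{Part 2} uses nonnegativity of densities. For $c'\ge c$ we have $k_1(c')f_1\le k_1(c)f_1$ and $k_0(c')f_0\le k_0(c)f_0$, so both left-hand sides can only decrease.

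\textbf{Part 3}: convexity follows from linearity of the constraints together with convexity of $\mathcal{F}_{\text{den},x}$. Closedness in sup-norm follows from pointwise passage to the limit.

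\textbf{Part 4, uhc.} Take $c_n\to c$, $\textbf{f}_n\in\mathcal{A}_\text{$c$-dep}(c_n;x)$ and $\textbf{f}_n\to\textbf{f}$ in sup-norm. Then for each $y$, $(k_z(c_n),\textbf{f}_n(y))\to(k_z(c),\textbf{f}(y))$, so the limit inequalities hold pointwise. Compactness of values comes from Assumption \ref{assn:compactdensities} and closedness.

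\textbf{Part 4, lhc.} This is the main step. If $c=0$, or more generally if $c_n\ge c$, monotonicity lets us take $\textbf{f}_n=\textbf{f}$. Otherwise, given $\textbf{f}\in\mathcal{A}_\text{$c$-dep}(c;x)$, I would shrink both densities toward their common midpoint $\bar f=(f_0+f_1)/2\in\mathcal{F}_{\text{den},x}$ (a convex combination, so still in $\mathcal{F}_{\text{den},x}$). Set
\[
f_{zn}=(1-\varepsilon_n)f_z+\varepsilon_n\bar f,\qquad z\in\{0,1\}.
\]
Writing $k_z=k_z(c)$ and $k_{zn}=k_z(c_n)$, the first constraint for $\textbf{f}_n$ equals
\[
\Big(1-\tfrac{\varepsilon_n}{2}\Big)\big(k_{1n}f_1-f_0\big)+\tfrac{\varepsilon_n}{2}\big(k_{1n}f_0-f_1\big)-\tfrac{\varepsilon_n}{2}(1-k_{1n})(f_0+f_1).
\]
Using $k_1f_1\le f_0$ and $k_0f_0\le f_1$, I will bound this by a multiple of $f_0+f_1$, with coefficient of the form $(k_{1n}-k_1)_+ \,C-\varepsilon_n\cdot(\text{positive constant})$. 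Such a coefficient is nonpositive once $\varepsilon_n$ is a suitable multiple of $(k_{1n}-k_1)_+ + (k_{0n}-k_0)_+$. The second constraint is handled symmetrically.

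When $c>0$ and $k_{zn}<1$, the constants $1-k_{zn}$ stay bounded away from zero. With $\varepsilon_n$ defined as the maximum of the two resulting ratios and $0$, continuity of $k_z$ gives $\varepsilon_n\to 0$. Hence $\|\textbf{f}_n-\textbf{f}\|_\infty\le\varepsilon_n\|\textbf{f}\|_\infty\to 0$, using boundedness of the functions in $\mathcal{F}_x$.

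The obstacle I expect is exactly this choice of $\varepsilon_n$, together with the edge case where $1-k_{zn}$ is near $0$. That case arises only when $c$ is near $0$, where $k_z(c)\to 1$. Since $c_n<c$ in this branch and $k_z$ is nonincreasing, $k_z(c_n)\ge k_z(c)$; if $k_z(c)<1$ the gap from one is uniform for large $n$, so the only degenerate case is $c=0$, which is already handled. Once lhc and uhc are established, continuity of the product correspondence follows from Theorem 11.25 in \cite{Border1985}, as in Lemma \ref{lemma:MSM_prim_assn}.
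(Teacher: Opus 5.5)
Your plan works, but the displayed identity for the first constraint is wrong. Since $f_{0n}=(1-\tfrac{\varepsilon_n}{2})f_0+\tfrac{\varepsilon_n}{2}f_1$ and $f_{1n}=(1-\tfrac{\varepsilon_n}{2})f_1+\tfrac{\varepsilon_n}{2}f_0$, your first two terms already equal $k_{1n}f_{1n}-f_{0n}$. The third term $-\tfrac{\varepsilon_n}{2}(1-k_{1n})(f_0+f_1)$ is therefore counted twice. Because it is nonpositive, showing your expression is $\le 0$ would not establish the constraint. The correct rearrangement is
\[
k_{1n}f_{1n}-f_{0n}=(1-\varepsilon_n)\big(k_{1n}f_1-f_0\big)-\tfrac{\varepsilon_n}{2}(1-k_{1n})(f_0+f_1)\le\Big[(k_{1n}-k_1)_+-\tfrac{\varepsilon_n}{2}(1-k_{1n})\Big](f_0+f_1).
\]
This uses only $k_1f_1\le f_0$ and gives exactly the coefficient you wanted. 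The second constraint is the mirror image, with the roles of $0$ and $1$ swapped.

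Hence the choice $\varepsilon_n=\min\{1,\max_z 2(k_{zn}-k_z)_+/(1-k_{zn})\}$ works, with $\varepsilon_n=1$ if $c_n=0$. The bracket is decreasing in $\varepsilon_n$, and $\varepsilon_n=1$ gives $(\bar f,\bar f)\in\mathcal{A}_\text{$c$-dep}(0;x)$. Finally, $\varepsilon_n\to 0$ because $k_z(c)<1$ for every $c>0$; you use this fact implicitly and should state it.

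With that repair, your lhc step is a genuinely different construction from the paper's. The paper keeps $f_{1n}=f_1$ and moves only $f_0$, taking $f_{0n}=(1-\varepsilon_n)f_0+\varepsilon_n f_1$ with $\varepsilon_n=\max\{1-\frac{1-k_1(c_n)}{1-k_1(c)},\frac{k_0(c_n)-k_0(c)}{(1-k_0(c))k_0(c_n)},0\}$, in parallel with its continuous MSM proof. You instead reuse the symmetric shrinkage toward the diagonal from the paper's discrete-outcome proofs.

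The one-sided version leaves one density untouched and has closed-form thresholds. However, its second ratio has $k_0(c_n)$ in the denominator, so it is undefined whenever $k_0(c_n)=0$, i.e.\ $c_n\ge\Prob(Z=1)$. Worse, at $c=\Prob(Z=1)$ with $c_n<c$, $c_n\to c$, the ratio equals $1$ for every $n$. Then $\textbf{f}_n=(f_1,f_1)$ for all $n$, which does not converge to $\textbf{f}$ unless $f_0=f_1$. In fact no sequence holding $f_1$ fixed can work there if $f_1(y)=0<f_0(y)$ at some $y$, which the constraints permit once $k_0(c)=0$: in that case $f_{0n}(y)\le f_1(y)/k_0(c_n)=0$. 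Your thresholds only need $1-k_z(c_n)$ bounded away from zero, so the symmetric construction also handles these boundary values of $c$.
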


\begin{proof}[Proof of Lemma \ref{lemma:cont_cdep_prim_assn}]
\textbf{Part 1:} When $c = 0$, we have that
\begin{align*}
 \mathcal{A}_\text{$c$-dep}(0;x) &= \{(f_0,f_1) \in \mathcal{F}_{\text{den},x}^2: -f_{0} + f_{1} \leq 0, f_{0} - f_{1} \leq 0 \} = \{\textbf{f} \in \mathcal{F}_{\text{den},x}^2: f_{0} = f_{1}\}
\end{align*}
and that
\begin{align*}
 \mathcal{A}_\text{$c$-dep}(1;x) &= \{(f_0,f_1) \in \mathcal{F}_{\text{den},x}^2: -f_{0} \leq 0,  - f_{1} \leq 0 \} = \mathcal{F}_{\text{den},x}^2.
\end{align*}

\textbf{Part 2:} Suppose $\textbf{f} = (f_0,f_1) \in \mathcal{A}_\text{$c$-dep}(c;x)$ and let $c' \in [c,1]$. Then, since $k_z(c)$ is nonincreasing,
\begin{align*}
 A_\text{$c$-dep}(c')\textbf{f} &= \begin{pmatrix}
 - f_0 + k_1(c')f_1 \\
 - f_1 + k_0(c')f_0
\end{pmatrix} \leq \begin{pmatrix}
 - f_0 + k_1(c)f_1 \\
 - f_1 + k_0(c)f_0
\end{pmatrix} \leq \begin{pmatrix}
 0\\
 0
 \end{pmatrix}.
\end{align*}
Therefore, $\textbf{f} \in \mathcal{A}_\text{$c$-dep}(c';x)$.

\textbf{Part 3:} We show $\mathcal{A}_\text{$c$-dep}(c;x)$ is closed following the same arguments as in the proof of Lemma \ref{lemma:cont_MS_prim_assn} and the continuity of $k_z(c)$ in $c$.

\textbf{Part 4:} We break this part into two and first show the correspondence is uhc followed by lhc.

To show uhc, let $c_n \to c$, $\textbf{f}_n \in \mathcal{A}_\text{$c$-dep}(c_n;x)$, and $\textbf{f}_n \to \textbf{f}$ in sup-norm. The correspondence is uhc at $c$ if $\textbf{f} \in \mathcal{A}_\text{$c$-dep}(c;x)$. This is the case because
\begin{align*}
 \begin{pmatrix}
 - f_0(y) + k_1(c)f_1(y)\\
 - f_1(y) + k_0(c)f_0(y)
\end{pmatrix} &= \lim_{n \to \infty} \begin{pmatrix}
 - f_{0n}(y) + k_1(c_n)f_{1n}(y)\\
 - f_{1n}(y) + k_0(c_n)f_{0n}(y)
\end{pmatrix} \leq \begin{pmatrix}
0\\
0
\end{pmatrix}
\end{align*}
where the equality follows from the point-wise convergence of $(\textbf{f}_n(y),c_n)$ to $(\textbf{f}(y),c)$ and the continuity of $k_0$ and $k_1$.

To show lhc, let $c_n \to c$ and $\textbf{f}\in \mathcal{A}_\text{$c$-dep}(c;x)$. We aim to find $\textbf{f}_n \in \mathcal{A}_\text{$c$-dep}(c_n;x)$ such that $\textbf{f}_n \to \textbf{f}$. $\mathcal{A}_\text{$c$-dep}(c;x)$ is lhc at $c=0$ following the same arguments as in the proof of Lemma \ref{lemma:cont_MS_prim_assn}.

Let $c \in (0,1]$ and 
\begin{align*}
 \varepsilon_n &= \max\left\{1 - \frac{1-k_1(c_n)}{1-k_1(c)},\frac{k_0(c_n) - k_0(c)}{(1-k_0(c))k_0(c_n)},0\right\}\\
 f_{0n} &= (1-\varepsilon_n)f_0 + \varepsilon_n f_1\\
 f_{1n} &= f_1.
\end{align*}
By the continuity of $k_0$ and $k_1$, we see that $\varepsilon_n \to 0$ as $c_n \to c$. Trivially, $\varepsilon_n \geq 0$. $\varepsilon_n \leq 1$ because $1 - \frac{1-k_1(c_n)}{1-k_1(c)} \leq 1$ and because $\frac{k_0(c_n) - k_0(c)}{(1-k_0(c))k_0(c_n)} \leq \frac{k_0(c_n) - k_0(c)k_0(c_n)}{(1-k_0(c))k_0(c_n)} = 1$. Therefore, $(1-\varepsilon_n)f_0 + \varepsilon_n f_1 = f_{0n}$ is a convex combination of elements of $\mathcal{F}_{\text{den},x}$, hence $f_{0n} \in \mathcal{F}_{\text{den},x}$. 

The first inequality characterizing $c$-dependence is satisfied at $(c_n,\textbf{f}_n)$ because
\begin{align*}
 - f_{0n} + k_1(c_n)f_{1n} &= - (1-\varepsilon_n)f_0 - \varepsilon_n f_1 + k_1(c_n)f_1\\
 &= (1-\varepsilon_n)\left(-f_0 + \frac{k_1(c_n) - \varepsilon_n}{1-\varepsilon_n}f_1\right)\\
 &= \underbrace{(1-\varepsilon_n)}_{\geq 0}\underbrace{\left(-f_0 + k_1(c) f_1\right)}_{\leq 0} + f_1 \left( k_1(c_n) - k_1(c)  - (1-k_1(c_n))\varepsilon_n \right)\\
 &\leq f_1 \left( k_1(c_n) - k_1(c)  - (1-k_1(c_n))\max\left\{1 - \frac{1-k_1(c_n)}{1-k_1(c)},\frac{k_0(c_n) - k_0(c)}{(1-k_0(c))k_0(c_n)},0\right\} \right)\\
 &\leq  \underbrace{f_1}_{\geq 0}\underbrace{ \left( k_1(c_n) - k_1(c)  - (1-k_1(c_n))\left(1 - \frac{1-k_1(c_n)}{1-k_1(c)}\right) \right)}_{=0}\\
 &= 0.
\end{align*}
The first inequality follows from $\varepsilon_n \leq 1$ and $\textbf{f}\in \mathcal{A}_\text{$c$-dep}(c;x)$. The second follows from $f_1 \geq 0$ and from the definition of $\varepsilon_n$. Therefore, $\textbf{f}_n$ satisfies the first inequality.

It also satisfies the second inequality because
\begin{align*}
 k_0(c_n)f_{0n} - f_{1n} &= k_0(c_n)(1-\varepsilon_n)f_0 - (1-k_0(c_n)\varepsilon_n)f_1\\
 &= (1-k_0(c_n)\varepsilon_n)\left(\left(\frac{k_0(c_n)(1-\varepsilon_n)}{1 - k_0(c_n)\varepsilon_n} - k_0(c)\right)f_0 + (k_0(c)f_0 - f_1) \right)\\
 &= (1-k_0(c_n)\varepsilon_n)(k_0(c)f_0 - f_1) + f_0(k_0(c_n)(1-\varepsilon_n) - k_0(c)(1-k_0(c_n)\varepsilon_n))\\
 &\leq f_0\left(-\max\left\{1 - \frac{1-k_1(c_n)}{1-k_1(c)},\frac{k_0(c_n) - k_0(c)}{(1-k_0(c))k_0(c_n)},0\right\}(1-k_0(c))k_0(c_n) + k_0(c_n) - k_0(c)\right)\\
 &\leq f_0\left(-\frac{k_0(c_n) - k_0(c)}{(1-k_0(c))k_0(c_n)}(1-k_0(c))k_0(c_n) + k_0(c_n) - k_0(c)\right)\\
 &= 0.
\end{align*}
The first inequality follows from $\varepsilon_n \leq 1$ and $\textbf{f}\in \mathcal{A}_\text{$c$-dep}(c;x)$. The second follows from $f_1 \geq 0$ and from the definition of $\varepsilon_n$. Therefore, $\textbf{f}_n$ satisfies both inequalities, which implies that $\mathcal{A}_\text{$c$-dep}(c;x)$ is lhc. This concludes the proof of Part 4.
\end{proof}

\begin{lemma}\label{lemma:cont_KSdist_prim_assn}
Let assumptions \ref{assn:NonTrivialinstrument}, \ref{assn:contsupport}, and \ref{assn:compactdensities} hold. Then, the correspondence defined in equation \eqref{eq:KS_cont} satisfies Assumption \ref{assn:ind_relax_cont}.
\end{lemma}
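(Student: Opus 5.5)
The plan is to verify the four parts of Assumption \ref{assn:ind_relax_cont} in turn, following the template of Lemmas \ref{lemma:cont_MS_prim_assn} and \ref{lemma:cont_cdep_prim_assn}. Throughout, write $\kappa(K) \coloneqq K/(1-K) \in [0,+\infty]$, which is continuous and nondecreasing on $[0,1]$ in the extended sense. We then have
\[
\mathcal{A}_\text{KS}(K;x) = \{(f_0,f_1) \in \mathcal{F}_{\text{den},x}^2 : |f_0(y) - f_1(y)| \le \kappa(K) \text{ for all } y\}.
\]

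\textbf{Parts 1--3.} For Part 1, $\kappa(0)=0$ forces $f_0=f_1$. At $K=1$ we have $\kappa(1)=+\infty$, so there is no restriction and $\mathcal{A}_\text{KS}(1;x)=\mathcal{F}_{\text{den},x}^2$. Part 2 follows because $\kappa$ is nondecreasing. For Part 3, convexity holds since the constraint is two linear inequalities and $\mathcal{F}_{\text{den},x}$ is convex. Closedness in sup-norm holds because sup-norm limits converge pointwise and preserve the weak pointwise inequalities.

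\textbf{Part 4, upper hemicontinuity.} Values are closed subsets of the sup-norm compact set $\mathcal{F}_{\text{den},x}^2$, hence compact. Take $K_n\to K$, $\textbf{f}_n \in \mathcal{A}_\text{KS}(K_n;x)$ and $\textbf{f}_n\to\textbf{f}$. Then $|f_0(y)-f_1(y)| = \lim_n |f_{0n}(y)-f_{1n}(y)| \le \liminf_n \kappa(K_n) = \kappa(K)$ by continuity of $\kappa$. This argument also covers $K=1$ trivially.

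\textbf{Part 4, lower hemicontinuity.} This is the main step. Fix $\textbf{f}\in\mathcal{A}_\text{KS}(K;x)$ and $K_n\to K$. If $K=0$, take $\textbf{f}_n=\textbf{f}$, which lies in every set by monotonicity. If $K>0$, rescaling $\textbf{f}$ as in the discrete Lemma \ref{lemma:KS_prim_assn} would break the integrate-to-one constraint. Instead, shrink toward a common density as in the MSM proof. Set
\[
\varepsilon_n \coloneqq \max\{1-\kappa(K_n)/\kappa(K),0\},
\]
read as $0$ when $\kappa(K_n)=+\infty$, and $\max\{1-\kappa(K_n)/\|f_0-f_1\|_\infty,0\}$ when $K=1$. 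Define $f_{0n}=(1-\varepsilon_n)f_0+\varepsilon_n f_1$ and $f_{1n}=f_1$.

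Each $f_{0n}$ is a convex combination of elements of $\mathcal{F}_{\text{den},x}$, so it lies in $\mathcal{F}_{\text{den},x}$. Moreover $f_{0n}-f_{1n} = (1-\varepsilon_n)(f_0-f_1)$, whose sup-norm is at most $(1-\varepsilon_n)\kappa(K)\le\kappa(K_n)$ for $K<1$. The case $K=1$ is analogous, since $\|f_0-f_1\|_\infty<\infty$ because $\mathcal{F}_x$ consists of bounded functions.

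Finally, $\varepsilon_n\to0$ by continuity of $\kappa$. Hence $\|\textbf{f}_n-\textbf{f}\|_\infty = \varepsilon_n\|f_0-f_1\|_\infty\to0$, which establishes lower hemicontinuity. Continuity of the product correspondence over $x\in\{0,1\}$ then follows as in Lemma \ref{lemma:MSM_prim_assn}, using Theorem 11.25 of \cite{Border1985}.

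The delicate points are the endpoint $K=1$, where $\kappa=+\infty$, and ensuring the approximating sequence stays inside the density class. Both are handled by the convex-combination construction together with boundedness of $\mathcal{F}_x$.
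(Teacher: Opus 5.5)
Your proposal is correct and follows essentially the same route as the paper's proof: Parts 1--3 are checked directly, upper hemicontinuity comes from pointwise limits, and lower hemicontinuity keeps $f_1$ fixed while replacing $f_0$ with a convex combination of $f_0$ and $f_1$ that shrinks $f_0-f_1$ by the factor $\min\{\kappa(K_n)/\kappa(K),1\}$ (with $\|f_0-f_1\|_\infty$ in place of $\kappa(K)$ when $K=1$). Your parametrization, in which the weight $\varepsilon_n$ on $f_1$ tends to $0$, is the internally consistent version of the paper's construction: the paper's displayed $\varepsilon_n$ takes a minimum with $0$ where it should take a minimum with $1$, is used as the weight on $f_0$ and so must tend to $1$ rather than $0$, and its $K=1$ weight lacks your cap at $1$.
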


\begin{proof}[Proof of Lemma \ref{lemma:cont_KSdist_prim_assn}]
\textbf{Part 1:} When $K = 0$, we have that
\begin{align*}
 \mathcal{A}_\text{KS}(0;x) &= \{(f_0,f_1) \in \mathcal{F}_{\text{den},x}^2: f_{0} - f_{1} \leq 0, -f_{0} + f_{1} \leq 0  \} = \{\textbf{f} \in \mathcal{F}_{\text{den},x}^2: f_{0} = f_{1}\}
\end{align*}
and that
\begin{align*}
 \mathcal{A}_\text{KS}(1;x) &= \{(f_0,f_1) \in \mathcal{F}_{\text{den},x}^2: f_{0} - f_{1} \leq +\infty,  -f_{0} + f_{1} \leq +\infty \} = \mathcal{F}_{\text{den},x}^2
\end{align*}
since the densities are bounded by Assumption \ref{assn:compactdensities}.

\textbf{Part 2:} Suppose $(f_0,f_1) \in \mathcal{A}_\text{KS}(K;x)$ and let $K' \in [K,1]$. Then, 
\begin{align*}
 |f_{0} - f_{1}| &\leq \frac{K}{1-K} \leq \frac{K'}{1-K'}
\end{align*}
Therefore, $\textbf{f} \in \mathcal{A}_\text{KS}(K';x)$.

\textbf{Part 3:} To show $\mathcal{A}_\text{KS}(K;x)$ is closed, let $\textbf{f}_n = (f_{0n},f_{1n})\in \mathcal{A}_\text{KS}(K;x)$ converge in the sup norm to $\textbf{f} = (f_0,f_1) \in \mathcal{F}_{\text{den},x}^2$. We show that $\textbf{f} \in \mathcal{A}_\text{KS}(K;x)$. By uniform convergence,
\begin{align*}
 |f_{0}(y) - f_{1}(y)| &= \lim_{n \to \infty} |f_{0n}(y) - f_{1n}(y)| \leq \frac{K}{1-K}
\end{align*}
so $\textbf{f} \in \mathcal{A}_\text{KS}(K;x)$. It is convex because it is characterized by finitely many componentwise weak inequalities.

\textbf{Part 4:} We again break this part into two and first show the correspondence is uhc followed by lhc.

To show uhc, let $K_n \to K \in [0,+\infty]$, $\textbf{f}_n \in \mathcal{A}_\text{KS}(K_n;x)$, and $\textbf{f}_n \to \textbf{f}$ in sup-norm. The correspondence is uhc at $K$ if $\textbf{f} \in \mathcal{A}_\text{KS}(K;x)$. This is the case because
\begin{align*}
 |f_{0}(y) - f_{1}(y)| - \frac{K}{1-K} &= \lim_{n \to \infty} \left(|f_{0n}(y) - f_{1n}(y)| - \frac{K_n}{1-K_n}\right) \leq 0
\end{align*}
where the equality follows from the pointwise convergence of $(\textbf{f}_n(y),K_n)$ to $(\textbf{f}(y),K)$ for all $y \in \mathcal{Y}_x$. 

To show lhc, let $K_n \to K$ and $\textbf{f} = (f_0,f_1) \in \mathcal{A}_\text{KS}(K;x)$. We aim to find $\textbf{f}_n = (f_{0n},f_{1n}) \in \mathcal{A}_\text{KS}(K_n;x)$ such that $\textbf{f}_n \to \textbf{f}$. If $K = 0$, then let $\textbf{f}_n = \textbf{f}$, where $\textbf{f} \in \mathcal{A}_\text{KS}(0;x) \subseteq \mathcal{A}_\text{KS}(K_n,x)$ for all $K_n$. Therefore, $\mathcal{A}_\text{KS}(K;x)$ is lhc at $K = 0$.

Let $K \in (0,1)$ and 
\begin{align*}
 \varepsilon_n &= \min\left\{\frac{K_n/(1-K_n)}{K/(1-K)},0\right\}\\
 f_{0n} &= \varepsilon_n f_0 + (1-\varepsilon_n) f_1\\
 f_{1n} &= f_1.
\end{align*}
We see that $\varepsilon_n \in [0,1]$ and $\varepsilon_n \to 0$ as $K_n \to K$. Therefore, $\varepsilon_n f_0 + (1-\varepsilon_n) f_1 \in \mathcal{F}_{\text{den},x}$ because $\mathcal{F}_{\text{den},x}$ is convex. 

We have that
\begin{align*}
 \|f_{0n} - f_{1n}\|_\infty &= \varepsilon_n\|f_{0} - f_{1}\|_\infty \leq \frac{K_n/(1-K_n)}{K/(1-K)}\|f_{0} - f_{1}\|_\infty \leq K_n/(1-K_n),
\end{align*}
so $\textbf{f}_n \in \mathcal{A}_\text{KS}(K_n;x)$. We also have that $\|f_{0n} - f_0\|_\infty = \varepsilon_n \|f_0 - f_1\|_\infty \to 0$. 

The case where $K  = 1$ can also be shown by letting $\varepsilon_n = K_n/((1-K_n)\|f_1 - f_0\|_\infty)\mathbbm{1}(\|f_1 - f_0\|_\infty \neq 0)$ and recalling that $\|f_1 - f_0\|_\infty \leq \|f_1\|_\infty + \|f_0\|_\infty < \infty$ by Assumption \ref{assn:compactdensities}.

Therefore, $\mathcal{A}_\text{KS}(K;x)$ is lhc at $K$. This concludes the proof of Part 4.
\end{proof}

\begin{lemma}\label{lemma:compact_subset}
 Let Assumption \ref{assn:compactdensities} hold. Then $\mathcal{F}_{\text{den},x}$ is compact under $\|\cdot\|_\infty$.
\end{lemma}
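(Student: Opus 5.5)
Since $\mathcal{F}_{\text{den},x} = \mathcal{F}_x(\mathcal{Y}_x) \cap \{f : \int_{\mathcal{Y}_x} f(y)\,dy = 1\} \cap \{f : f \geq 0\}$, and $\mathcal{F}_x(\mathcal{Y}_x)$ is compact under $\|\cdot\|_\infty$ by Assumption \ref{assn:compactdensities}, the plan is to show that $\mathcal{F}_{\text{den},x}$ is a closed subset of the compact set $\mathcal{F}_x(\mathcal{Y}_x)$. A closed subset of a compact subset of a metric space is compact, so this suffices. Note that we are working in the space of bounded functions on $\mathcal{Y}_x$ equipped with the sup-norm metric, which is a metric space. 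Closedness relative to $\mathcal{F}_x(\mathcal{Y}_x)$ is enough; in fact, $\mathcal{F}_x(\mathcal{Y}_x)$ is compact and hence closed, so relative and absolute closedness coincide.

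The key steps are as follows. Take a sequence $f_n \in \mathcal{F}_{\text{den},x}$ with $\|f_n - f\|_\infty \to 0$. First, $f \in \mathcal{F}_x(\mathcal{Y}_x)$ because that set is compact, hence closed. Second, nonnegativity is preserved, since sup-norm convergence implies pointwise convergence, so $f(y) = \lim_n f_n(y) \geq 0$ for every $y$. Third, the normalization is preserved. Because $\mathcal{Y}_x$ is a compact interval (Assumption \ref{assn:contsupport}), it has finite Lebesgue measure $|\mathcal{Y}_x|$, and therefore
\[
\left|\int_{\mathcal{Y}_x} f_n(y)\,dy - \int_{\mathcal{Y}_x} f(y)\,dy\right| \leq |\mathcal{Y}_x|\,\|f_n - f\|_\infty \to 0,
\]
which gives $\int_{\mathcal{Y}_x} f = 1$. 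Equivalently, the map $f \mapsto \int_{\mathcal{Y}_x} f$ is a continuous linear functional, so the constraint set $\{f : \int_{\mathcal{Y}_x} f = 1\}$ is closed, and the same holds for each pointwise evaluation constraint defining $\{f \geq 0\}$.

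The only delicate point is that $f$ must be integrable, so that $\int_{\mathcal{Y}_x} f$ is well defined. The sup-norm limit of measurable functions is measurable, and $f$ is bounded on a set of finite measure, so $f$ is integrable. Alternatively, $f \in \mathcal{F}_x(\mathcal{Y}_x)$ is already implicitly assumed to be a bounded function for which the integral makes sense. With integrability in hand, the argument closes: $\mathcal{F}_{\text{den},x}$ is an intersection of closed sets with a compact set, hence compact.
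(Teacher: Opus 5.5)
Your proposal is correct and follows essentially the same route as the paper. Both show that $\mathcal{F}_{\text{den},x}$ is a closed subset of the compact set $\mathcal{F}_x(\mathcal{Y}_x)$: nonnegativity passes to pointwise limits, and the normalization passes to the limit via $\bigl|\int_{\mathcal{Y}_x} f_n - \int_{\mathcal{Y}_x} f\bigr| \le |\mathcal{Y}_x|\,\|f_n - f\|_\infty$. Your remarks that the limit lies in $\mathcal{F}_x(\mathcal{Y}_x)$ and is measurable fill in details the paper leaves implicit.
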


\begin{proof}[Proof of Lemma \ref{lemma:compact_subset}]
 $\mathcal{F}_{\text{den},x}$ is a subset of compact set $\mathcal{F}_x(\mathcal{Y}_x)$ and therefore it is relatively compact. To show its compactness, we show that $\mathcal{F}_{\text{den},x}$ is closed. To show this, let $f_n \in \mathcal{F}_{\text{den},x}$ such that $\|f_n - f\|_\infty \to 0$ for some $f \in \mathcal{F}_x(\mathcal{Y}_x)$ as $n\to\infty$. We show $f \in \mathcal{F}_{\text{den},x}$, hence $\mathcal{F}_{\text{den},x}$ is closed, and thus compact.

To see this is the case, note that
\begin{align*}
 \left|\int_{\mathcal{Y}_x} f(y)dy - 1\right| &= \left|\int_{\mathcal{Y}_x} f(y)dy - \int_{\mathcal{Y}_x} f_n(y)dy\right|
 \leq \int_{\mathcal{Y}_x} |f(y) - f_n(y)|dy\leq \|f - f_n\|_\infty \left(\int_{\mathcal{Y}_x} 1\cdot dy\right).
\end{align*}
Since $\|f_n - f\|_\infty \to 0$ and $\mathcal{Y}_x$ is bounded, the right-hand side can be made arbitrarily small, and we have that $\int_{\mathcal{Y}_x} f(y)dy = 1$.

Also, by uniform convergence we have that $f_n(y) \to f(y)$ for all $y \in \mathcal{Y}_x$. Since $f_n(y) \geq 0$ for all $n$, we also have that $f(y) \geq 0$. Therefore, $f \in \mathcal{F}_{\text{den},x}$ and the proof is complete.
\end{proof}

\begin{proof}[Proof of Theorem \ref{thm:cont_prob_IDset}]
We prove the three claims of the theorem separately.

\textbf{Claim 1:} By Proposition \ref{prop:cont_noassbounds}, the identified set for $\textbf{f}_{Y}$ under assumptions \ref{assn:NonTrivialinstrument}, \ref{assn:contsupport}, and \ref{assn:compactdensities} is $\mathcal{H}_0 \times \mathcal{H}_1$. By Assumption \ref{assn:ind_relax_cont}, $\textbf{f}_{Y}$ lies in $\mathcal{A}_0(\theta) \times \mathcal{A}_1(\theta)$. Therefore, the identified set under assumptions \ref{assn:NonTrivialinstrument}, \ref{assn:contsupport}, \ref{assn:compactdensities}, and \ref{assn:ind_relax_cont} is given by their intersection.

\bigskip

\textbf{Claim 2:} To show this claim, we first note that the constant correspondence which maps $\theta$ to $\mathcal{H}_x$ is continuous for all $\theta \in [0,1]$, which can be directly established from the definition of uhc and lhc. Second, we note that $\mathcal{H}_x$ and $\mathcal{A}_x(\theta)$ are closed sets under $\|\cdot\|_\infty$. The sets $\mathcal{A}_x(\theta)$ and $\mathcal{H}_x$ are compact because they are closed subsets of $\mathcal{F}^2_{\text{den},x}$, which is compact by Lemma \ref{lemma:compact_subset}. Therefore, $\mathcal{A}_x(\theta)$ is compact-valued.  By Theorem 17.25.2 in \cite{AliprantisBorder2006}, $\Pi_x(\theta)$ is uhc.

By the theorem assumption that $\mathcal{F}_{\text{den},x}^2 \cap \mathcal{H}_x \neq \emptyset$, we have that $\Pi_x(1) \neq \emptyset$. By the monotonicity of $\Pi_x(\theta)$ in $\theta$, there exists $\underline{\theta}_x$ such that $\Pi_x(\theta) = \emptyset$ for $\theta < \underline{\theta}_x$ and $\Pi_x(\theta) \neq \emptyset$ for $\theta > \underline{\theta}_x$.  By $\Pi_x(1) \neq \emptyset$, $\underline{\theta}_x \in [0,1]$. Let $\theta_n$ be a nonincreasing sequence in $[0,1]$ converging to $\underline{\theta}_x$. By the sequential definition of uhc (Theorem 17.20 in \cite{AliprantisBorder2006}), the sequence $\textbf{f}_{n} \in \Pi_x(\theta_n)$ has a limit point $\textbf{f}$ in $\Pi_x(\underline{\theta}_x)$. By compactness of $\mathcal{F}_{\text{den},x}^2$, we can extract a subsequence $\textbf{f}_{n_k}$ converging to $\textbf{f}$. Since $\Pi_x$ has closed graphs, we conclude that  $\textbf{f} \in \Pi_x(\underline\theta_x)$, so it is non-empty. This implies $\Pi_x(\theta)$ is non-empty if and only if $\theta \in [\underline{\theta}_x,1]$. The set $\Pi(\theta)$ is non-empty when $\Pi_0(\theta)$ and $\Pi_1(\theta)$ are both non-empty, which occurs when $\theta \in [\underline{\theta}_0,1]\cap[\underline{\theta}_1,1]$ or when $\theta \geq \underline{\theta} \coloneqq \max_{x =0,1}\underline{\theta}_x$.

\bigskip

\textbf{Claim 3:} As shown above, $\Pi_x(\theta)$ are compact-valued uhc correspondences for $x = 0,1$. By the assumption that $\text{int}(\mathcal{H}_x \cap \mathcal{A}_x(\theta)) \neq \emptyset$ for $\theta > \underline{\theta}$, that both $\mathcal{H}_x$ and $\mathcal{A}_x(\theta)$ are lhc correspondences, and that they are both convex-valued, we can use Theorem B in \cite{LechickiSpakowski1985} to show that $\mathcal{H}_x \cap \mathcal{A}_x(\theta)$ is lhc for $\theta \in (\underline{\theta},1]$. By Theorem 17.28 in \cite{AliprantisBorder2006}, this implies their product is a uhc correspondence for $\theta \in [\underline{\theta},1]$ and lhc for $\theta \in (\underline{\theta},1]$. 

We finish this proof by claiming $\Pi(\theta)$ is also lhc at $\theta = \underline{\theta}$. This can be established in the same manner as in the proof of Theorem \ref{thm:prob_IDset}
\end{proof}

\begin{proof}[Proof of Corollary \ref{corr:cont_functionalbounds}]

\textbf{Claim 1:} This follows from $f_{Y(x)}(y) = f_{Y(x)|Z}(y \mid Z=0)(1-p_Z) + f_{Y(x)|Z}(y\mid 1)p_Z$ and the fact that $\Pi(\theta)$ is a Cartesian product.

\bigskip

\textbf{Claim 2:} We have that
\begin{align*}
 \sup_{\textbf{f} \in \Pi(\theta)} \Gamma(\textbf{f}) &= \sup_{\textbf{f}_0 \in \Pi_0(\theta), \textbf{f}_1 \in \Pi_1(\theta)} \left(\int_{\mathcal{Y}_0} \omega_0(y)'\textbf{f}_0(y)dy + \int_{\mathcal{Y}_1} \omega_1(y)'\textbf{f}_1(y)dy\right)\\
 &= \sup_{\textbf{f}_0 \in \Pi_0(\theta)} \int_{\mathcal{Y}_0} \omega_0(y)'\textbf{f}_0(y)dy +  \sup_{\textbf{f}_1 \in \Pi_1(\theta)}\int_{\mathcal{Y}_1} \omega_1(y)'\textbf{f}_1(y)dy = \overline{\Gamma}(\theta).
\end{align*}
A similar argument yields the expression for $\underline{\Gamma}(\theta)$. Therefore, $\Gamma(\textbf{f}_{Y_\cdot|Z}) \in [\underline{\Gamma}(\theta), \overline{\Gamma}(\theta)]$. We now show that this interval is sharp. The endpoints can be attained because they correspond to the maximum/minimum of the continuous function $\Gamma(\cdot)$ over a compact domain $\Pi(\theta)$, and by the extreme value theorem. The interior of this interval can be attained by the convexity of the constraint set $\Pi(\theta)$, which follows from the convexity of $\mathcal{F}_{\text{den},x}$, $\mathcal{A}_x(\theta)$, and $\mathcal{H}_x$.

\bigskip

\textbf{Claim 3:} By Theorem \ref{thm:cont_prob_IDset}.3, the correspondence $\Pi$ is continuous on $[\underline{\theta},1]$. Its values are compact because both $\mathcal{H}_x$ and $\mathcal{A}_x(\theta)$ are compact-valued by assumptions \ref{assn:compactdensities} and derivations in the proof of Theorem \ref{thm:cont_prob_IDset}. $\Pi(\theta)$ is also non-empty for $\theta \in [\underline{\theta},1]$ by construction. By Theorem 17.31 in \cite{AliprantisBorder2006}, the Maximum Theorem for infinite-dimensional spaces, the functions $\overline{\Gamma}(\theta)$ and $\underline{\Gamma}(\theta)$ are continuous.

These functions are monotonic by the sets $\mathcal{A}_x(\theta)$ being monotonic in $\theta$: see Assumption \ref{assn:ind_relax_cont}.2.
\end{proof}

\begin{proof}[Proof of Proposition \ref{prop:approxIDset}]
Recall that $\mathcal{F}_M^{s_Z} = \{W\mathbf{b}^M : W \in \bar{\Delta}_M^{s_Z}\}$, and arranging the $N$ constraints into a matrix with $B^{M,N} = \begin{bmatrix} \mathbf{b}^M(y_1) & \cdots & \mathbf{b}^M(y_{N}) \end{bmatrix}$, we can rewrite the constraint in the definition of $\mathcal{A}^{M,N}(\theta)$ as,
\begin{align*}
\mathcal{A}^{M,N}(\theta) 
&= \left\{ W \mathbf{b}^M : W \in \bar{\Delta}_{M}^{s_Z}, \ A(\theta) W B^{M, N} \le a(\theta)\iota_N^{\top}  \right\} \\
&= \left\{ W \mathbf{b}^M : W \in \bar{\Delta}_{M}^{s_Z}, \ \left( (B^{M,N})^{\top} \otimes A(\theta) \right) \operatorname{vec}(W) \le \iota_N \otimes a(\theta) \right\}
\end{align*}
Turning next to $\mathcal{H}_x^M$, recall that $\mathcal{H}_x^M = \{ \textbf{f} = (f_1,\ldots,f_{s_Z}) \in \mathcal{F}_M^{s_Z} : f_j(y) \geq \pi(x \mid z_j) (B_M f_{Y \mid X, Z})( y \mid x, z_j) \text{ for } j = 1, \ldots, s_Z \text{ and } y \in [0,1]\}$.  We rewrite the inequality constraint $f_{Y(x)|Z}(y \mid z) \ge f_{Y|X,Z}(y \mid x , z)\pi(x \mid z)$ as an equality constraint. To do so, first note that
\begin{align*}
 f_{Y(x) \mid Z}(\cdot \mid z) 
 &= f_{Y, X \mid Z}(\cdot, x \mid z) + f_{Y(x), X \mid Z}(\cdot, 1 - x \mid z) \\
 &= f_{Y \mid X, Z}(\cdot \mid x, z) \pi(x \mid z) + f_{Y(x) \mid X, Z}(\cdot \mid 1 - x, z) \pi(1 - x \mid z).
\end{align*}
The inequality followed from $f_{Y(x) \mid X, Z}(\cdot \mid 1 - x, z) \pi(1 - x \mid z) \geq 0$. Alternatively, we can represent this as 
$$
 \mathbf{f} = \textbf{D}_x \mathbf{f}_{Y}(\cdot ; x) + \textbf{D}_{1-x} \mathbf{q}
$$
for some $\mathbf{q} \in \mathcal{F}_{\text{den}}^{s_Z}$ where $\mathbf{f}_{Y}(\cdot ; x) \coloneqq (f_{Y \mid X, Z}(\cdot \mid x, z_1), \ldots, f_{Y \mid X, Z}(\cdot \mid x, z_{s_Z}))$.

In the approximate constraint set, we replace $f_{Y \mid X, Z}(\cdot \mid x, z)$ by $(B_M f_{Y \mid X, Z})(\cdot \mid x, z)$, and impose that $\mathbf{q} \in \mathcal{F}_M^{s_Z}$. Using standard results on Bernstein polynomials, we have that,
$$
 (B_M f_{Y \mid X, Z})(\cdot \mid x, z) = \sum_{m=0}^M f_{Y \mid X, Z}\left( \frac{m}{M} \mid x, z\right) b_m^M(\cdot).
$$
Hence, gathering the terms $f_{Y \mid X, Z}\left( \frac{m}{M} \mid x, z_j\right)$ into the $s_Z \times (M+1)$ matrix $\Xi^M_x$, we can rewrite the constraint as,
$$
 \mathcal{H}_x^M = \{ \textbf{D}_x \Xi^M_x + \textbf{D}_{1-x} W_{x, 1-x} : W_{x, 1-x} \in \bar{\Delta}_{M}^{s_Z} \}.
$$
\end{proof}
\end{document}